\definecolor{darkgreen}{rgb}{0,0.5,0}
\definecolor{darkblue}{rgb}{0,0,0.8}
\newcommand{\hide}[1]{}
\newcommand{\phil}[1]{\todo[color=red!50]{Phil: #1}}
\newcommand{\LOCAL}{\ensuremath{\mathsf{LOCAL}}\xspace}
\newcommand{\CONGEST}{\ensuremath{\mathsf{CONGEST}}\xspace}
\newcommand{\SPD}{\ensuremath{\mathsf{SPD}}\xspace}
\newcommand{\id}{\ensuremath{\text{id}}}
\newcommand{\bigO}{\smash{\ensuremath{O}}}
\newcommand{\tilO}{\smash{\ensuremath{\widetilde{O}}}}
\newcommand{\tilOm}{\smash{\ensuremath{\widetilde{\Omega}}}}
\newcommand{\tilT}{\smash{\ensuremath{\widetilde{\Theta}}}}
\newcommand{\eps}{\varepsilon}
\newcommand{\calE}{\mathcal{E}}
\newcommand{\calB}{\mathcal{B}}
\renewcommand{\Pr}{\mathbb{P}}
\newcommand{\E}{\mathbb{E}}
\newcommand{\set}[1]{\ensuremath{\left\{#1\right\}}}
\DeclareMathOperator{\polylog}{polylog}
\DeclareMathOperator{\poly}{poly}
\DeclareMathOperator*{\argmin}{arg\,min}
\newcommand{\shortOnly}[1]{\ifthenelse{\boolean{short}}{#1}{}}
\newcommand{\onlyShort}[1]{\ifthenelse{\boolean{short}}{#1}{}}
\newcommand{\longOnly}[1]{\ifthenelse{\boolean{short}}{}{#1}}
\newcommand{\onlyLong}[1]{\ifthenelse{\boolean{short}}{}{#1}}
\begin{document}
\title{Shortest Paths in a Hybrid Network Model}

\author{
John Augustine\\
IIT Madras, India\\
{\small\texttt{augustine@iitm.ac.in}}
\and
Kristian Hinnenthal\\
Paderborn University, Germany\\
{\small\texttt{krijan@mail.upb.de}}
\and
Fabian Kuhn\\
U.\ of Freiburg, Germany\\
{\small\texttt{kuhn@cs.uni-freiburg.de}}
\and
Christian Scheideler\\
Paderborn University, Germany\\
{\small\texttt{scheideler@upb.de}}
\and
Philipp Schneider\\
U.\ of Freiburg, Germany\\
{\small\texttt{philipp.schneider@cs.uni-freiburg.de}}
}

\date{}
\maketitle

\vspace{1cm}







\newtheorem{theorem}{Theorem}[section]
\newtheorem{proposition}[theorem]{Proposition}
\newtheorem{lemma}[theorem]{Lemma}
\newtheorem{corollary}[theorem]{Corollary}
\newtheorem{observation}[theorem]{Observation}

\newtheorem{remark}[theorem]{Remark}
\newtheorem{fact}[theorem]{Fact}
\newtheorem{definition}[theorem]{Definition}

\vspace{-15mm}

\begin{abstract}\normalsize
    We introduce a communication model for \emph{hybrid networks}, where nodes have access to two different communication modes: a \emph{local} mode where (like in traditional networks) communication is only possible between specific pairs of nodes, and a \emph{global} mode where (like in overlay networks) communication between any pair of nodes is possible. 
    This can be motivated, for instance, by wireless networks in which we combine direct device-to-device communication (e.g., using WiFi) with communication via a shared infrastructure (like base stations, the cellular network, or satellites).

    Typically, communication over short-range connections is cheaper and can be done at a much higher rate. Hence, we are focusing here on the $\mathsf{LOCAL}$ model (in which the nodes can exchange an unbounded amount of information in each round) for the local connections while for the global communication we assume the so-called \emph{node-capacitated clique model}, where in each round every node can exchange only $O(\log n)$-bit messages with just $O(\log n)$ other nodes.
    However, our model for hybrid networks is very general so that it can also capture many other scenarios, like the congested clique model.
    
    In order to explore the power of combining local and global communication, we study the complexity of computing shortest paths in the graph given by the local connections. 
    We show that our model allows the development of algorithms that are significantly faster than what can be done by using either local or global communication only. 

    We specifically show the following results. 
    For the all-pairs shortest paths problem (APSP), we show that an exact solution can be computed in time $\tilde O\big(n^{2/3}\big)$\footnote{Note that the $\tilde O(\cdot)$-notation hides factors that are polylogarithmic in $n$.} and that approximate solutions can be computed in time $\tilde \Theta\big(\!\sqrt{n}\big)$. 
    For the single-source shortest paths problem (SSSP), we show that an exact solution can be computed in time \smash{$\tilde O\big(\!\sqrt{\mathsf{SPD}}\big)$}, where $\mathsf{SPD}$ denotes the shortest path diameter. 
    We further show that a \smash{$\big(1\!+\!o(1)\big)$}-approximate solution can be computed in time $\tilde O\big(n^{1/3}\big)$.
    Additionally, we show that for every constant $\varepsilon>0$, it is possible to compute an $O(1)$-approximate solution in time $\tilde O(n^\varepsilon)$.


\end{abstract}

\pagenumbering{gobble}

\newpage

\pagenumbering{arabic}

 \section{Introduction}
\label{sec:intro}

Many existing communication networks exploit a combination of multiple communication modes to maximize cost-efficiency and throughput.
As a prominent example, hybrid datacenter networks combine high-speed optical or wireless circuit switching technologies with traditional electronic packet switches to offer higher throughput at lower cost  \cite{FPRBSFPV10,HKPBW11}. 
In the Internet, dynamic multipoint VPNs can be set up to connect different branches of an organization by combining leased lines (offering them quality-of-service guarantees for their mission-critical traffic) with standard, best-effort VPN connections (for their lower-priority traffic) \cite{RS11}. 
Alternatively, an organization may also set up a so-called hybrid WAN by combining their own communication infrastructure with connections via the Internet \cite{TBKC18}. 
Finally, the emerging 5G standard promises to allow smartphones to not only communicate via the cellular infrastructure, but also directly with other smartphones via their wireless interface. 
This allows them to set up a hybrid network consisting of connections via base stations as well as device-to-device (D2D) connections \cite{KS18}, which is particularly interesting for vehicular networks.

Despite the advantages that have been experienced with hybrid communication networks in practice, rigorous theoretical research on hybrid networks is still in its infancy. In this paper, we propose a simple model for hybrid networks and explore its power for a fundamental problem in graph theory as well as communication networks: computing shortest paths.
More specifically, we present exact and approximate algorithms for the all-pair shortest paths problem and single-source shortest paths problem.

In our hybrid communication model, we assume that each node has two different communication modes: a {\em local} communication mode that allows it to send messages along each of its edges in the given (private, leased, trusted, or ad-hoc) communication network, and a {\em global} communication mode, which allows a node to send messages to any node in the network, but to only exchange a limited number of messages in each round using this mode.
We demonstrate that by making use of both local and global communication, we can achieve significant runtime improvements compared to using local or global edges alone.
In particular, we investigate ways to improve the running time of computing shortest paths in the local network with the help of global communication, which highlights the importance of exploiting the hybrid communication capabilities of modern networks. 
Before we present our detailed results in 
\Cref{sec:contributions}, we formally introduce our model. We conclude this section with a discussion of related work in \Cref{sec:related}.

\subsection{Hybrid Communication Model}
\label{sec:model}

We assume that we are given a fixed set $V$ of $n$ nodes that are connected via two kinds of edges: {\em local} edges and {\em global} edges\footnote{Throughout the paper, we assume that $n\geq n_0$ for a sufficiently large constant $n_0$.}.
The local edges form a fixed, undirected, weighted graph $G = (V,E,w)$, where the edge weights are given by $w:E \to \{1,\ldots,W\} \subseteq \mathbb{N}$ for some $W$ that is at most polynomial in $n$. Thus, every weight and
length of any shortest path can be represented using $O(\log n)$ bits. The graph $G$ is said to be {\em unweighted} if $w:E \to \{1\}$. The global edges form a clique, i.e., every node can potentially send a message to any other node with the help of a global edge. We assume that each node $u$ has a unique identifier $\id(u)$. For simplicity, we assume that the node identifiers are $1,\dots,n$.\footnote{Assuming that the IDs are from $1$ to $n$ is a strong assumption. However, all our results can be obtained in the same way if we assume that nodes have arbitrary $O(\log n)$-bit IDs and that there is a sampling service that allows the nodes to contact a node that is chosen approximately uniformly at random via a global edge.} 

We use the standard synchronous message passing model, where time is divided into synchronous {\em rounds}, and in each round every node can send messages of size $O(\log n)$ to other nodes using its local and global edges.
In the most general form of our model, the number of messages that can be sent over either type of edge is restricted by parameters $\lambda$ and $\gamma$:
the \emph{local capacity} $\lambda$ is the maximum number of messages that can be sent over each local edge in a round, and the \emph{global capacity} $\gamma$ is the maximum number of messages any node can send and receive via global edges in a round.
When in some round more than $\lambda$ (or $\gamma$) messages are sent over an edge (or to a node, respectively), we assume that an adversary delivers an arbitrary subset of these messages and drops the other messages.
All of our algorithms ensure that with high probability\footnote{An event holds with high probability (w.h.p.) if it holds with probability at least $1\!-\!\frac{1}{n^c}$ for an arbitrary but fixed constant $c>0$.}, a node never sends or receives too many messages.

Note that whereas $\lambda$ imposes a bound on the number of messages that can be sent over each \emph{edge}, $\gamma$ effectively restricts the amount of global communication at each \emph{node}.
This modeling choice is motivated by the idea that local communication rather relates to \emph{physical} networks, where an edge corresponds to a physical connection (e.g., cable- or ad-hoc networks), whereas global communication primarily captures aspects of \emph{logical} networks that are formed as an overlay on top of some shared physical infrastructure.
For appropriate choices of $\lambda$ and $\gamma$, our model captures various established network models:
\LOCAL ($\lambda = \infty, \gamma = 0$), \CONGEST ($\lambda = O(1), \gamma = 0$), \textit{Congested Clique}\footnote{The congested clique model refers to the unicast variant, i.e.\ the \CONGEST model on a clique topology.} ($\lambda = O(1), \gamma = 0$ and $G$ is a clique), and the recently introduced \emph{Node-Capacitated Clique} model ($\lambda = 0, \gamma = O(\log n)$) \cite{AGG+18}.

In order to demonstrate the power of combining local and global communication, we will focus the variant of the model in which local edges are fully uncapacitated ($\lambda = \infty$), and global communication is heavily restricted ($\gamma = O(\log n)$).
Thus, our model is a combination of the most permissive \LOCAL model for local edges and the very restrictive node-capacitated clique model for the global edges, which makes it particularly clean and well-suited to investigate the power of hybrid networks from a theoretical perspective.
Moreover, we believe that the practical relevance of this model is justified by the fact that direct connections between devices are typically highly efficient and offer a large bandwidth at comparatively low cost, whereas communication over a shared global communication network such as the Internet, satellites, or the cellular network, is costly and typically offers only a comparatively small data rate.

We remark that any choice of $\gamma$ in the range from $\Theta(1)$ to $\Theta(\log^c n)$ would not change our upper bounds since it would only affect them by polylogarithmic factors, which we mostly neglect by using the $\tilO(\cdot)$-notation. The maximally permissive choice of $\lambda = \infty$ is mainly for proving lower bounds (which we do for APSP) and our algorithms make no overly excessive use of it. In fact, we show that our algorithms work for some $\lambda$ between $\Theta(1)$ and $\Theta(n^2)$, depending on the algorithm. (We discuss the congestion of local edges at the end of each section.)

\subsection{Contributions}
\label{sec:contributions}

The overarching goal of this paper is to achieve significantly faster solutions for shortest path problems than what would be possible if only either the local or the global network could be used. Note that by just using the local network (\LOCAL model), all graph problems can trivially be solved in time at most $D$, where $D$ is the diameter of the graph $G$. Shortest path problems also clearly have an $\Omega(D)$ lower bound using only the local network (note that $D = \Omega(n)$ in the worst case). Our objective is to understand to what extent a limited amount of global communication (given by the global network) helps in solving shortest path problems faster. To that end, we will briefly discuss our contributions. A summary is given in \Cref{tab:ourContribution}.

\begin{table}[h]
	\centering
	\begin{threeparttable}
	\renewcommand{\arraystretch}{1.3}
	\begin{tabulary}{\textwidth}{@{\hskip 0.1mm}L@{\hskip 0.4cm}L@{\hskip 0.3cm}L@{\hskip 0.2cm}L@{\hskip 1cm}L@{\hskip 0.2cm}L@{\hskip 0.4cm}L@{\hskip 0.2cm}L@{}}
		\toprule
           \textbf{APSP} &&&& \textbf{SSSP} &&& \\
           Approx.\ & Weights & Complexity & Local Cap.$^{\!\dagger}$ & Approx.\ & Weights & Complexity & Local Cap.$^{\!\dagger}$ \\
       \midrule
           Exact & weighted & $\tilO(n^{2/3})$ & $\Theta(n)$ & Exact & weighted & $\tilO\big(\!\sqrt{\SPD}\big)$ & $\tilT\big(n^2\!/\!\sqrt{\SPD}\big)$\\
           $(1\!+\!\varepsilon)$ & unw. & $\tilO\big(\!\sqrt{n/\varepsilon}\big)$ & $\Theta(n)$ & $\big(1\!+\!\varepsilon\big)$ & weighted & $\tilO(n^{1/3}\!/\varepsilon^6)$ & $\tilT(n^{2/3}\varepsilon^6)$\\
           $3$ & weighted & $\tilO(\!\sqrt{n})$ & $\Theta(n)$ & $(1/\varepsilon)^{O(1/\varepsilon)}$ & weighted & $\tilO(n^{\varepsilon})$ & $\Theta(1)$\\
           $\Omega(\!\sqrt{n})$ & unw. & $\tilOm(\!\sqrt{n})$ & $\infty$ & $s(n)$$^\ddagger$ & weighted & $s(n)$$^\ddagger$ & $\Theta(1)$ \\
		\bottomrule
	\end{tabulary}
    \begin{tablenotes}
		\footnotesize
		\item $\dagger$ local capacity $\lambda$ for which the corresponding bound holds
		\item $\ddagger$ where $s(n) = 2^{O(\!\sqrt{\log n \log\log n})} = o(n^c)$ for arbitrary $c>0$ is a sub-polynomial function
	\end{tablenotes}
	\end{threeparttable}
	\caption{Overview of the contributions of this paper.}
	\label{tab:ourContribution}
\end{table}

\vspace*{-3mm}

\paragraph{Token Dissemination.}

First we consider the \textit{token dissemination} problem. It represents the task of broadcasting a set of tokens of size $O(\log n)$ bits, each of which is initially only known by one node. We develop a protocol, tailored to the hybrid model, which we use as a subroutine throughout the paper. The main idea behind the algorithm is to randomly disseminate the tokens via global edges. This is sufficient for each node to afterwards collect all the tokens in a relatively small neighborhood using local edges. The runtime compares favourably to the respective lower bounds of the problem in case only the global or local network would be available. Specifically, we show an upper bound of \smash{$\tilO\big(\!\sqrt{k} + \ell \big)$} where $k$ is the number of distinct tokens and $\ell$ is the initial maximum number of tokens per node (\Cref{thm:tokenDissemination}). Note that the lower bound is $\Omega\big((k\!+\!\ell)/\log n\big)$ if only the global network can be used and $\Omega(n)$ if only the local network is available.

\vspace*{-3mm}

\paragraph{All-Pairs Shortest Paths (APSP).} Our primitives to solve APSP are based on combining the token dissemination scheme with the classic approach~\cite{ullman91} of building skeleton graphs. The basic idea is to sample a set of nodes with some probability $\frac{1}{x}$ and then computing virtual edges among pairs of sampled nodes connected by a path of at most $h \in \tilO(x)$ hops. We then employ our token dissemination protocol to broadcast the distance information of the skeleton graph and also from all other nodes to the skeleton. This global knowledge can then be used to compute the distances of any pair of nodes with sufficient hop distance.

While the approach of using skeleton graphs is not new, we demonstrate how the amount of work on the local and global network can be balanced (with parameter $x$), leading to interesting exact and approximative results for APSP. Specifically, in \Cref{thm:APSP}, we show that APSP can be solved exactly with running time \smash{$\tilO(n^{2/3})$}. Furthermore, we obtain $3$-approximate distances in time \smash{$\tilO\big(\!\sqrt{n}\big)$} for general graphs (\Cref{thm:approxgeneralAPSP}) and $(1\!+\!\eps)$-approximate distances in time \smash{$\tilO\big(\!\sqrt{n/\eps}\big)$} for unweighted graphs (\Cref{thm:approxunweightedAPSP}). Note that this is significantly better than the $\tilde{\Omega}(n)$ bound if only either the local or the global network could be used. (These bounds immediately follow from the facts that the diameter of the local network might be $\Omega(n)$ and that every node can only receive $O(\log n)$ messages over global edges.)

Finally, we complement our upper bounds for APSP in the hybrid model with a lower bound by proving that even for computing an $\alpha$-approximate solution for some $\alpha = \tilO\big(\!\sqrt{n}\big)$, at least $\tilOm\big(\!\sqrt{n}\big)$ rounds are required (\Cref{thm:lowerBoundAPSPapprox}), showing that our approximate APSP algorithms are tight up to logarithmic factors.

\vspace*{-3mm}

\paragraph{Single-Source Shortest Paths (SSSP).} For the SSSP problem, we present three algorithms, based on different techniques. In \Cref{thm:exactSSSP}, we prove that the SSSP problem can be solved exactly in time \smash{$\tilO\big(\!\sqrt{\SPD}\big)$}, where \SPD denotes the shortest path diameter. We introduce a new technique, that first uses the local network so each node can learn the graph up to a distance of \smash{$2\sqrt{\SPD}$} hops. Then every node knows all trees up to depth \smash{$\sqrt{\SPD}$} rooted at any neighbor within \smash{$\sqrt{\SPD}$} hops. This knowledge can be used to distribute distance information to the source $s$ in an iterative fashion over the global network. In iteration $i$, all nodes that have a shortest path to $s$ with $O(i^2)$ hops learn their distance to $s$. An iteration takes only $\tilO(1)$ rounds, leveraging a divide and conquer approach on the aforementioned trees and the aggregation protocol of \cite{AGG+18}. Note that on unweighted graphs using only either local or global edges, the best known algorithms require $\Omega(\SPD)$ rounds (which is tight for local edges as $D = \SPD$ on unweighted graphs).

%

We then shift our attention to approximate solutions of the SSSP problem. We give a simple algorithm that simulates the broadcast congested clique model\footnote{In the broadcast congested clique model, the nodes of an $n$-node graph are connected to a clique and in each round, each node can broadcast an $O(\log n)$-bit message to all other nodes.} on a set of sampled skeleton nodes (including the source) using token dissemination. We employ the SSSP algorithm by Becker et al.~\cite{BKKL17} for said model as a black box to solve SSSP on the skeleton, which allows us to compute an $\big(1\!+\!o(1)\big)$ approximation of SSSP in time \smash{$\tilO(n^{1/3})$} (\Cref{thm:approximateSSSP}).

The third, technically most challenging SSSP algorithm is based on recursively building a hierarchy of $O(\log_\alpha n)$ (for some $\alpha>1$) \textit{skeleton spanners} (i.e., spanners of skeleton graphs). Roughly speaking, given some skeleton spanner $H$, we obtain the next coarser skeleton spanner $H'$ by sampling each node of $H$ with probability $1/\alpha$ and computing a spanner with a good stretch on the sampled nodes. As a technical result, we show that given a low arboricity graph $H$, we can efficiently compute a low arboricity spanner $H'$ of $H$ using only global edges. We show in \Cref{theorem:polySSSPSimple}, that by choosing $\alpha=n^{\eps}$ for some $\eps>0$, we can compute $(1/\eps)^{O(1/\eps)}$-approximate paths to the source node in time $\tilde{O}(n^{\eps})$. For any constant $\eps>0$, we get a constant SSSP approximation (albeit with a potentially large constant). Choosing $\varepsilon$ to balance time and approximation factor, the algorithm computes a (subpolynomial) $2^{O(\!\sqrt{\log n \log\log n})}$-approximate SSSP solution in the same time.

\vspace*{-3mm}

\paragraph{Congestion of Local Edges.} Even though our algorithms are optimized for run-time, we analyze the local capacity $\lambda$ for which the given upper bounds hold at the end of each section (\Cref{tab:ourContribution} gives a summary). We briefly discuss the bottlenecks of our algorithms. In its final step, our token dissemination protocol (\Cref{sec:tokenDissemination}) ensures that all tokens appear in a small neighborhood of any node, where we collect them via local edges. Each edge has to transmit at most $k$ tokens in total, which we can (figuratively speaking) spread evenly over the runtime $O(\!\sqrt{k})$ of the algorithm (we use a slight adaption of a standard scheduling technique, c.f.\ \cite{Gha15} Theorem 1.1). Thus the protocol works for local capacity $\lambda = \Theta(\!\sqrt{k})$. 

For our APSP algorithms (\Cref{sec:apsp}) nodes need to learn distances of shortest paths with at most $h$ hops, as we can use the skeleton only for shortest paths with more than $h$ hops. This can be done with $h$ rounds of the distributed Bellman-Ford, for which $O(n)$ distance labels are exchanged over each edge in each round, thus requiring $\lambda = \Theta(n)$. We also use skeleton graphs for our approximate SSSP algorithm (\Cref{sec:sssp_approx}). However, here nodes only require the lengths of shortest paths (with at most $h$ hops) to skeleton nodes $M$ and the source $s$, thus Bellman-Ford exchanges at most $|M| = \tilT(n^{2/3}\varepsilon^6)$ distance labels.

The most demanding in terms of local capacity is our exact SSSP algorithm (\Cref{sec:sssp_exact}). Since each node requires knowledge about \textit{all} trees in a certain neighborhood, each node has to learn the whole neighborhood. This neighborhood can be of size \smash{$O(n^2)$} thus we require \smash{$\lambda = \tilT\big(n^2\!/\!\sqrt{\SPD}\big)$} to transmit it. Contrary to that, our approximate SSSP result based on recursion (\Cref{sec:subpolynomial}) is the most undemanding in terms of local capacity. We use the local network only to compute the initial skeleton spanner, employing \cite{BS07} as a black box. Since this algorithm is conceptualized for the \CONGEST model, it requires only $\lambda = \Theta(1)$. The recursive computation of subsequent skeleton spanners relies exclusively on the global network.

\subsection{Related Work}
\label{sec:related}

In the systems area, research on hybrid networks has mostly focused on wireless mesh networks (see, e.g., \cite{AAB17} for a recent survey), with a plethora of competing schemes for routing packets in such a network, though many of them do not exploit the hybrid communication capabilities of such networks. Hybrid communication has also been studied in the context of data centers demonstrating that it can significantly improve their cost-effectiveness and performance (e.g., \cite{FPRBSFPV10,HKPBW11}).

In the theory area, only very few results are known so far on hybrid networks. Jung et al.~\cite{JKSS18} studied the problem of finding near-shortest routing paths in ad-hoc networks satisfying certain properties by combining communication via ad-hoc connections with communication via the cellular infrastructure. Furthermore, Foerster et al.~\cite{FGS18} investigated the computational complexity of exploiting a hybrid infrastructure in data centers. Beyond these two publications,  various results have recently been developed for network models that allow global communication and are thus remotely related to our model.

Recently, some alternative models that are closer to our model have been considered. For example, the work of Gmyr et al.~\cite{GHSS17} implies that making use of global edges with the same constraints as our model significantly improves the ability to monitor properties of the network formed by the local edges. Furthermore, Augustine et al.~\cite{AGG+18} propose the so-called node-capacitated clique model, which is identical to our model for the global edges (but which has no local edges). They present distributed algorithms for various fundamental graph problems, including problems such as computing an MST, a BFS tree, a maximal independent set, a maximal matching, or a vertex coloring of the given input graph. Their BFS tree construction can be used to solve the SSSP problem for unweighted graphs of bounded arboricity in $\tilO(D)$ time. 
As we will see in \Cref{sec:subpolynomial}, the same can also be achieved similarly for the weighted SSSP problem.

In the congested clique model, every node has an edge to every other node in the system and in each round, each node can exchange a distinct $O(\log n)$-bit message with each other node. Over the last few years, research on congested clique algorithms has been very active (see, e.g., \cite{DKO14,JN18,lenzen13,LSPP05} for a small subset of the work). 
In the context of shortest path problems, Lenzen et al.~\cite{BKKL17} presented a $(1+\varepsilon)$-approximation algorithm for the SSSP problem that runs in time $\polylog n$ and Nanongkai~\cite{Nan14} presented a $(2+o(1))$-approximation algorithm for the APSP problem with runtime $\tilO(\sqrt{n})$. The algorithms of \cite{BKKL17} and \cite{Nan14} even work for the broadcast variant of the congested clique where in each round, every node has to send the same $O(\log n)$-bit message to all other nodes.
Shortest path problems can also be approached by performing matrix multiplications efficiently~\cite{CKK15}:
Then, APSP for example, can be solved exactly in time $\tilde{O}(n^{1/3})$, and a $(1+o(1))$-approximation can be found in time $O(n^{0.158})$. A recent result shows that a $(2+o(1))$-approximation can even be computed in time $\polylog n$~\cite{CensorHillel19}.


Note that our hybrid network model contains the congested clique model as a special case, though for the specific case considered in our paper the results in the congested clique model are not of any help since, in general, it is very costly to emulate an algorithm for the congested clique model in our case.


Further, shortest path problems have been intensively studied in standard distributed communication models, most importantly in the \CONGEST model. Many of our algorithms for the hybrid network model employ ideas that have been developed in this context.
For the SSSP problem, Das Sarma et al.~\cite{DHKK12} showed that any distributed approximation algorithm has a runtime of $\tilOm(\sqrt{n}+D)$ for any constant approximation ratio. 
Following the publication of this lower bound, there has been a series of papers that attempt to obtain algorithms that get close to the lower bound (see, e.g., \cite{LP13,Nan14,HKN16}), culminating in the work of Becker et al.~\cite{BKKL17}, which gives an algorithm that computes a $(1+\varepsilon)$-approximate SSSP solution in time $\tilO(\sqrt{n}+D)$.
For the exact SSSP problem, no better upper bound than $O(n)$ was known until two years ago when Elkin~\cite{Elkin17} presented an algorithm with a runtime of $\tilO(n^{2/3} D^{1/3} + n^{5/6})$. This was further improved by Ghaffari and Li~\cite{GL18} and by Forster and Nanongkai~\cite{FN18}, who presented two protocols for polynomially bounded edge weights, one with runtime $\tilO(\sqrt{n \cdot D})$ and one with runtime $\tilO(\sqrt{n} D^{1/4} + n^{3/5} + D)$.

For the APSP problem, a deterministic $(1+o(1))$-approximation algorithm
with runtime $\tilO(n)$ is known, as well as a nearly matching lower bound of $\tilOm(n)$ that even holds for randomized $O(\poly(n))$-approximation algorithms, even when $D = O(1)$ \cite{LP13,LP15,Nan14}. The complexity of the exact unweighted version was shown to be $\tilT(n)$ \cite{LeP13,HW12,FHW12,PRT12,ACK16}. 
For the exact weighted version, the first improvement over the naive $O(m)$-time algorithm was due to Huang et al.~\cite{HNS17}, who presented a randomized $\tilO(n^{5/4})$-time algorithm that bears similarity to our approaches based on skeleton nodes (which they call ``centers''). Subsequently, Bernstein and Nanongkai \cite{BN18} came up with a randomized $\tilO(n)$-time algorithm, so also this case is now settled up to $\polylog(n)$ factors. The best deterministic algorithm for the weighted APSP problem is due to Agarwal et al.~\cite{ARKP18} and has a runtime of $\tilde{O}(n^{3/2})$ using a technique based on ``blocker sets''quite similar to ours based on skeleton nodes.

 \section{Overview}

In this section we would like to provide the reader with an intuitive explanation of our core concepts without the in-depth technical details of the subsequent sections. Additionally, we will state our main theorems and sketch some proof ideas. Let us start by introducing some basic definitions.

\subsection{Preliminaries and Problem Definitions}

The \emph{distance} between any two nodes $u,v \in V$ of a graph $G=(V,E)$ is defined as 
\vspace*{-1mm}
\[
d_G(u,v) := \!\!\min_{\text{$u$-$v$-path } P}\, w(P),\vspace*{-1mm}
\]
where $w(P) = \sum_{e \in P}w(e)$ denotes the length of a path $P \subseteq E$.
A path between two nodes with smallest length is called a \emph{shortest path}.
The \emph{hop-distance} between two nodes $u$ and $v$ is defined as
\vspace*{-1mm}
\[
hop_G(u,v) := \min_{\text{$u$-$v$-path } P} |P|,\vspace*{-1mm}
\]
where $|P|$ denotes the number of edges (or \emph{hops}) of a path $P$.
Let the \emph{$h$-limited distance} from $u$ to $v$:
\vspace*{-1mm}
$$d_{h,G}(u,v) := \!\!\min_{\substack{\text{$u$-$v$-path } P\\ |P| \leq h }}\, w(P).\vspace*{-1mm}$$
If there is no $u$-$v$ path $P$ with $|P|\leq h$, then $d_{h,G}(u,v) := \infty$. The \emph{diameter} $D(G)$ of $G$ is defined as the length of any shortest path in $G$, and the \emph{shortest-path diameter} $\SPD(G)$ is the minimum number such that $d_{\SPD(G),G}(u,v)=d_G(u,v)$ for all $u,v\in V$. 
Whenever the graph $G$ is clear from the context, we drop the subscript $G$ in the above notations. 
In this paper, we consider the following shortest-paths problems in $G$.

\vspace*{-4mm}

\paragraph{All-Pairs Shortest Paths Problem (APSP).} Every node $u \in V$ has to learn $d(u,v)$ for all $v \in V$. In the $\alpha$-approximate APSP problem for some $\alpha>1$, every node $u\in V$ has to learn values $d'(u,v)$ such that $d(u,v)\leq d'(u,v)\leq \alpha\cdot d(u,v)$ for all $v\in V$.

\vspace*{-4mm}

\paragraph{Single-Source Shortest Paths Problem (SSSP).} There is a source $s \in V$ and every node $u \in V$ has to learn $d(u,s)$. In the $\alpha$-approximate SSSP problem for some $\alpha>1$, every node $u\in V$ has to learn $d'(u,s)$ such that $d(u,s)\leq d'(u,s)\leq \alpha\cdot d(u,s)$.

\smallskip

In order to solve shortest path problems efficiently, we also show how to solve the \textit{$(k,\ell)$-token dissemination problem ($(k,\ell)$-TD)}. Here we are given a set of $k$ tokens each of size $O(\log n)$-bits that need to be learned by all nodes $v\in V$. Initially, each token is known by one node and no node initially possesses more than $\ell$ tokens.
As a byproduct of our exact SSSP algorithm, we solve the \textit{$h$-limited $k$-source shortest paths problem ($(h,k)$-SSP)}, in which there is a set $S \subseteq V$ of $k$ sources and a parameter $h \ge 1$ and every node $u \in V$ has to learn $d_h(u,s)$ for every $s \in S$. 

\subsection{Token Dissemination}

The first tool that we are introducing solves the token dissemination problem $(k, \ell)$-TD. 
The algorithm consists of four steps. First, we balance the number of tokens per node. Each node redistributes its tokens randomly via global edges for $\bigO(\ell)$ rounds such that afterwards, w.h.p., each node has at most $\tilO(k/n)$ tokens to take care of. This first step eliminates the dependency on $\ell$ in the subsequent steps.

Second, if $k \!\ll\! n$, we create $\tilO(n/k)$ copies of each token to make sure that each node has some token, which allows to speed up the subsequent third step. We increase the number of copies of each token in the network in an exponential fashion in $\log(n/k)$ phases. In each phase, every node sends two copies of the tokens it received in the previous phase to random nodes via global edges. Note that this works because the total number of copies remains in $\tilO(n)$ and therefore the contention on the global network is not too high.

Third, each node sends the tokens it knows so far via global edges to a random subset of $V$, so that afterwards each node possesses a given token with probability at least $1/x$. This takes only $\bigO(k/x)$ rounds, relying on the fact that for $k \ll n$ we have already done part of the work in the previous step. Afterwards, any subset of $V$ of size $\tilOm(x)$ contains all tokens, w.h.p.

Fourth, the local edges are used to learn the tokens of $\tilOm(x)$ nodes in the neighborhood (w.r.t.\ local edges) of any given node, which takes $\tilO(x)$ rounds. The parameter $x$ signifies the trade-off between the running time of the third and fourth step and is optimized accordingly (\smash{$x \!\in\! \tilO\big(\!\sqrt{k}\big)$}). In  \Cref{sec:tokenDissemination} we give the details of the algorithm (\Cref{alg:tokenDissemination}) and we provide a full proof of \Cref{thm:tokenDissemination}.


\begin{theorem}
	\label{thm:tokenDissemination}
	There is an algorithm that solves $(k, \ell)$-TD on connected graphs in $\tilO\big(\!\sqrt{k}  \!+\! \ell\big)$ rounds, w.h.p.
\end{theorem}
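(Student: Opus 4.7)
The plan is to follow a four-stage strategy that combines random dispersion over the global network with a short local sweep, choosing a balancing parameter $x$ that I will set to $\tilO(\sqrt{k})$ so that the two dominant phases each run in $\tilO(\sqrt{k})$ rounds while the initial clean-up contributes the additive $\tilO(\ell)$ term.

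In the first stage I flatten the initial token distribution: every node forwards each of its tokens to a uniformly random destination over a global edge. Since $\gamma = O(\log n)$, emptying a node that holds up to $\ell$ tokens takes $\tilO(\ell)$ rounds, and a Chernoff bound ensures each receiver is hit by only $\tilO(\max\{k/n,1\})$ tokens w.h.p., so the global capacity is respected throughout. Then, provided $k<n$, I inflate the number of copies of every token over $\log(n/k)$ doubling phases: in phase $i$ every current holder of a token forwards a copy of it to two freshly chosen random nodes, so the population of each token doubles. After all phases every token has $\tilO(n/k)$ copies spread across essentially independent random nodes, and since the total number of tokens in transit during any phase stays in $\tilO(n)$, each phase costs only $\tilO(1)$ rounds by the same Chernoff/union-bound argument.

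In the third stage each node (now holding $\tilO(1)$ tokens) sends each of its tokens to $k/x$ uniformly random destinations, which amounts to $\tilO(\sqrt{k})$ messages per node and hence $\tilO(\sqrt{k})$ rounds under $\gamma=O(\log n)$. The key property I would establish here is that, for any fixed node $v$ and token $t$, the $\tilO(n/x)$ random destinations chosen across all copies of $t$ hit $v$ with probability at least $1/x$, so that any fixed subset $S \subseteq V$ of size $\tilOm(x)$ contains all $k$ tokens w.h.p. In the fourth stage every node simply floods all currently known tokens along its local edges for $r \coloneqq \tilO(\sqrt{k})$ hops; since $G$ is connected its $r$-hop local neighborhood contains at least $r+1 = \tilOm(\sqrt{k})$ nodes, and applying the previous property with $S$ equal to this neighborhood yields that every node collects every token w.h.p.

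The main obstacle I expect is the probabilistic accounting in the duplication and dissemination phases: I need to argue that the random dispersions never exceed $\gamma = O(\log n)$ on any single node in any single round, and that the neighborhood-covering claim holds simultaneously for all $n$ starting nodes and all $k$ tokens. Both should succumb to Chernoff bounds together with a union bound over rounds, nodes, and tokens, after inflating the constants in $x$ and $r$ by a $\Theta(\log n)$ factor so that each per-event success probability becomes $1 - 1/\poly(n)$. A secondary subtlety is independence: in the last step I need $S$ to be determined independently of the randomness of the dissemination phase, which holds because $S$ depends only on the fixed input graph $G$ and not on the random choices of any node.
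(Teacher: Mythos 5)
Your four-stage plan---global balancing to remove the $\ell$-dependence, exponential copy-doubling when $k<n$, random seeding so that every fixed $\tilOm(x)$-node set contains all tokens w.h.p., and an $\tilO(x)$-hop local flood exploiting connectivity of $G$, all optimized at $x=\Theta\big(\!\sqrt{k}\big)$---is exactly the paper's algorithm and proof (Algorithms~\ref{alg:loadbalancing}--\ref{alg:localDissemination} together with Lemmas~\ref{lem:receiveBound}--\ref{lem:localDissemination}), including the Chernoff/union-bound accounting of the global capacity and the observation that the target neighborhood is determined by $G$ alone and hence independent of the seeding randomness. One bookkeeping point: in the seeding stage for $k\ge n$ each node holds $\tilO(k/n)$ (not $\tilO(1)$) tokens and should send each to $\Theta(n/x)$ (not $k/x$) random destinations, which still yields $\tilO(k/x)=\tilO\big(\!\sqrt{k}\big)$ messages per node and matches the aggregate of $\tilO(n/x)$ destinations per token that your stated covering property correctly requires.
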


\subsection{Upper Bounds for Exact APSP}

The first step to solve APSP is to construct an overlay graph $S = (M,E_S)$ on $G$ that we call a \textit{skeleton} \cite{ullman91} and whose nodes $M \subseteq V$ are obtained by marking nodes of $V$ uniformly at random (with probability $1/x$, for some optimization parameter $x$). Two nodes in $M$ have an edge if their hop distance is at most $h$. The weight of such an edge is the $h$-limited distance between its endpoints.
To compute $E_S$, we explore a (small) $h$-hop-neighborhood around every node in the local network. Since we choose $h \in \tilT(x)$, this takes $\tilO(x)$ rounds. Afterwards, each node knows the $h$-limited distance between it and all other nodes.

After the local exploration, we do not have to worry about pairs of nodes for which a shortest path of at most $h$ hops exists, since for those pairs the $h$-limited distances equal the true distance. For pairs $u,v \in V$, for which all shortest $u$-$v$-paths have \textit{more} than $h$ hops, we show that on one such a path there is a skeleton node within every $h$ hops w.h.p.\ (\Cref{lem:longPathsM}). This is particularly helpful since these pairs can now compute their distance if they have knowledge of the distance information of the (sparse) set of skeleton nodes.

The expected size of $M$ is $|M| \in {\tilO(n/x)}$, and we show that the weights of all edges $E_S$ can be broadcast to the whole network in ${\tilO(n/x)}$ with the methods of \Cref{sec:tokenDissemination} (token dissemination). Equipped with that information about $E_S$, each node can locally compute the distance Matrix $D^S$ of the skeleton $S$, and we show that distances among nodes in $S$ equal those in $G$. 
Finally, we disseminate the $h$-limited distances between pairs in $M \times V \!\setminus\! M$, which can be done in $\tilO\big(n/\!\sqrt{x}\big)$. With this information, all nodes know the distance matrix $D'$ containing said $h$-limited distances between pairs $M \times V \!\setminus\! M$. Then any node in $V$ can locally compute the complete distance matrix $D^G$ of $G$ as follows (we set $D'_{uv} = 0$ if $u,v \in M$):

\begin{equation}
\label{eq:exactAPSP}
D^G_{uv} = \min \Big(d_h(u,v), \min_{u'\!,v' \in M} \big( D'_{uu'} \!+ D^S_{u'v'} \!+ D'_{vv'}\big)\Big).
\end{equation}
For $x \in [1..n]$ the running time for the exploration via local edges is $\tilO(x)$ and $\tilO\big(n/\!\sqrt{x}\big)$ for the dissemination of the distance matrices $D^S$ and $D'$. This is optimized for \smash{$x \in \Theta\big( n^{2/3}\big)$}. In \Cref{sec:apsp_exact} we present \Cref{alg:exactAPSP} and its subroutines and give the full proof of \Cref{thm:APSP}.

\begin{theorem}
	\label{thm:APSP}
	There is an algorithm that solves APSP in \smash{$\tilO(n^{2/3})$} rounds w.h.p.
\end{theorem}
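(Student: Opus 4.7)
The plan is to instantiate the standard ``skeleton'' framework with parameter $x = n^{2/3}$ and to use Theorem~\ref{thm:tokenDissemination} as the workhorse for all global dissemination. Concretely, I would proceed in three phases: (i) a local exploration phase that computes $h$-limited distances $d_h(u,v)$ for all pairs and lets every skeleton node determine its neighbors in $E_S$; (ii) a global dissemination phase that makes the edge weights of $S$ and all $h$-limited distances between $M$ and $V \setminus M$ known to every node; (iii) a purely local computation phase in which each node evaluates formula~\eqref{eq:exactAPSP} in its own memory.

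For phase (i), I would mark each node independently with probability $1/x$ to form $M$, and set $h = \Theta(x \log n)$. Using the \LOCAL mode, every node can learn the weighted subgraph induced by its $h$-hop ball in $O(h) = \tilO(x)$ rounds, and from this it locally deduces $d_h(u,v)$ to every node $v$ within $h$ hops. In particular, every skeleton node learns the identities, hop-distances, and $h$-limited distances to all other skeleton nodes within $h$ hops, i.e., exactly the incident edges of $E_S$. A Chernoff/union-bound argument (to be captured by \Cref{lem:longPathsM}) shows that any shortest path of more than $h$ hops contains a skeleton node in every consecutive block of $h$ hops w.h.p., so the skeleton $S$ preserves exact distances: $d_S(u',v') = d_G(u',v')$ for all $u',v' \in M$, and for any $u,v \in V$ with $hop(u,v) > h$ there exist $u',v' \in M$ with $d(u,u') + d_S(u',v') + d(v,v') = d(u,v)$ and $hop(u,u'), hop(v,v') \le h$.

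For phase (ii) I would invoke Theorem~\ref{thm:tokenDissemination} twice. First, I broadcast $E_S$: since $|M| = \tilO(n/x)$ w.h.p., there are $k_1 = \tilO(n^2/x^2)$ edge-weight tokens, each held initially by some skeleton endpoint, giving $\ell_1 = \tilO(n/x)$ tokens per node; token dissemination finishes in $\tilO(\sqrt{k_1} + \ell_1) = \tilO(n/x)$ rounds. Second, I broadcast the matrix $D'$ of $h$-limited distances between $M$ and $V \setminus M$: there are $k_2 = \tilO(n^2/x)$ such distances (each known to the corresponding non-skeleton node after phase (i)), with $\ell_2 = \tilO(n/x)$ tokens per node, so dissemination costs $\tilO(\sqrt{k_2} + \ell_2) = \tilO(n/\sqrt{x})$ rounds. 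After this, every node knows $D^S$ (which it derives by running any classical APSP locally on $S$ using the broadcast edges) and $D'$, and can evaluate \eqref{eq:exactAPSP} with no further communication.

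The total cost is $\tilO(x + n/\sqrt{x})$, minimized at $x = n^{2/3}$, yielding $\tilO(n^{2/3})$. The main obstacle I anticipate is the bookkeeping around the high-probability guarantees: I have to argue simultaneously that $|M| = \tilO(n^{1/3})$, that the hitting-set property for all $\binom{n}{2}$ shortest paths holds (which forces the $\log n$ factor in $h$), and that the two token-dissemination invocations succeed; a union bound over the polynomially many bad events, together with a suitable choice of the high-probability constant inside Theorem~\ref{thm:tokenDissemination}, handles this cleanly. A secondary subtlety is correctness of \eqref{eq:exactAPSP} when some $u,v$ attain their distance via a mix of a short prefix and a skeleton middle segment; this is immediate once the hitting lemma is phrased so that the nearest skeleton node on a shortest path is at hop-distance at most $h$, so taking the minimum over $u',v' \in M$ recovers the true optimum.
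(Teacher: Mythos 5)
Your proposal is correct and follows essentially the same route as the paper: the same skeleton construction with $x=n^{2/3}$ and $h=\Theta(x\log n)$, the same two invocations of Theorem~\ref{thm:tokenDissemination} with the same token counts ($|M|^2$ for $E_S$ and $n|M|$ for $D'$), the same hitting-set lemma guaranteeing $d_S = d_G$ on $M$, and the same local evaluation of \eqref{eq:exactAPSP}. The only cosmetic difference is that you phrase the case split via $hop(u,v)>h$ rather than via ``all shortest $u$-$v$-paths have more than $h$ hops,'' but since the short-path case is absorbed by the $d_h(u,v)$ term in the minimum, this does not affect correctness.
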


\subsection{Upper Bounds for  Approximate APSP}

The bottleneck of the exact algorithm is the dissemination of $h$-hop limited distances between all pairs of nodes in \smash{$V \!\setminus\! M \times M$}, which takes \smash{$\tilO\big(n/\!\sqrt{x}\big)$} rounds. Our approximative approach (\Cref{alg:approxAPSP}) mitigates this bottleneck by disseminating only \textit{one} distance token $d_{uv}$ per $v \in V\setminus M$, representing the distance between $v$ and its \textit{closest} marked node $u \in M$, which can be done \smash{$\tilO\big(\!\sqrt{n}\big)$} rounds.
We show that for suitable choices of $x$, \Cref{alg:approxAPSP} can be used to obtain a 3-approximation of APSP for general (connected) graphs in \smash{$\tilO\big(\!\sqrt{n}\big)$} and a $(1\!+\!\varepsilon)$-approximation for unweighted graphs in \smash{$\tilO\big(\!\sqrt{n/\varepsilon}\,\big)$}. 

Besides slightly adapted procedures, \Cref{alg:approxAPSP} uses the same subroutines as in the exact case. After all subroutines are performed, each node in $V$ knows (w.h.p.): (i) its $m$-hop distances $d_m(\cdot,v)$ to any other node $v \in V$ (c.f. Fact \ref{fct:constructskeletonprime}) (ii) the skeleton $S$ and the distance matrix $D^S$ among its nodes (c.f.\ \Cref{lem:distancesSkeleton}) and (iii) the distance $d_{vv'}$ between any node $v \in V \!\setminus\! M$ and its respective closest marked node $v' \in M$ (c.f. Fact \ref{fct:transmitclosest}). With this knowledge, each node computes an approximative distance Matrix $\tilde{D}^G$ of $G$ as follows: 
\begin{equation}
\label{eq:approxAPSP}
\tilde{D}^G_{uv} = \min\Big(d_m(u,v), \min_{u'\in M} \big(d_h(u,u') + D^S_{u'v'}\big) + d_{vv'}\Big),
\end{equation}
where $v'\!\in\!M$ is $v$'s closest marked node and $m \!=\! \max\!\big(h,\frac{n}{h}\big)$. In \Cref{sec:apsp_approx} we prove the following:

\begin{theorem}
	\label{thm:approxgeneralAPSP}
	There is an algorithm to compute a 3-approximation of APSP in $\tilO\big(\!\sqrt{n}\big)$ rounds w.h.p.
\end{theorem}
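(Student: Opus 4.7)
The plan is to analyze Algorithm~\ref{alg:approxAPSP} with the parameter choice $x = \Theta(\sqrt{n})$, so that $h, m \in \tilT(\sqrt{n})$ (where $m = \max(h, n/h)$), and to establish both the runtime bound and the $3$-approximation guarantee separately.

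For the runtime I would verify that each of the three ingredients gathered before formula~(\ref{eq:approxAPSP}) completes in $\tilO(\sqrt{n})$ rounds: the $m$-hop local exploration that yields $d_m(\cdot,\cdot)$ takes $\tilO(m) = \tilO(\sqrt{n})$ on the local network; the expected skeleton size is $|M| = \tilO(n/x) = \tilO(\sqrt{n})$, giving $|E_S| = \tilO(n)$ weighted skeleton edges, so disseminating enough information to recover $D^S$ via \Cref{thm:tokenDissemination} takes $\tilO(\sqrt{|E_S|}) = \tilO(\sqrt{n})$ rounds; and the at most $n$ closest-marked-node distance tokens $d_{vv'}$ (Fact~\ref{fct:transmitclosest}) can likewise be disseminated in $\tilO(\sqrt{n})$ rounds.

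For correctness I would show $d(u,v) \leq \tilde{D}^G_{uv} \leq 3\,d(u,v)$ w.h.p. The lower bound is immediate because each expression inside the outer minimum in~(\ref{eq:approxAPSP}) corresponds to the length of an actual $u$-$v$-walk in $G$. For the upper bound I would distinguish two cases. If $hop(u,v) \leq m$, then $d_m(u,v) = d(u,v)$ and we are done. Otherwise $hop(u,v) > m \geq h$; then by \Cref{lem:longPathsM}, w.h.p.\ some marked node $u' \in M$ lies on a shortest $u$-$v$-path within $h$ hops of $u$, so in particular $d(u,u') + d(u',v) = d(u,v)$ and $d_h(u,u') = d(u,u')$. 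Let $v' \in M$ be $v$'s closest marked node. Combining $d(u',v') \leq d(u',v) + d(v,v')$ (triangle inequality) with $d(v,v') \leq d(v,u')$ (optimality of $v'$, since $u' \in M$) gives
\begin{align*}
\tilde{D}^G_{uv}
&\leq d_h(u,u') + D^S_{u'v'} + d_{vv'}
= d(u,u') + d(u',v') + d(v,v') \\
&\leq d(u,u') + d(u',v) + 2\,d(v,v')
\leq d(u,v) + 2\,d(v,u')
\leq 3\,d(u,v),
\end{align*}
where the last inequality uses $d(v,u') \leq d(u,v)$ because $u'$ lies on a shortest $u$-$v$-path.

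The main obstacle is the probabilistic step of Case~2: I need the ``marked node within $h$ hops of $u$ on the shortest path'' event to hold \emph{simultaneously} for all $\Theta(n^2)$ pairs $(u,v)$ with $hop(u,v) > m$, not just for a fixed pair. I expect this to be handled by boosting the per-path bound of \Cref{lem:longPathsM} via the standard $h = \Theta(x\log n)$ calibration and a union bound over pairs. A secondary subtlety is that the token $d_{vv'}$ must equal the \emph{true} distance from $v$ to its closest marked node rather than an $h$-hop-limited proxy; this is ensured by Fact~\ref{fct:transmitclosest}, since a sampling rate of $1/x$ places a marked node within $\tilO(x) \subseteq \tilO(h)$ hops of every $v$ w.h.p., so the closest-marked-node distance is already found during the $m$-hop local exploration.
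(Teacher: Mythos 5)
Your runtime analysis and your choice $x=\Theta(\sqrt n)$ match the paper, and your overall skeleton for the stretch bound (marked node $u'$ near $u$ on a shortest path, triangle inequality through the skeleton) is close to the paper's. But there is a genuine gap in how you control the term $d_{vv'}$. The algorithm does not disseminate the true distance from $v$ to its truly closest marked node: by \Cref{alg:transmitclosest}, $v'=\argmin_{w\in M}d_h(v,w)$ and the token is $d_{vv'}=d_h(v,v')$. Your chain needs $d_{vv'}=d(v,v')$ for the equality step and $d(v,v')\le d(v,u')$ from ``optimality of $v'$''. Neither holds as stated: (i) $v'$ minimizes the \emph{$h$-limited} distance, so optimality only yields $d_h(v,v')\le d_h(v,u')$, and $d_h(v,u')$ is typically $\infty$ because $u'$ sits within $h$ hops of $u$ and hence roughly $hop(u,v)-h>h$ hops from $v$; (ii) your patch for the ``secondary subtlety'' — that some marked node lies within $\tilO(h)$ hops of $v$ — only shows $d_h(v,v')<\infty$, not that it equals the distance to the weighted-closest marked node (whose shortest path to $v$ may need more than $h$ hops).

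The repair, which is exactly the paper's route (\Cref{lem:approxAPSP}), is to invoke \Cref{lem:longPathsM} a second time on the \emph{suffix} of the shortest $u$-$v$-path near $v$: w.h.p.\ there is a marked $w\in M$ on that path with $hop(w,v)\le h$, so $d_{vv'}=d_h(v,v')\le d_h(v,w)=d(v,w)\le d(u,v)$. Combined with $D^S_{u'v'}=d(u',v')\le d(u',v)+d(v,v')\le d(u',v)+d_{vv'}$, one gets $d_h(u,u')+D^S_{u'v'}+d_{vv'}\le d(u,v)+2d_{vv'}\le 3\,d(u,v)$, and the rest of your argument goes through. A smaller issue: your case split on $hop(u,v)\le m$ versus $hop(u,v)>m$ is not the right one — if $hop(u,v)\le m$ but every \emph{shortest} $u$-$v$-path has more than $m$ hops, then $d_m(u,v)>d(u,v)$ and neither of your cases applies correctly; the split should be on whether some shortest path has at most $m$ hops, as in the paper.
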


\begin{theorem}\label{thm:approxunweightedAPSP}
	For arbitrary $\varepsilon > 0$, there is an algorithm that computes a $(1\!+\!\varepsilon)$-approximation of the APSP problem on unweighted graphs in $\tilO\big(\!\sqrt{n/\varepsilon} \,\big)$ rounds w.h.p.
\end{theorem}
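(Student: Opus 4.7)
The plan is to instantiate the approximate-APSP framework of \eqref{eq:approxAPSP} with parameters tuned for the unweighted case. Sample the skeleton $M\subseteq V$ by marking each node independently with probability $1/x$; set $h=\tilT(x)$ as in the exact APSP algorithm, so that by \Cref{lem:longPathsM} every shortest $u$-$v$-path with $hop(u,v)>h$ contains a marked node within every window of $h$ consecutive hops w.h.p.; and, in addition to the $h$-hop exploration used to build the skeleton, have each node learn $d_m(u,\cdot)$ via a local BFS of radius $m:=\max(h,n/h)$. The crucial observation on top of the exact-APSP primitives is that on an unweighted graph, $d(v,v')=hop(v,v')=\tilO(x)$ w.h.p.\ for every $v\in V$ and its closest marked node $v'\in M$, by a standard Chernoff + union bound over balls of increasing radius.

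For the approximation guarantee, fix $u,v\in V$. If $d(u,v)\leq m$, then on an unweighted graph $hop(u,v)\leq m$, so the local $m$-hop exploration yields $d_m(u,v)=d(u,v)$ and hence $\tilde D^G_{uv}=d(u,v)$. Otherwise $d(u,v)>m$: pick $u'\in M$ on a shortest $u$-$v$-path at hop-distance at most $h$ from $u$ (which exists w.h.p.\ by \Cref{lem:longPathsM}) and let $v'\in M$ be $v$'s closest marked node. Then $d_h(u,u')=d(u,u')$, $D^S_{u'v'}=d(u',v')$ by \Cref{lem:distancesSkeleton}, and
\[
\tilde D^G_{uv} \leq d(u,u')+d(u',v')+d(v,v') \leq d(u,u')+d(u',v)+2\,d(v,v') = d(u,v)+2\,d(v,v'),
\]
using that $u'$ lies on a shortest $u$-$v$-path. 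Since $d(v,v')=\tilO(x)$ w.h.p., choosing $m \geq c\cdot x\log n/\varepsilon$ for a large enough constant $c$ makes the right-hand side at most $(1+\varepsilon)\,d(u,v)$. The matching lower bound $\tilde D^G_{uv}\geq d(u,v)$ is immediate since each term in \eqref{eq:approxAPSP} is the length of an actual $u$-$v$-walk in $G$.

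Setting $x=\sqrt{n\varepsilon}$ yields $h=\tilT(\sqrt{n\varepsilon})$ and $m=n/h=\tilT(\sqrt{n/\varepsilon})$, which exactly meets the constraint $m\geq c\,x\log n/\varepsilon$. The running time splits into: (i) local $m$-hop exploration in $\tilO(m)=\tilO(\sqrt{n/\varepsilon})$ rounds; (ii) the $h$-hop exploration used to set up the skeleton in $\tilO(h)=\tilO(\sqrt{n\varepsilon})$ rounds; (iii) disseminating the $|E_S|\leq|M|^2=\tilO(n^2/x^2)$ skeleton edge weights, which takes $\tilO(n/x)=\tilO(\sqrt{n/\varepsilon})$ rounds by \Cref{thm:tokenDissemination}; and (iv) disseminating the at most $n$ labels $d_{vv'}$ in $\tilO(\sqrt{n})$ rounds. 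All four terms are dominated by $\tilO(\sqrt{n/\varepsilon})$.

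The main technical obstacle I expect is bookkeeping of polylogarithmic factors: both the Chernoff bound for $d(v,v')=\tilO(x)$ and the existence of a marked node in every $h$-hop window incur $\Theta(\log n)$ factors, which feed into the constraint $m\geq c\,x\log n/\varepsilon$ on which the $(1+\varepsilon)$ guarantee hinges. Checking that $x=\sqrt{n\varepsilon}$ (up to polylog) simultaneously satisfies this constraint and keeps all four runtime components within $\tilO(\sqrt{n/\varepsilon})$ is the decisive piece of arithmetic; everything else reduces to the lemmas and facts already established for the exact APSP algorithm and its general-weight $3$-approximate variant.
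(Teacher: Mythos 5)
Your proposal is correct and follows essentially the same route as the paper: the paper proves \Cref{thm:approxunweightedAPSP} as the $W=1$ case of \Cref{thm:approxsmallweightedAPSP}, using exactly your decomposition $\tilde D^G_{uv}\leq d(u,v)+2d(v,v')$ (via \Cref{lem:computeR,lem:approxAPSP}) together with $m=n/h$ forcing $d(u,v)\geq n/h$ while the additive error is $O(h)$ on unweighted graphs. The one arithmetic point to fix is your claim that $x=\sqrt{n\varepsilon}$ ``exactly meets'' the constraint $m\geq c\,x\log n/\varepsilon$: with $h=\Theta(x\log n)$ this gives $m=n/h=\Theta\bigl(\sqrt{n/\varepsilon}/\log n\bigr)$, which falls short of the required $\Omega\bigl(\sqrt{n/\varepsilon}\log n\bigr)$ by a $\log^2 n$ factor — the paper resolves this by setting $x=\sqrt{n\varepsilon/2W}\cdot\tfrac{1}{\xi\log n}$ so that $h=\sqrt{n\varepsilon/2W}$, an adjustment that leaves all $\tilO(\cdot)$ runtime bounds unchanged.
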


\subsection{Lower Bounds for APSP}

In order to obtain rigorous lower bounds we introduce the technical \Cref{lem:lowerboundlemma} in \Cref{sec:apsp_lower_bounds}. It shows that for a class of graphs and a dedicated node $b$, we can create a bottleneck for the information that can be transmitted from parts of the graph to $b$. More specifically, we show that if the state of some random variable $X$ is given to the nodes of some subgraph $G'$ and if $b$ is at the end of some path of length $L$, then every randomized algorithm in which $b$ needs to learn the state of $X$ requires \smash{$\tilOm\big(\!\min(L,{H(X)}/{L})\big)$} rounds, where $H(X)$ denotes the Shannon entropy of $X$ (c.f.\ \Cref{fig:lowerBounds}, left).

\begin{figure}[H]
	\centering
	\begin{subfigure}{.455\textwidth}
		\vspace*{4pt}
		\centering
		\includegraphics[width=0.92\linewidth]{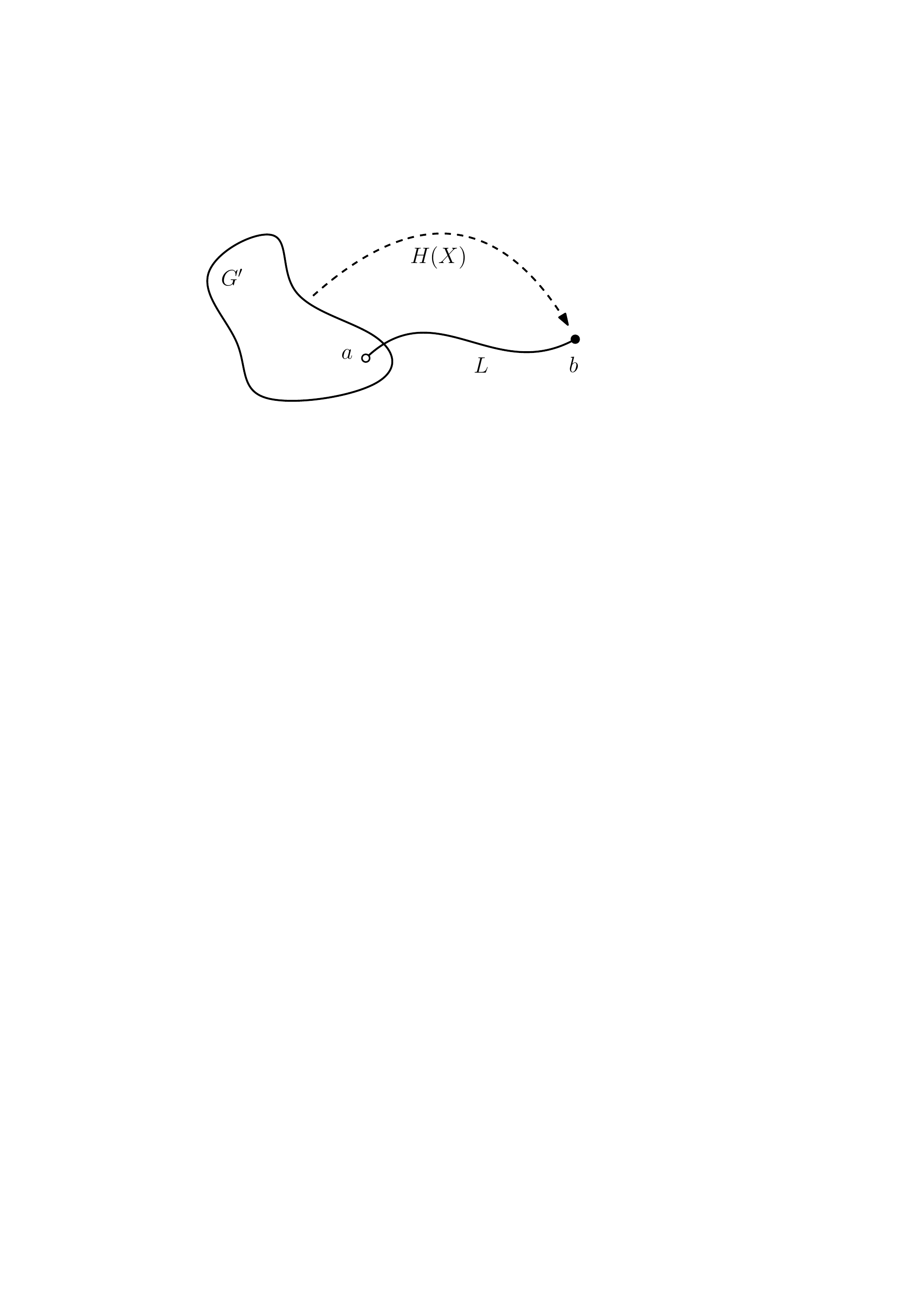}
		\label{fig:lowerBoundAPSPapproxlemma}
	\end{subfigure}%
	\begin{subfigure}{.54\textwidth}
		\vspace*{4pt}
		\centering
		\includegraphics[width=0.92\linewidth]{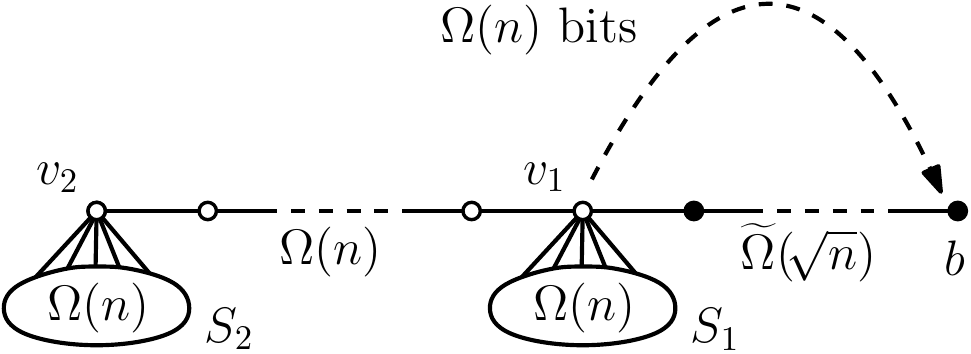}
		\label{fig:lowerBoundAPSPapprox}
	\end{subfigure}
	\caption{Construction of \Cref{lem:lowerboundlemma} (left) and of \Cref{thm:lowerBoundAPSPapprox} (right).}
	\label{fig:lowerBounds}
\end{figure}

We use \Cref{lem:lowerboundlemma} to show a lower bound of \smash{${\tilOm}\big(\!\sqrt{n}\big)$} for APSP that is robust even if we allow approximation factors up to some $\alpha \in \tilO\big(\!\sqrt{n}\big)$ (c.f.\ \Cref{thm:lowerBoundAPSPapprox}, full proof in \Cref{sec:apsp_lower_bounds}). The idea is to construct an unweighted graph consisting of a path of length $\Omega(n)$ and two node sets $S_1,S_2$ of size $\Omega(n)$ each. Let $b$ be on one end of the path. Then we attach two nodes sets $S_1$ and $S_2$ of size $\Omega(n)$ each to the path at distance approximately $L \in \tilOm(\!\sqrt{n})$ and $\Omega(n)$ from $b$ (c.f.\ \Cref{fig:lowerBounds}, right). 

Now assume an adversary ``shuffles'' the nodes in $S_1,S_2$ uniformly at random, where the state of $S_1$ will be our random variable $X$. If $b$ does not know for a node $u$ whether $u \!\in\! S_1$ or $u \!\in\! S_2$, then $b$ must assume $u \in S_2$ (recall that approximations must be lower bounded by the true distance). However, if $u \in S_1$ would be true, then this results in a larger approximation ratio than we allow. We show that in order to learn $S_1$, node $b$ needs to receive $H(X)=\Omega(n)$ bits. Choosing $L \in \tilT\big(\!\sqrt{n}\big)$ yields the claimed lower bound.

\begin{theorem}
	\label{thm:lowerBoundAPSPapprox}	
	An \smash{$\alpha$}-approximative APSP algorithm in the hybrid network model on unweighted graphs takes \smash{${\Omega}\big(\!\sqrt{n}/\log n\big)$} rounds, for any $\alpha \leq {\sqrt{nc}\cdot\log n}/{2}$, where $c \log n$ (for constant $c > 0$) is the number of messages a node can receive per round over global edges.
\end{theorem}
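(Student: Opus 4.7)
The plan is to instantiate Lemma \ref{lem:lowerboundlemma} on the construction sketched on the right of \Cref{fig:lowerBounds}. Let $G$ be the unweighted graph built from a path $P=(b=v_0,v_1,\ldots,v_N)$ of length $N=\Theta(n)$, together with two sets $S_1,S_2$ of $\Theta(n)$ pendant leaves each, attached to $v_L$ and $v_N$ respectively, where $L$ is a parameter to be fixed. Fix an identifier pool $U$ with $|U|=|S_1|+|S_2|$ and $|S_1|=|S_2|$, and let the random variable $X$ be a uniformly random equal-size partition of $U$ which the adversary uses to name the leaves of $S_1$ and $S_2$. The entire initial state of the subgraph $G'=S_1$ is determined by $X$, and $H(X)=\log\binom{|U|}{|U|/2}=\Omega(n)$.

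Next I would show that any $\alpha$-approximate APSP algorithm forces $b$ to learn $X$. For $u\in S_1$ the true distance is $d(b,u)=L+1$; for $u\in S_2$ it is $d(b,u)=N+1$. Since the output must satisfy $d(b,u)\le d'(b,u)\le\alpha\cdot d(b,u)$, whenever $\alpha(L+1)<N+1$ the admissible output intervals for the two cases are disjoint, so the vector $d'(b,\cdot)$ reveals which labels belong to $S_1$, i.e., determines $X$. Setting $L=\Theta\bigl(\sqrt{n}/(\sqrt{c}\log n)\bigr)$ and taking $N$ a sufficiently large constant multiple of $n$ makes this constraint consistent with the hypothesis $\alpha\le\tfrac12\sqrt{nc}\log n$.

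Applying Lemma \ref{lem:lowerboundlemma} to $G'=S_1$, the random variable $X$, and the length-$L$ path from $v_L$ to $b$ then yields a lower bound of $\tilOm\bigl(\min(L,\,H(X)/L)\bigr)$ rounds. With the above choices, $L=\Theta(\sqrt{n}/(\sqrt{c}\log n))$ and $H(X)/L=\Theta(\sqrt{nc}\log n)$, so treating $c$ as a constant the minimum evaluates to $\Theta(\sqrt{n}/\log n)$, which gives the claimed $\Omega(\sqrt{n}/\log n)$ bound.

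The main obstacle I anticipate is not the calculation but verifying the preconditions of the lemma on the reduction: in particular, confirming that no information about $X$ is present at any node outside $G'=S_1$ before the protocol starts (so that every bit of $X$ reaching $b$ must originate in $S_1$), and checking that the distinguishability argument is robust against randomized protocols and against the adversary's freedom to drop excess messages on globally congested nodes. Once this is in place, balancing $L$ against $H(X)/L$ and absorbing the global-edge bandwidth $c\log n$ into the stated constraint on $\alpha$ is routine.
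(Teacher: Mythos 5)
Your proposal is correct and follows essentially the same route as the paper: the same path-plus-two-clusters construction, the same entropy argument showing $b$ must learn the random partition because the admissible approximation intervals for $S_1$ and $S_2$ are disjoint, and the same application of Lemma \ref{lem:lowerboundlemma} with $L=\Theta(\sqrt{n}/(\sqrt{c}\log n))$. The only adjustment needed is that $G'$ in the lemma must be the entire subgraph at hop distance at least $L$ from $b$ (including $S_2$ and the tail of the path), not just $S_1$, so that $G$ decomposes as $G'$ plus the length-$L$ path as the lemma requires — exactly the precondition check you flag yourself.
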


\subsection{Upper Bounds For Exact SSSP}

We give a sketch of the exact SSSP-algorithm, whereas the formal algorithm and proofs can be found in \Cref{sec:sssp_exact}. 
The algorithm proceeds in phases $i = 1, \ldots, \lceil2 \sqrt{\SPD} \,\rceil$. 
At the beginning of each phase two invariants are maintained: (i) every node $v$ knows the subgraph $G(v,2i)$ of $G$ induced by all nodes within hop-distance $2i$ to $v$, and (ii) every node knows its $t(i \!-\! 1)$-limited distance $d_{t(i - 1)}(s,v)$ to $s$, where \smash{$t(i) := \sum_{j = 1}^i j$} denotes the $i$-th triangular number.
Since any shortest path has length at most $\SPD$ and \smash{$t(\lceil2 \sqrt{\SPD} \,\rceil) \geq \SPD$}, after \smash{$\lceil2 \sqrt{\SPD} \,\rceil$} phases every node knows its exact distance to $s$. 
Moreover, as we consider positive edge weights, we have $\SPD \!<\! n$, thus our algorithm takes time $\tilO(\!\sqrt{n})$ in the worst case.

Maintaining invariant (i) is simple: every node sends all information it has learned about the graph so far to its neighbors for two rounds via local edges at the beginning of each phase.
Maintaining invariant (ii) is the main concern of our algorithm. 
Note that from invariant (i) a node $v$ knows $d_{i}(u,v)$ for every $u \in G(v,i)$. If $v$ would also know $d_{t(i-1)}(s,u)$ for every node $u \in G(v,i)$, then $v$ could easily determine its $t(i)$-limited distance to $s$ with the equation
\vspace*{-2mm}
\begin{equation}
\label{eq:exactSSSP}
d_{t(i)}(s,v) = \min_{\substack{u \in G(v,i)}} \Big(d_{t(i-1)}(s,u) + d_{i}(u,v) \Big).\vspace*{-1mm}
\end{equation}
Unfortunately, naively exchanging all distances $d_{t(i-1)}(s,u)$ among all pairs of nodes within $i$ hops of each other over the global network in order to compute \Cref{eq:exactSSSP} would either take too long or cause too much contention on the global network, as the neighborhood of a node could be of size $\Theta(n)$.
However, we will exploit the fact that it suffices that node $v$ learns the distance label $d_{uv} := d_{t(i-1)}(s,u) + d_i(u,v)$ from a node $u$ that minimizes \Cref{eq:exactSSSP} and safely disregard distance labels of other nodes in $G(v,i)$.

We define $T(u,i)$ as the shortest-path tree of $G(u,i)$. 
Note that $u$ and any node $v \in G(u,i)$ knows $T(u,i)$, due to invariant (i).
The goal of $u$ is to propagate the distance label $d_{uv}$ to all nodes in $v$ in $T(u,i)$ for which $u$ minimizes \Cref{eq:exactSSSP}.
To achieve that, we employ a recursive divide and conquer strategy, where each node $u$ starts with the tree $T:=T(u,i)$. In each recursion level the tree $T$ is split at a node $x$ whose removal decomposes $T$ into subtrees of size at most $|T|/2$. Note that a splitting node $x$ of $T$ always exists and can be computed locally by all $v \in G(u,i)$ (due to invariant (i)).

Let $R$ be a subtree of $T$ rooted at $x$. The root $u$ will take care of informing nodes $v \in T \setminus R$ about $d_{uv}$ in the next recursion, whereas the task of informing the nodes $v \in R$ about $d_{uv}$ is delegated to the children of $x$. For that purpose the root $u$ of $T$ informs $x$ about the distance $d_{ux} := d_{t(i-1)}(s,u) + d_i(u,x)$. Subsequently, $x$ instructs every child $c$ in $T$ to start another recursion in their respective subtree, by sending it the distance $d_{uc} := d_{ux} + w(x,c)$ via the local edge.

The difficult part is to send $d_{ux}$ to $x$ efficiently via the global network. Since $x$ might be splitting node of multiple trees, this could cause contention if done naively. We carefully resolve this by making every root $u$ of some tree $T$ that intends to send a message $d_{ux}$ to $x$ participate in an \textit{aggregation routine}\footnote{The aggregation protocol solves the following problem. Given an \textit{aggregation function} (e.g.\ MIN, MAX, SUM) and a set of source nodes that hold inputs, then some set of target nodes has to learn the result of the function applied to a subset of inputs.}, using techniques of \cite{AGG+18}. This ensures that $x$ obtains the smallest distance label $d_{ux}$, which is sufficient that $x$ eventually obtains the distance label minimizing \Cref{eq:exactSSSP} and, recursively, also the nodes in its subtree $R$. The aggregation protocol takes $\tilO(1)$ rounds and more details are given in \Cref{sec:sssp_exact}.

In the next recursion level, every child $c$ of some splitting node $x$ initiates another recursion call, where it must inform the nodes in their respective subtree of $R$ about its respective distance label. Node $c$ processes this recursion alongside all the recursions it already takes care of. We prove that the number of subtrees any node handles simultaneously is $O(\log n)$. The recursion depth is also $O(\log n)$ since the trees of \textit{all} recursions at least halve in size in each recursion level. In summary, the runtime of all recursion levels of iteration $i$ take $\tilO(1)$ rounds. Afterwards all nodes $v$ know $d_{t(i)}(s,v)$ which guarantees invariant (ii) for the next iteration $i\!+\!1$. A formal proof of the following theorem can be found in \Cref{sec:sssp_exact}.

\begin{theorem}\label{thm:exactSSSP}
	There is an algorithm that solves SSSP in time $\tilO\big(\!\sqrt{\SPD}\big)$, w.h.p.
\end{theorem}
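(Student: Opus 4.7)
The plan is to implement the phase-based scheme outlined in the preceding discussion and analyze it round by round. I would define $t(i) := i(i+1)/2$ and run for $N := \lceil 2\sqrt{\SPD}\,\rceil$ phases, proving by induction on $i$ that at the start of phase $i$ every node $v$ stores (a) the induced subgraph $G(v,2i)$ and (b) the label $d_{t(i-1)}(s,v)$. Since $t(N) \geq \SPD$, invariant (b) at phase $N{+}1$ yields the exact distance $d(s,v)$ for every $v$; termination correctness then reduces entirely to the inductive step. The base case $i=1$ is trivial: two rounds of local flooding give each $v$ its $2$-hop neighborhood $G(v,2)$, and $d_{t(0)}(s,v)=0$ if $v=s$ and $\infty$ otherwise.

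For the inductive step I would first re-establish invariant (a) with two rounds of local flooding, so each $v$ knows $G(v,2i)$ (hence also $G(v,i)$ and all $d_i(u,v)$ for $u\in G(v,i)$, as well as the shortest-path tree $T(u,i)$ for every $u$ within $i$ hops). Correctness of invariant (b) for phase $i{+}1$ then follows from the identity
\[
d_{t(i)}(s,v) \;=\; \min_{u\in G(v,i)}\bigl(d_{t(i-1)}(s,u)+d_i(u,v)\bigr),
\]
which I would prove by decomposing any shortest $s$-$v$-walk of length at most $t(i)$ hops at the last vertex lying within $i$ hops of $v$. So it remains to show how $v$ can learn the minimizing label $d_{uv}:=d_{t(i-1)}(s,u)+d_i(u,v)$ in $\tilO(1)$ rounds per phase.

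The main engine is a recursive divide-and-conquer over each tree $T(u,i)$, simulated in parallel for all $u$. At a recursion level the current sub-tree $T$ rooted at $u$ is split at a centroid-like node $x$ (computable locally from invariant (a) by every $v\in G(u,i)$) whose removal leaves components of size at most $|T|/2$. Node $u$ must deliver $d_{ux}=d_{t(i-1)}(s,u)+d_i(u,x)$ to $x$ over the global network; then $x$ forwards along local edges to each child $c$, which recurses on its subtree. Once the label reaches a leaf, the descendant $v$ has $d_{uv}$ in hand. Because the recursion depth is $O(\log n)$ and each level fires one global message per active root, the per-phase cost would be a single invocation of the aggregation primitive of \cite{AGG+18} per level, giving $\tilO(1)$ rounds per phase and $\tilO(\sqrt{\SPD}\,)$ in total, provided the global congestion is controlled.

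Controlling that congestion is the main obstacle. A given splitting node $x$ may be the chosen splitter for many distinct trees $T(u,i)$ simultaneously, so the naive scheme would dump $\Omega(n)$ messages on $x$ in one round. The fix is to use the aggregation primitive with the MIN operation keyed on $x$: since $v$ only needs one minimizing witness, $x$ merely has to learn the smallest incoming $d_{ux}$, not all of them, and only the argmin needs to be propagated further down. One thus needs to argue two things: first, that the min-aggregation correctly preserves the value minimizing the identity above (a short monotonicity argument: if $u^\star$ is the true minimizer at $v$, then the label it sends up the recursion dominates every competing label along the shared prefix of the recursion paths); and second, that the number of distinct recursion instances simultaneously handled at any single node is $O(\log n)$ per level, which I would prove by charging each instance to a distinct ancestor in the global recursion forest and invoking the halving property of the centroid split. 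Combining this with the $O(\log n)$ round cost of one aggregation instance of \cite{AGG+18} yields $\tilO(1)$ rounds per phase and the claimed $\tilO(\sqrt{\SPD}\,)$ total, w.h.p.
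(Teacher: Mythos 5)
Your proposal follows essentially the same route as the paper: the same invariants, the same identity $d_{t(i)}(s,v)=\min_{u\in G(v,i)}\bigl(d_{t(i-1)}(s,u)+d_i(u,v)\bigr)$, the same centroid-split divide-and-conquer on the trees $T(u,i)$, the same use of min-aggregation to avoid contention at splitting nodes, and the same two counting arguments ($O(\log n)$ recursion depth, $O(\log n)$ simultaneous instances per node). Your monotonicity argument for why min-aggregation preserves the true minimizer is also the paper's argument (the paper additionally breaks ties by node identifier so that a single consistent witness survives along the whole branch, a detail you should make explicit).

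The one genuine gap is termination: you fix the number of phases as $N:=\lceil 2\sqrt{\SPD}\,\rceil$, which presupposes that the nodes know $\SPD$. That is a global graph parameter the nodes do not have, so the algorithm as you describe it is not executable. The paper resolves this by running until the first phase in which no distance value changes anywhere, detected by a convergecast over the global network at the end of each phase, and proves (by induction on the hop length of shortest paths) that ``no value changed in phase $i$'' is equivalent to ``all nodes already hold their exact distances.'' You need this extra lemma (or some equivalent distributed termination criterion) to close the argument; without it the claimed bound of $\tilO\bigl(\!\sqrt{\SPD}\bigr)$ rounds is only achieved by an algorithm that cannot be run.
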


\subsection{\boldmath Approximate SSSP in $\tilO(n^{1/3})$}
\label{sec:sssp_approx}

We summarize our $\big(1\!+\!o(1)\big)$-approximate SSSP algorithm with complexity $\tilO(n^{1/3})$. At its core, the approach relies on simulating the algorithm of Becker et al.\ \cite{BKK+17} that computes a $(1\!+\!\varepsilon)$-approximation of SSSP for the \textit{Broadcast Congested Clique Model} (BCC model, c.f.\ \Cref{def:BCCmodel}) in $\tilO(1)$ rounds. First, we compute a skeleton $S=(M,E_S)$ with \smash{$|M| = \tilT(n^{2/3})$} with edges between nodes at most $h \in \tilO(n^{1/3})$ hops apart (note that we always include the source $s \in M$). Using token dissemination (c.f.\ \Cref{sec:tokenDissemination}), we can simulate one round of the BCC model on $S$ in $\tilO(n^{1/3})$ time. This allows us to simulate the algorithm of \cite{BKK+17} on $S$, in order to $(1\!+\!\varepsilon)$-approximate SSSP on $S$ in $\tilO(n^{1/3})$ rounds in the hybrid model.

Again using token dissemination, we can make the distance approximations that we computed for pairs in $M \times\{s\}$ publicly known in $\tilO(n^{1/3})$ rounds. In another $h \in \tilO(n^{1/3})$ rounds all nodes in $V$ can do a local search to determine the distance to close nodes in $M$. After these steps every node $v \in V$ knows approximate distances $\tilde d_{su}$ between $s$ and any marked node $u \in M$ and also its own $h$-hop-limited distance $d_h(u,v)$ to any marked node $u \in M$. Then every node $v \in V$ can locally compute an approximate distance $\tilde d_{sv}$ to $s$ with the following equation:

\vspace*{-3mm}
\begin{equation}
\label{eq:approxSSSP}
\tilde d_{sv} := \min\Big(d_h(s,v), \min_{u\in M} \big(\tilde d_{su} + d_h(u,v)\big)\Big)\vspace*{-1mm}
\end{equation}

\begin{theorem}
	\label{thm:approximateSSSP}
	There is an algorithm that computes a $(1\!+\!\varepsilon)$-approximation of SSSP in $\tilO(n^{1/3} \!/ \varepsilon^{6})$ rounds.
\end{theorem}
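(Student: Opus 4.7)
The plan is to follow the roadmap sketched in the overview and assemble the algorithm from four already-established primitives: skeleton sampling, token dissemination, local BFS exploration, and the BCC-model SSSP algorithm of Becker et al.\ as a black box. First, I would construct the skeleton $S=(M,E_S)$ by independently marking each node of $V$ with probability $p = \Theta(\log n / n^{1/3})$ and always including $s$ in $M$; a Chernoff bound gives $|M|=\tilT(n^{2/3})$ w.h.p. For the hop threshold, I set $h = \Theta(n^{1/3}\log n)$ and put an edge between every pair $u,v \in M$ with $hop_G(u,v)\le h$, weighted by the true $h$-limited distance $d_{h,G}(u,v)$. The standard ``hitting set'' argument (analogous to \Cref{lem:longPathsM}) yields that every shortest path in $G$ of more than $h$ hops contains a marked node within each window of $h$ consecutive hops w.h.p., which implies that $d_S(s,u) = d_G(s,u)$ for every $u \in M$.

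Next I would simulate the BCC algorithm of \cite{BKK+17} on $S$. One round of BCC on $M$ asks each marked node to broadcast an $O(\log n)$-bit message, which is exactly an instance of $(|M|,1)$-TD with $|M|=\tilT(n^{2/3})$; by \Cref{thm:tokenDissemination} this costs $\tilO(\sqrt{|M|})=\tilO(n^{1/3})$ hybrid rounds per simulated BCC round. The BKK+17 SSSP algorithm computes a $(1+\varepsilon)$-approximation of distances from $s$ in $\tilO(1/\varepsilon^{O(1)})$ BCC rounds, so the simulation finishes in $\tilO(n^{1/3}/\varepsilon^{O(1)})$ hybrid rounds and delivers, at each $u\in M$, a value $\tilde d_{su}$ with $d_G(s,u)\le \tilde d_{su} \le (1+\varepsilon)\,d_G(s,u)$. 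The exponent $6$ of $1/\varepsilon$ in the theorem statement is inherited from the BCC runtime of \cite{BKK+17} (possibly after a rescaling $\varepsilon\mapsto\varepsilon/c$).

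Then I would broadcast all $\tilO(n^{2/3})$ labels $\tilde d_{su}$ to the entire network via another call to \Cref{thm:tokenDissemination}, again in $\tilO(n^{1/3})$ rounds, and in parallel have every node $v\in V$ run a local BFS of depth $h$ on the local network, which takes $h=\tilO(n^{1/3})$ rounds and teaches $v$ the value $d_{h,G}(u,v)$ for every $u\in M\cup\{s\}$ within $h$ hops. Each $v$ then locally evaluates \eqref{eq:approxSSSP}. Correctness splits into two cases: if $hop_G(s,v)\le h$, then $d_h(s,v)=d_G(s,v)$ and the first term in \eqref{eq:approxSSSP} is exact. Otherwise, the skeleton-hitting argument places a marked node $u^*$ on a shortest $s$-$v$-path with $hop_G(u^*,v)\le h$, so $d_h(u^*,v)$ equals the true length of the $u^*$-to-$v$ suffix, and $\tilde d_{su^*}\le (1+\varepsilon)\,d_G(s,u^*)$, yielding $\tilde d_{sv}\le (1+\varepsilon)\,d_G(s,v)$. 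The lower bound $\tilde d_{sv}\ge d_G(s,v)$ is immediate since every term in \eqref{eq:approxSSSP} is an upper bound on $d_G(s,v)$.

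The main technical obstacle I anticipate is verifying that the approximation factor composes cleanly rather than multiplying: concretely, I need to confirm that the $h$-limited edge weights in $E_S$ are \emph{exact} (not approximate) distances along paths that visit only marked waypoints, so that the only source of multiplicative error is the single $(1+\varepsilon)$ factor introduced by BKK+17. This in turn requires the skeleton-hitting lemma to be applied to the witness shortest path in $G$, not to an arbitrary path, and requires the probability bounds to be strong enough to survive a union bound over all $O(n^2)$ node pairs; both are handled by the choice $p = \Theta(\log n / n^{1/3})$ and a standard union-bound argument already used in the APSP analysis. The remaining details (congestion control on local edges during the depth-$h$ BFS, and high-probability bounds for each invocation of token dissemination) are routine given the tools already developed.
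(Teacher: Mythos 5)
Your construction follows the paper's proof essentially step for step: sample a skeleton containing $s$, compute edges as exact $h$-limited distances via local exploration, simulate the broadcast congested clique on $M$ via token dissemination (one BCC round $=$ one $(|M|,1)$-TD instance costing $\tilO(\sqrt{|M|})$), run \cite{BKK+17} as a black box, disseminate the resulting labels, and combine via \Cref{eq:approxSSSP} with the same two-case correctness argument. The correctness part is fine.

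There is, however, one genuine quantitative gap: your parameters do not yield the claimed $\tilO(n^{1/3}/\varepsilon^{6})$ bound, and your explanation of where the exponent $6$ comes from is wrong. The algorithm of \cite{BKK+17} takes $\tilO(\varepsilon^{-9})$ BCC rounds (not $\tilO(\varepsilon^{-6})$), so with your fixed sampling probability $p=\Theta(\log n/n^{1/3})$, i.e.\ $|M|=\tilT(n^{2/3})$, the simulation costs $\tilO(\sqrt{|M|}\cdot\varepsilon^{-9})=\tilO(n^{1/3}\varepsilon^{-9})$, which dominates and gives only a $\tilO(n^{1/3}/\varepsilon^{9})$ algorithm; no rescaling $\varepsilon\mapsto\varepsilon/c$ can reduce a polynomial exponent. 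The paper obtains $\varepsilon^{-6}$ by making the skeleton density itself depend on $\varepsilon$: it samples with probability $1/x$ and balances the local-exploration cost $\tilO(x)$ against the simulation cost $\tilO(\sqrt{n/x}\cdot\varepsilon^{-9})$, which is optimized at $x=n^{1/3}\varepsilon^{-6}$, giving $|M|=\tilO(n^{2/3}\varepsilon^{6})$ and $h=\tilO(n^{1/3}\varepsilon^{-6})$. So to prove the theorem as stated you must couple the skeleton size and exploration radius to $\varepsilon$ rather than fixing them at $n^{2/3}$ and $n^{1/3}$; the rest of your argument then goes through unchanged.
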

\vspace*{-1mm}
Choosing (e.g.) $\varepsilon = \frac{1}{\log n}$ we obtain a $(1\!+\!o(1))$-approximate SSSP algorithm with complexity \smash{$\tilO(n^{1/3})$}. More details on the algorithm and the proof of \Cref{thm:approximateSSSP} are given in \Cref{sec:sssp_approx}.

\subsection{\boldmath Approximate SSSP in $\tilO(n^\varepsilon)$}
\label{sec:subpolynomial}

Finally, we present a $(\log_\alpha n)^{O(\log_\alpha n)}$-approximate SSSP algorithm that takes time $\tilO(\alpha^3)$, w.h.p., for a parameter $\alpha \geq 5$.
By setting $\alpha = n^\varepsilon$ for some $\varepsilon > 0$, we obtain a $(1/\varepsilon)^{O(1/\varepsilon)}$-approximation in time $O(n^{3\varepsilon})$, which, for example, allows to compute a constant factor approximation for any constant $\varepsilon$.
Furthermore, for $\varepsilon = \sqrt{\log \log n / \log n}$, this gives a \smash{$2^{O(\!\sqrt{\log n \log \log n})}$} approximate solution in subpolynomial time \smash{$2^{O(\!\sqrt{\log n \log \log n})}$}. We describe the algorithm from a high level and provide all details in \Cref{sec:subpolynomial}.

The main idea of the algorithm is to recursively construct a hierarchy of spanners $G_1, \ldots, G_T$, where $G_{i}$ is a spanner of the nodes in $M_i \subseteq V[G_{i-1}]$. The set $M_i$ contains each node of $G_{i-1}$ with probability $\log(n)/\alpha$ for $i=2$, and with probability $1/\alpha$ for $i\geq 3$.
The first spanner $G_1$, which contains all nodes of $G$, is constructed using only the local network by simply performing the distributed Baswana-Sen algorithm \cite{BS07} with parameter $k$ as a black box, which gives a $(2k - 1)$-spanner in time $O(k^2)$.

The construction of subsequent spanners $G_2, \ldots, G_T$ relies entirely on the global network. For $i \ge 2$, we construct $G_i$ as \emph{$h$-hop skeleton spanner} of $G_{i-1}$, which is formally defined in \Cref{def:skeletonspanner}. Intuitively, a skeleton spanner gives a good approximation of distances between nodes that are within hop-distance $h$.
We ensure that every edge of the spanner $G_i$ is learned by one endpoint of the edge in such a way that no node has to take care of more than $\tilO(\alpha)$ edges. This property (low arboricity\footnote{The arboricity of a graph is the minimum number of forests required to cover all edges.}) allows to efficiently apply the techniques of \cite{AGG+18} on $G_i$ via the global network. More specifically, we will prove that we can construct $G_{i}$ as $O(\alpha)$-hop skeleton spanner of stretch $O(\log_\alpha n)$ of the graph $G_{i-1}$ for all $i \ge 2$.

Finally, by taking the union of all the graphs $G_i$, we obtain a global spanner $H$ for the whole graph. Applying the properties of the skeleton spanners $G_i$, we show that $H$ has an $(\log_{\alpha} n)^{O(\log_{\alpha} n)}$-approximate path consisting of at most $\tilO(\alpha)$ hops for every pair of nodes $u,v\in V$. 
Thence, every node $v$ learns a good approximation $\tilde{d}(s,v)$ of $d(s,v)$ by performing a BFS from $s$ in $H$ for $\tilO(\alpha)$ rounds.
Again using techniques of \cite{AGG+18}, one round of BFS can be realized in the global network in time $\tilO(\alpha)$.
The following theorem results from careful analysis of the approximation guarantees and runtime of our recursive spanner construction.

\begin{theorem} \label{theorem:polySSSPSimple}
	The algorithm solves $(\log_\alpha n)^{O(\log_\alpha n)}$-SSSP in time $\tilO(\alpha^3)$, w.h.p.
\end{theorem}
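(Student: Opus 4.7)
The plan is to prove the stretch bound and the runtime bound separately, organized around the recursion depth $T = \Theta(\log_\alpha n)$, which is chosen small enough so that $|M_T|$ is polylogarithmic w.h.p.\ by Chernoff bounds over the geometric sequence $|M_i|$. First, I would set up the spanner hierarchy carefully. The base $G_1$ is the Baswana-Sen $(2k-1)$-spanner of $G$ with $k = \Theta(\log n)$, produced in $\tilO(1)$ rounds on the local network via \cite{BS07}, and it satisfies the low-arboricity invariant that each node ``owns'' at most $\tilO(\alpha)$ incident spanner edges. For $i \ge 2$, I would prove inductively the key lemma: given $G_{i-1}$ with the $\tilO(\alpha)$-arboricity invariant, one can construct $G_i$ entirely on the global network as an $O(\alpha)$-hop skeleton spanner of $G_{i-1}$ with stretch $O(\log_\alpha n)$ that again satisfies the $\tilO(\alpha)$-arboricity invariant. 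The construction combines random sampling of $M_i$ (with rate $1/\alpha$ for $i \ge 3$) with a Baswana-Sen-style clustering executed via the low-arboricity aggregation primitives of \cite{AGG+18}; preserving the arboricity invariant on $G_i$ is precisely what enables the next level to run in the global network.

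Second, for the approximation factor, I would fix a pair $u,v \in V$ and inductively construct a short path between them in $H = \bigcup_{i=1}^T G_i$ starting from a shortest $u$-$v$ path $P$ in $G$. A Chernoff argument at each level shows that consecutive samples of $M_i$ along the current level-$(i\!-\!1)$ approximation of $P$ are at most $O(\alpha)$ hops apart in $G_{i-1}$ w.h.p., so each such segment is replaceable by its corresponding skeleton-spanner edge in $G_i$, which in turn expands back to an $\tilO(\alpha)$-hop, $O(\log_\alpha n)$-approximate path in $G_{i-1}$. Unrolling this recursion from $G_T$ down to $G_1$ contributes a multiplicative factor of $O(\log_\alpha n)$ per level across $T = O(\log_\alpha n)$ levels, yielding a composite stretch of $(\log_\alpha n)^{O(\log_\alpha n)}$ and a total hop-count in $H$ of $\tilO(\alpha)$.

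Third, for the runtime: $G_1$ costs $\tilO(1)$ on the local network; each spanner construction at level $i \ge 2$ uses the \cite{AGG+18} aggregation primitive on $G_{i-1}$'s arboricity-$\tilO(\alpha)$ structure to propagate $O(\alpha)$-hop information in the global network, which I expect to cost $\tilO(\alpha^2)$ per level and $\tilO(\alpha^2)$ in total over the $\tilO(1)$ levels. The final BFS from $s$ runs for $\tilO(\alpha)$ hops on $H$; each BFS round must relay along the edges of every $G_i$, and by the arboricity invariant the multi-level relay through all $T = O(\log_\alpha n)$ spanner levels costs polynomial in $\alpha$ per hop via the same aggregation primitive, summing to the claimed $\tilO(\alpha^3)$ total BFS cost. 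Adding construction and BFS yields the $\tilO(\alpha^3)$ bound.

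The main obstacle is the inductive skeleton-spanner construction: simultaneously guaranteeing (a) stretch $O(\log_\alpha n)$, (b) the $O(\alpha)$-hop skeleton property, and (c) the $\tilO(\alpha)$-arboricity invariant that enables the next recursion level to live in the global network is where essentially all of the technical work sits. Once that lemma is in place, both the composite stretch and the runtime bound follow by routine unrolling of the recursion together with bookkeeping of per-level costs, and the final sub-polynomial and constant-factor corollaries obtained by specific choices of $\alpha$ are immediate.
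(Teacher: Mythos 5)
Your overall architecture matches the paper's: a hierarchy of skeleton spanners maintained in low-arboricity (``oriented'') form so that each level after the first can be built purely on the global network, an inductive segment-replacement argument for the composite stretch, and a final $\tilO(\alpha)$-hop BFS on the union $H=\bigcup_i G_i$ realized via the aggregation primitives of \cite{AGG+18}. You also correctly identify the crux: a single-level lemma guaranteeing stretch $O(\log_\alpha n)$, the $O(\alpha)$-hop skeleton property, and the $\tilO(\alpha)$-arboricity invariant simultaneously. However, there are two concrete problems. First, your choice $k=\Theta(\log n)$ for the base Baswana--Sen spanner is wrong: it makes $G_1$ a $\Theta(\log n)$-stretch spanner, and that factor multiplies into the final approximation. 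For $\alpha=n^{\eps}$ with constant $\eps$ the target stretch $(\log_\alpha n)^{O(\log_\alpha n)}=(1/\eps)^{O(1/\eps)}$ is a constant, so an extra $\Theta(\log n)$ is not absorbed and the theorem (and the constant-approximation corollary) fails. The correct choice, used throughout the paper, is $k=\log_\alpha n$, giving $G_1$ stretch $O(\log_\alpha n)$, runtime $O(\log_\alpha^2 n)=\tilO(1)$, and arboricity $O(n^{1/k}\log n)=\tilO(\alpha)$ --- the same $k$ that makes $|M|^{1/k}\leq\alpha$ at every later level.

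Second, your runtime accounting misplaces the bottleneck. In the paper the final BFS costs only $\tilO(\alpha^2)$: there are $\tilO(\alpha)$ BFS rounds and each costs $\tilO(\alpha)$ because $H$ is in $\tilO(\alpha)$-oriented form; there is no extra polynomial-in-$\alpha$ factor per hop for ``relaying through all levels,'' since the union $H$ is handled as a single oriented graph. The $\tilO(\alpha^3)$ term comes instead from each skeleton-spanner construction: one iteration of limited-depth distance propagation costs $O((\delta+\log n)\,|M|^{1/k}\log^2 n)=\tilO(\alpha^2)$ (oriented degree $\delta=\tilO(\alpha)$ times the $O(|M|^{1/k}\log n)=\tilO(\alpha)$ concurrent BFS constructions each node participates in, per \Cref{lemma:spannerprob}), and there are $h(k-j)=O(\alpha\log_\alpha n)$ iterations per phase, $k$ phases per stage, and $O(\log_\eta W)$ stages, giving $\tilO(\alpha^3)$ per level and overall. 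Your estimate of $\tilO(\alpha^2)$ per construction level drops the factor coming from the $O(\alpha)$ propagation depth, and your totals only coincidentally agree with the paper's. Neither issue requires a different proof strategy, but both need to be repaired for the argument to close.
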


\section{Token Dissemination}
\label{sec:tokenDissemination}

In this section we give the details of \Cref{alg:tokenDissemination} solving the $(k,\ell)$-TD problem and its subroutines. Finally we provide a full proof of \Cref{thm:tokenDissemination}.

\begin{algorithm}[H]	
	\caption{\texttt{Token-Dissemination}  \Comment{$x \in [2..k]$} }	
	\label{alg:tokenDissemination}
	\begin{algorithmic}
		\State \texttt{Token-Balancing} \Comment{\textit{redistribute tokens such that} $\ell = \tilO\big(\tfrac{k}{n}\big)$} 
		\State \texttt{Token-Multiplication} \Comment{\textit{spread \smash{$\tilO\big(\frac{n}{k}\big)$} copies of each token}} 
		\State \texttt{Token-Seeding($x$)} \Comment{\textit{seed tokens to nodes with prob.\ $1/x$}} 
		\State \texttt{Local-Dissemination($x$)} \Comment{\textit{spread seeds via local edges}}
	\end{algorithmic}			
\end{algorithm} 

We start with a technical lemma showing that if every node sends $O(\log n)$ messages to random nodes in $V$, then every nodes receives only $O(\log n)$ w.h.p.

\begin{lemma}
	\label{lem:receiveBound}
	Presume some algorithm takes at most $p(n)$ rounds for some polynomial $p$. Presume that each round, every node sends at most $\sigma = \Theta(\log n)$ messages via global edges to $\sigma$ targets in $V$ picked independently and uniformly at random. Then there is a $\rho = \Theta(\log n)$ such that for sufficiently large $n$, in every round, every node in $V$ receives at most $\rho$ messages per round w.h.p.
\end{lemma}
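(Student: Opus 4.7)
}
The plan is to apply a Chernoff bound pointwise to each (node, round) pair and then take a union bound over all nodes and all rounds. Fix a round and a target node $v \in V$. For each sender $u \in V$ and each of the $\sigma$ messages it dispatches in this round (indexed by $j \in \{1,\dots,\sigma\}$), let $X_{u,j}$ be the indicator that $u$'s $j$-th message is addressed to $v$. Since the targets are picked independently and uniformly at random from $V$, we have $\Pr[X_{u,j}=1]=1/n$ and the family $\{X_{u,j}\}_{u\in V,\, j\le\sigma}$ is independent. The number of messages $v$ receives in this round is $Y_v = \sum_{u,j} X_{u,j}$, and $\E[Y_v] = n\cdot\sigma\cdot\tfrac{1}{n} = \sigma$.

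Next I apply a standard multiplicative Chernoff bound to $Y_v$: for any $\delta\ge 1$,
\[
\Pr\!\big[Y_v \ge (1+\delta)\,\sigma\big] \;\le\; \exp\!\big(-\delta\,\sigma/3\big).
\]
Choosing $\sigma = c_1 \log n$ with a sufficiently large constant $c_1$, and setting $\rho = (1+\delta)\sigma = c_2 \log n$ for a sufficiently large constant $c_2$, the right-hand side becomes at most $n^{-c_3}$ for any desired constant $c_3 > 0$. Thus in any fixed round, a fixed node $v$ receives more than $\rho$ messages with probability at most $n^{-c_3}$.

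Finally, I take a union bound over all $n$ nodes and all at most $p(n)$ rounds. Since $p(n)$ is polynomial, the total failure probability is bounded by $n\cdot p(n)\cdot n^{-c_3}$, which is at most $n^{-c}$ for any fixed constant $c>0$ once $c_3$ is chosen large enough (and hence $c_1, c_2$ are chosen large enough). Hence for sufficiently large $n$ every node receives at most $\rho = \Theta(\log n)$ messages in every round, w.h.p. There is no real obstacle here; the only mild subtlety is making sure the constants are chosen in the right order (pick $c$ first, then $c_3$, then $\sigma$ and $\rho$) so that the union bound over $n\cdot p(n)$ events still gives the desired high-probability guarantee.
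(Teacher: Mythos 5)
Your proposal is correct and follows essentially the same route as the paper's proof: compute $\E[Y_v]=\sigma$, apply a multiplicative Chernoff bound to push the failure probability below $n^{-c_3}$ by taking the deviation constant large enough relative to the constant in $\sigma=\Theta(\log n)$, and union bound over all $n$ nodes and $p(n)$ rounds. The only cosmetic difference is that you decompose the count into $n\sigma$ per-message indicators while the paper uses $n$ per-sender indicators, but both yield the same expectation and the same concentration argument.
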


\begin{proof}
	Let $X_{v,r}$ be the (random) number of messages $v$ receives in round $r$. Node $v$ is targeted by some node from $V$ with probability at most $\tfrac{\sigma}{n}$ (for simplicity we assume that $v$ can send a token to itself, in reality it can just keep it). Hence we have $\mathbb{E}\big(X_{v,r}\big) \leq n \cdot \tfrac{\sigma}{n} = \sigma$. By definition we have $\sigma \geq \xi \ln n$ for some constant $\xi$ and large enough $n$. Let $c>0$ be arbitrary.  We choose \smash{$\rho \geq (1 \!+\! \frac{3c}{\xi}) \sigma$}. Then a Chernoff bound\footnote{For completeness, the variants of the Chernoff bounds that we are using in this section are given in \Cref{lem:chernoffbound}.} yields
	$$\mathbb{P}\Big(X_{v,r} \!>\! \rho \Big) \leq \mathbb{P}\Big(X_{v,r} \!>\! (1 \!+\! \tfrac{3c}{\xi})\sigma \Big) \leq \exp\Big(\!\!-\! \frac{3\xi c \ln n}{3\xi}\Big) =  \frac{1}{n^{c}}.$$
	In accordance with the union bound given in \Cref{lem:unionbound} the event \smash{\hspace*{0.1mm} $\bigcap_{\substack{\!\!\!v \in V\\ \hspace*{-0.5mm}\!\! r \leq p(n)}} \hspace*{-1mm} \Big( X_{v,r} \!\leq\! \rho \Big)$} takes place w.h.p.
\end{proof}

The following algorithm balances the number of tokens per node to $\tilO\big(\lceil k/n \rceil \big)$\footnote{The rounding brackets $\lceil\cdot\rceil$ in \smash{$\bigO\big(\lceil\tfrac{k}{n}\rceil \log n\big)$} mean that for $k \ll n$ we guarantee $\bigO(\log n)$ tokens per node.}. The set of tokens received during the execution of \Cref{alg:loadbalancing} forms the new set of tokens a node has to take care of.

\begin{algorithm}[H]	
	\caption{\texttt{Token-Balancing} \Comment{$\sigma = \Theta(\log n)$}}	
	\label{alg:loadbalancing}
	\begin{algorithmic}
		\State $T_v \gets$ initial set of tokens of this node $v$
		\For {$\bigO\big(\ell / \log n\big)$ rounds} \Comment{\textit{redistribute tokens}}
		\State $U_v \gets$ sample $\sigma$ nodes from $V$ independently and uniformly at random \Comment{\textit{$U_v$ is a multiset}}
		\State $S \gets$ select subset of size $\min(\sigma,|T_v|)$ from $T_v$
		\State $v$ sends all tokens in $S$ via global network to nodes in $U_v$ (one-to-one)
		\State $T_v \gets T_v \setminus  S$
		\EndFor
	\end{algorithmic}			
\end{algorithm}

\begin{lemma}
	\label{lem:loadbalancing}
	If each node holds at most $\ell$ tokens and there are $k$ tokens in total, then \Cref{alg:loadbalancing} redistributes all tokens in $\bigO\big(\ell / \log n\big)$ rounds such that afterwards each node holds \smash{$\bigO\big(\lceil\tfrac{k}{n}\rceil \log n\big)$} tokens w.h.p.
\end{lemma}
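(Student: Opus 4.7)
The plan is to separate the analysis into three parts: (i) feasibility (no messages are dropped because of the $\gamma = O(\log n)$ global capacity), (ii) termination (all initial tokens are dispatched within the stated round budget), and (iii) the load bound after redistribution via a Chernoff argument.

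First, I would observe that the algorithm fits exactly into the hypothesis of \Cref{lem:receiveBound}: in every round, each node sends at most $\sigma = \Theta(\log n)$ messages, each to a target drawn independently and uniformly at random from $V$. Hence, w.h.p., no node receives more than $\rho = \Theta(\log n)$ messages in any single round during the \smash{$O(\ell/\log n)$} rounds of the loop, so the global capacity is never exceeded and no token is lost. For termination, note that any node initially holds at most $\ell$ tokens and in each iteration removes $\min(\sigma, |T_v|)$ of them from $T_v$. After $\lceil \ell/\sigma \rceil = O(\ell/\log n)$ rounds, $T_v$ is empty for every $v$, so every initial token has been sent to exactly one uniformly random destination.

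For the load bound, fix a node $v \in V$ and let $X_v$ be the total number of tokens $v$ receives during the algorithm. Because each of the $k$ tokens is forwarded exactly once, and the target of that forwarding is chosen independently and uniformly from $V$, every token lands at $v$ with probability $1/n$, independently across tokens. Thus $X_v$ is a sum of $k$ independent indicators with $\E[X_v] = k/n =: \mu$. I would then apply a Chernoff bound of the form $\Pr[X_v \geq (1{+}\delta)\mu] \leq \exp(-\delta^2 \mu / (2+\delta))$ with $\delta$ chosen to yield an $n^{-c}$ failure probability: in the regime $\mu \geq c' \log n$, constant $\delta$ suffices and gives $X_v = O(\mu) = O((k/n)\log n)$; in the regime $\mu < c' \log n$, a larger $\delta = \Theta((\log n)/\mu)$ gives $X_v = O(\log n)$. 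Combining the two cases yields $X_v = O(\lceil k/n \rceil \log n)$ w.h.p., and a union bound over the $n$ nodes preserves the high-probability guarantee.

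The main (mild) obstacle will be writing the Chernoff step so that the $\lceil \cdot \rceil$ appears cleanly: one has to handle $k < n$ and $k \geq n$ uniformly, since in the former regime $\mu < 1$ and standard multiplicative Chernoff requires some care. Using the additive form $X_v \leq O(\mu + \log n)$, which in both regimes equals $O(\lceil k/n\rceil \log n)$, avoids any case analysis in the final bound. The remaining statements (runtime and feasibility) are essentially bookkeeping once \Cref{lem:receiveBound} has been invoked.
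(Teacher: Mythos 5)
Your proposal is correct and follows essentially the same route as the paper's proof: invoke \Cref{lem:receiveBound} for the per-round receive bound, observe that the loop runs long enough to empty every $T_v$, and then apply a Chernoff bound to the total count $X_v$ with $\E[X_v]=k/n$, splitting into the two regimes that produce the $O((k/n)\log n)$ and $O(\log n)$ bounds respectively, followed by a union bound. The only differences are cosmetic (which form of the Chernoff bound is used and whether the case split is phrased in terms of $k$ versus $n$ or $\mu$ versus $\log n$).
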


\begin{proof}
	Note that the loop in \Cref{alg:loadbalancing} runs sufficiently many rounds so that all nodes can transfer all of their initial tokens $T_v$. Whenever a node $v$ picks itself as target node ($v \in U_v$), it simply keeps one of its tokens and considers it as received. Due to \Cref{lem:receiveBound} no node receives more than $O(\log n)$ tokens w.h.p.
	
	If we fix some node and some token, then the node gets that token with probability $\tfrac{1}{n}$. Let $X_v$ be the number of tokens transferred to node $v \in V$. Then \smash{$\mathbb{E}(X_v) = \tfrac{k}{n}$}. Let $c>0$. We distinguish two cases. First, assume $k \geq n$. We obtain the following with a Chernoff bound:
	$$ \mathbb{P}\Big(X_v > (1 \!+\! 3c\ln n) \frac{k}{n} \Big) \leq \exp\Big(\!\!-\!\frac{3ck\ln n}{3n}\Big) \stackrel{k \geq n}{\leq} 
	\frac{1}{n^c}.$$
	This means that $X_v = \bigO\big(\tfrac{k}{n}\log n\big)$ w.h.p.\ if $k \geq n$. Second, if $k < n$ we have
	$$ 
	\mathbb{P}\Big(X_v \!>\! \big(1 \!+\! \frac{3cn\ln n}{k}\big)\frac{k}{n} \Big) \leq \exp\Big(\!\!-\!\frac{3ckn\ln n}{3kn}\Big) = \frac{1}{n^c}.$$
	Thus \smash{$X_v = \bigO\big(\!\log n\big)$} w.h.p.\ for $k < n$. Let $E_v$ be the event that \smash{$X_v \!\in\! \bigO \big(\lceil\tfrac{k}{n}\rceil \log n\big)$}. From the above we see that $E_v$ occurs w.h.p.\ in either case. By \Cref{lem:unionbound} $E_v$ takes place for every node in every round w.h.p.
\end{proof}

If $k$ is small, the next algorithm boosts the number of nodes that hold a fixed token by a factor $\tilde \Theta\big(\frac{n}{k}\big)$.

\begin{algorithm}[H]
	\caption{\texttt{Token-Multiplication}}
	\label{alg:tokenMultiplication}
	\begin{algorithmic}
		\State $T_v \gets$ initial set of tokens of this node $v$\Comment{\textit{we treat $T_v$ as a multiset}}
		\For {$\lfloor \log_2 (\frac{n}{k}) \rfloor$ phases} \Comment{\textit{runs only for $k \leq n/2$}}
		\For {each $t \in T_v$} \Comment{\textit{$|T_v| = \bigO (\log n)$ w.h.p.}}
		\State pick $u_1, u_2 \in V$ independently, uniformly at random
		\State $v$ sends a copy of $t$ to $u_1,u_2$ via global network
		\EndFor
		\State $T_v \gets$ \textit{multiset} of token-copies received in \textit{this} phase 
		\EndFor
	\end{algorithmic}
\end{algorithm}

\begin{lemma}
	\label{lem:tokenMultiplication}
	Presume that $k \leq n/2$ and each node has at most ${\bigO}(\log n)$ tokens. By invoking \Cref{alg:tokenMultiplication}, each token $t$ is copied to a random subset $V_t \subseteq V$ with \hide{$|V_t|\leq \frac{n}{k}$ and}{$|V_t| \geq \frac{n}{k\zeta \ln n}$} w.h.p., for some constant $\zeta  > 0$.  \Cref{alg:tokenMultiplication} takes \smash{$\bigO(\log n\log \frac{n}{k})$} rounds. Afterwards, we still have ${\bigO}(\log n)$ tokens per node.
\end{lemma}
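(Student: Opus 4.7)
The plan is to track the branching process induced by the algorithm and exploit the fact that, at the end of a phase, the locations of all copies of a fixed token $t$ form an i.i.d.\ uniform sample from $V$. This reduces the lower bound on $|V_t|$ to a balls-into-bins argument, while the load and runtime bounds follow from Chernoff-type concentration on the number of arrivals per node.

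First I would show by induction on the phase index that, throughout all $\lfloor \log_2(n/k) \rfloor$ phases, every node holds at most $O(\log n)$ tokens w.h.p. Since each phase exactly doubles the total number of tokens in the system, at the end of phase $i$ there are $k \cdot 2^i \leq n$ copies in total, each independently routed to a uniform target in $V$. Hence the number of arrivals at any fixed $v$ is a sum of independent Bernoulli trials with mean at most $1$, and a Chernoff bound yields $O(\log n)$ arrivals w.h.p. A union bound over the $O(\log n)$ phases and the $n$ nodes closes the induction; this already gives the final load bound, and together with \Cref{lem:receiveBound} it shows that each phase can be executed in $O(\log n)$ rounds, for a total of $O(\log n \log(n/k))$.

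The lower bound on $|V_t|$ is the core of the argument. Let $M := 2^{\lfloor \log_2(n/k) \rfloor} \in [n/(2k), n/k]$. I would first establish that the $M$ final positions of the copies of $t$ are mutually independent and uniformly distributed on $V$; the slickest way is to condition on the state at the start of the last phase, where each of the $M$ copies is independently sent to a fresh uniform target. Hence $|V_t|$ is exactly the number of non-empty bins when $M$ balls are thrown uniformly into $n$ bins, and since $M \leq n$ the standard estimate gives $\mathbb{E}[|V_t|] \geq n(1 - e^{-M/n}) \geq M/2$. Because relocating one ball changes the count of non-empty bins by at most $1$, McDiarmid's inequality yields $\Pr[|V_t| < M/4] \leq \exp(-M/8)$. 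When $M \geq 8(c+1)\ln n$ this is at most $n^{-(c+1)}$, and taking the constant $\zeta$ large enough ensures $n/(k\zeta\ln n) \leq M/4$, so the bound transfers to the quantity in the lemma. In the residual regime $M < 8(c+1)\ln n$, the inequality $M \geq n/(2k)$ forces $n/k = O(\log n)$, so for $\zeta$ large enough the threshold $n/(k\zeta\ln n)$ drops below $1$ and the claim is trivial, as at least one copy of $t$ is always present. A final union bound over the $k$ tokens yields the guarantee for all tokens simultaneously.

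The main obstacle is the independence of the final copy locations: the branching structure naturally couples copies that share an ancestor, so carrying an independence statement through every phase would be awkward. Conditioning on the state at the start of the last phase collapses this coupling into one fresh round of uniform sampling and makes the balls-into-bins analysis immediately applicable, which is the key technical insight of the proof.
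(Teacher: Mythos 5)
Your proposal is correct, and the load/runtime part coincides with the paper's argument (total number of copies stays at most $n$, each routed to a fresh uniform target, Chernoff plus union bound over nodes and phases). For the lower bound on $|V_t|$, however, you take a genuinely different route. The paper's argument is a deterministic pigeonhole step: the total number of copies of $t$, counted with multiplicity, is exactly $2^{\varphi}$ with $\varphi=\lfloor\log_2(n/k)\rfloor$, and since the per-node multiset load is at most $\tfrac{\zeta}{2}\ln n$ w.h.p., the number of \emph{distinct} hosts satisfies $|V_t|\ge 2^{\varphi}/(\tfrac{\zeta}{2}\ln n)\ge \tfrac{n}{k\zeta\ln n}$ — no further probabilistic tool is needed beyond the load bound already established. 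You instead condition on the state before the last phase to make the $M=2^{\varphi}$ final destinations i.i.d.\ uniform, reduce $|V_t|$ to the number of occupied bins, lower-bound its expectation by $M/2$, and apply McDiarmid plus a separate treatment of the regime $M=O(\log n)$. Your route is heavier (bounded-differences concentration, a case split), but it yields the stronger conclusion $|V_t|=\Omega(n/k)$ when $n/k=\Omega(\log n)$, i.e.\ it avoids the $\ln n$ loss in the paper's bound in that regime; the paper's pigeonhole step is shorter and reuses the max-load bound it needs anyway for the runtime. Both arguments are sound and establish the lemma as stated.
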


\begin{proof}
	Note that if node $v$ picks itself as recipient for a token copy (which we allow), it just keeps one for the next phase. Since we choose targets randomly, no node receives more than $O(\log n)$ messages per round w.h.p., due to \Cref{lem:receiveBound}.
	
	Let $\varphi := \lfloor \log_2 (\frac{n}{k}) \rfloor$ be the number of phases of \Cref{alg:tokenMultiplication}. In each phase the total number of copies of a token $t$ in the whole network (stored locally in the variables $T_v$, $v\in V$) exactly doubles, even if multiple copies of $t$ end up at the same node, since $T_v$ is defined as a multiset. Notice that we only carry the token-copies received in the current phase over to the next phase. This means that the number of distinct nodes that hold a copy  of $t$ after $\varphi$ phases is upper bounded by $2^{\varphi}$, hence $|V_t| \leq 2^{\varphi} \leq \frac{n}{k}$. 
	
	For the lower bound of $V_t$ we show that in every phase, the multiset $T_v$ contains at most $\bigO(\log n)$ copies of tokens w.h.p. We emphasize that for the sake of this proof we distinguish \textit{token-copies} when counting them, even if they originate from the same token. 
	Initially the claim is true due to the presumption. After that, \Cref{alg:tokenMultiplication} distributes all created token-copies uniformly at random.
	
	In every phase, given \textit{one} specific copy of a token $t$, a given node receives that copy with probability exactly $\frac{1}{n}$. Hence, the expected number of copies of any token $t$ is at most \smash{$\mathbb{E}(|T_v|) \leq k \!\cdot\! |V_t| \!\cdot\! \frac{1}{n} \leq 1$} (recall $|V_t| \leq \frac{n}{k}$). For an arbitrary constant $c > 0$ we obtain \smash{$\mathbb{P}\big(|T_v| > 1\!+\!3c\log n \big) \leq 
	\frac{1}{n^c}$} with a Chernoff bound (\Cref{lem:chernoffbound}), i.e. $|T_v| = \bigO(\log n)$. With a union bound (\Cref{lem:unionbound}), $|T_v| = \bigO(\log n)$ holds w.h.p.\ for all nodes in all rounds of \Cref{alg:tokenMultiplication}. Since $|T_v|$ is also the time complexity of a single phase (c.f.\ \Cref{alg:tokenMultiplication}), this proves the running time of \Cref{alg:tokenMultiplication}.
	
	Let constant $\zeta  \!>\! 0$ be such that \smash{$|T_v| \leq \tfrac{\zeta}{2} \ln n$} for almost all $n$. This means any node holds at most \smash{$\tfrac{\zeta}{2} \ln n$} token-copies, when we treat copies of the same token $t$ as \textit{different} copies ($T_v$ is a multiset). Conversely, the size of the \textit{set} $V_t$ represents the overall number of copies of a token $t$ in the network, in case we count multiple copies of the same token $t$ at the same node as \textit{one}. Therefore, the upper bound of $2^\varphi$ for $|V_t|$ differs from the lower bound by a factor of at most \smash{$\tfrac{\zeta}{2} \ln n$}. We obtain
	$$|V_t| \!\geq\! \frac{2^{\varphi}}{\max_{v\in V}|T_v|} \!\geq\! \frac{2^{\varphi}}{\zeta/2  \ln n} \!=\! \frac{2^{\varphi+1}}{\zeta  \ln n} \!\geq\! \frac{2^{\log_2 (n/k)}}{\zeta  \ln n} \!=\! \frac{n}{k \zeta\ln n}.$$
	
	We already established the fact that the number of tokens-copies per node is at most $|T_v| = \bigO(\log n)$. The same is obviously true for the number of distinct tokens per node thus we have $\bigO(\log n)$ tokens per node w.h.p., after the execution of \Cref{alg:tokenMultiplication}.
\end{proof}

The goal of the next algorithm is to seed each token to roughly $\frac{n}{x}$ random nodes. For $k = \tilOm(n)$ we can afford to sample targets with probability $\frac{1}{x}$ for each token and send them via the global network within our target runtime (which is \smash{$\bigO\big({\ell \cdot \min(k,n)}/{x }\big)$}). For small $k$ we decrease the sampling rate to $\tilO\big(\frac{k}{nx}\big)$. The algorithm still works because now $|V_t| = \tilT(\frac{n}{k})$ nodes are helping to seed token $t$ (c.f.\ \Cref{lem:tokenMultiplication}).

\begin{algorithm}[H]
	\caption{\texttt{Token-Seeding($x$)} \Comment{$x \in [2..k]$, $\sigma \!\in\! \Theta(\log n)$, \textit{$\zeta>0$ is from \Cref{lem:tokenMultiplication}}}}
	\label{alg:tokenSeeding}
	\begin{algorithmic}
		\State $T_v \gets $ initial set of tokens of this node $v$ 
		\For {$t \in T_v$}
		\If {$k \geq \tfrac{n}{2\zeta \ln n}$}  $S_t \gets $ sample each $u \in V$ with probability \smash{$\tfrac{1}{x}$} \Comment{\textit{full sampling rate}}
		\Else $\;\, S_t \gets $ sample each $u \in V$ with probability \smash{$ \frac{k}{nx} \!\cdot\! 2\zeta \ln n$} \Comment{\textit{reduced sampling rate}}
		\EndIf
		\While{$S_t \neq \emptyset$} \Comment{\textit{seeding token $t$}}
		\State $S \gets $ uniformly random subset of $S_t$ of size $\min(\sigma,|S_t|)$
		\State $v$ sends $t$ via global network to all \smash{$u \in S \setminus \{v\}$}
		\State \smash{$S_t \gets S_t \setminus S$}
		\EndWhile
		\EndFor		
	\end{algorithmic}
\end{algorithm}

\begin{lemma}
	\label{lem:tokenSeeding}
	If each node has initially at most $\ell$ tokens, then w.h.p.\ after Algorithms \ref{alg:tokenMultiplication} and \ref{alg:tokenSeeding}, each node knows any given token with probability at least ${1}/{x}$. \Cref{alg:tokenSeeding} takes \smash{$\bigO\big({\ell \cdot \min(k,n)}/{x }\big)$} rounds.
\end{lemma}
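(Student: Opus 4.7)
The plan is to analyze the runtime of \Cref{alg:tokenSeeding} and the probability that a given node receives a given token separately, splitting both arguments according to whether the algorithm uses the full or the reduced sampling rate (i.e., whether $k \geq n/(2\zeta \ln n)$ or not).

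For the runtime, I would fix a node $v$ and count the rounds consumed by one iteration of the outer for-loop, i.e., the seeding of a single token $t \in T_v$. The inner while-loop transmits $|S_t|$ messages via the global network at a rate of $\sigma = \Theta(\log n)$ per round, so it lasts $O(\lceil |S_t|/\sigma\rceil)$ rounds. A Chernoff bound (\Cref{lem:chernoffbound}) gives $|S_t| = O(n/x + \log n)$ in the full case and $|S_t| = O(k\log n / x + \log n)$ in the reduced case, each w.h.p.\ Summing over the $|T_v| \leq \ell$ tokens held by $v$ yields $O(\ell n/(x\log n))$ in the full case (which is $O(\ell \min(k,n)/x)$ since $k \geq n/(2\zeta \ln n)$ there) and $O(\ell k/x)$ in the reduced case (which is $O(\ell \min(k,n)/x)$ since $k < n$ there). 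Because every sampled target is chosen uniformly and independently, \Cref{lem:receiveBound} bounds the incoming congestion and ensures no node is overwhelmed.

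For the success probability, fix a node $u$ and a token $t$, and let $V_t$ denote the set of nodes that hold $t$ at the start of \Cref{alg:tokenSeeding}. In the full-sampling case, $|V_t| \geq 1$ trivially (the token exists somewhere) and each node in $V_t$ samples $u$ independently with probability $1/x$, so
\[
\Pr(u \text{ receives } t) \;\geq\; 1 - (1 - 1/x)^{|V_t|} \;\geq\; 1/x.
\]
In the reduced-sampling case, \Cref{lem:tokenMultiplication} gives $|V_t| \geq n/(k\zeta \ln n)$ w.h.p.; conditioning on this event, each sampler picks $u$ independently with probability $p = 2k\zeta\ln n/(nx)$, hence
\[
\Pr(u \text{ receives } t \mid V_t) \;\geq\; 1 - (1-p)^{|V_t|} \;\geq\; 1 - e^{-|V_t|\, p} \;\geq\; 1 - e^{-2/x} \;\geq\; 1/x,
\]
where the last inequality follows from $e^{2/x} \geq 1 + 2/x \geq x/(x-1)$, valid for $x \geq 2$.

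The main subtlety is keeping the randomness that determines $V_t$ in \Cref{alg:tokenMultiplication} cleanly separated from the fresh independent sampling used in \Cref{alg:tokenSeeding}, so that the Bernoulli calculation above is justified after conditioning on the high-probability size bound for $V_t$. A union bound over the $k$ tokens then elevates the per-token lower bound on $|V_t|$ to a simultaneous guarantee, and together with the congestion bound from \Cref{lem:receiveBound} this yields the claimed $1/x$ per-pair probability along with the stated runtime.
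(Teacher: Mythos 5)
Your proposal is correct and follows essentially the same route as the paper's proof: the same case split on full versus reduced sampling, the same Chernoff bound on $|S_t|$ to get the round count $O(\ell\cdot\min(k,n)/x)$, the same appeal to \Cref{lem:receiveBound} for incoming congestion, and the same use of \Cref{lem:tokenMultiplication} to lower-bound $|V_t|$ before the $1-(1-p)^{|V_t|}$ calculation. The only (cosmetic) difference is the final elementary step: you conclude via $e^{2/x}\geq 1+2/x\geq x/(x-1)$, whereas the paper shows $(1-p)^q\leq e^{-pq}\leq 1-pq/2$ with $pq=2/x$; both are valid for $x\geq 2$.
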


\begin{proof}
	Each node sends a token to at most $\sigma$ uniformly random nodes (a priori, the probability of being selected as a target by a fixed node in some fixed round is equal for every node). As before we invoke \Cref{lem:receiveBound} to argue that w.h.p., no node $v$ receives more than $O(\log n)$ messages per round. 
	
	Next we show that the sampled sets $S_t$ are not too large. First consider the case \smash{$k \!\geq\! \tfrac{n}{2\zeta \ln n}$}. In this case we sample with probability $1/x$ thus \smash{$\mathbb{E}(|S_t|) = \frac{n}{x}$} and with a standard Chernoff bound we have $|S_t| = \bigO(n \log n / x)$ w.h.p.
	Now consider \smash{$k \!<\! \tfrac{n}{2\zeta \ln n}$}. In this case the sample probability is reduced to \smash{$\tfrac{2k \zeta \ln n}{xn}$}. For the expectation we get \smash{$\mathbb{E}(|S_t|) = \tfrac{2k \zeta \ln n}{x}$}. For some $c > 0$ and with a Chernoff bound we obtain
	$$ \mathbb{P}\Big(|S_t| \!>\! \big(1\!+\!\frac{3c}{2\zeta}\big)\!\cdot\!\frac{2k\zeta \ln n}{x} \Big) \leq \exp\Big(\!\!-\!\frac{k c \ln n}{x}\Big) 
	\stackrel{x \leq k}{\leq} 
	\frac{1}{n^c},$$
	thus the event $|S_t| = \bigO\big({k \log n}/{x}\big)$ occurs w.h.p. Combining both cases we have $|S_t| = \bigO\big({\min(k,n) \log n}/{x}\big)$ w.h.p. In accordance with \Cref{lem:unionbound} this is true for every node $v \in V$, every token $t \in T_v$ and in every round of \Cref{alg:tokenSeeding}. Now we are able to compute the time complexity: Sending every token $t \in T_v$ to each of the sampled nodes in $S_t$ takes  $\bigO\big({|T_v| \!\cdot\! |S_t|}/{\sigma}\big) = \bigO\big({\ell \cdot \min(k,n)}/{x }\big)$ rounds.
	
	It remains to be shown that in case \smash{$k \!<\! \tfrac{n}{2\zeta \ln n}$} we can still guarantee that each node obtains a given token with probability at least $1/x$ even though we sample with reduced probability \smash{$\tfrac{2k \zeta \ln n}{xn}$}. In that case, we know from \Cref{lem:tokenMultiplication} that during a run of \Cref{alg:tokenMultiplication} each token $t$ is copied to a random subset \smash{$V_t \subseteq V$} of nodes of size \smash{$|V_t| \geq \frac{n}{k \zeta \ln n}$}. In \Cref{alg:tokenSeeding} all nodes of $V_t$ take part in seeding $t$. 	
	Let \smash{$p := \tfrac{2k \zeta \ln n}{xn}$} and let \smash{$q := \frac{n}{k \zeta \ln n}$}. The probability that fixed node receives a fixed token is at least $1\!-\!(1\!-\!p)^q$. We will show that $1\!-\!(1\!-\!p)^q \geq \frac{pq}{2} = \frac{1}{x}$ for $0 \leq pq \leq 1$ (which holds for $x \geq 2$). We have 
$$(1-p)^q = \Big(\big(1-\frac{1}{1/p}\big)^{1/p}\Big)^{pq} \leq e^{-pq} \leq 1 \!-\! \frac{pq}{2}.$$
The last inequality holds since for $pq=0$ we have equality; $e^{-1} < \frac{1}{2}$ for $pq=1$ and since $e^{x}$ is convex.	
\end{proof}

\begin{algorithm}[H]	
	\caption{\texttt{Local-Dissemination($x$)} \Comment{$x \in [2..k]$}}
	\label{alg:localDissemination}
	\begin{algorithmic}
		\State $T_v \gets $ all tokens that $v$ learned during \Cref{alg:tokenSeeding}: \texttt{Token-Seeding}
		\For {$r = \bigO(x \log n)$ rounds} \Comment{\textit{tokens travel $\bigO(x \log n)$ hops via local edges}}
		\State $v$ sends $T_v$ to all its neighbors in $G$ via local edges
		\State $T_v \gets $ all tokens that $v$ learned for the first time in the last round
		\EndFor
	\end{algorithmic}			
\end{algorithm}

\begin{lemma}
	\label{lem:localDissemination}
	Let $G$ be a connected. If for all tokens $t$, any given node knows $t$ with probability at least $\tfrac{1}{x}$, then after \Cref{alg:localDissemination} is performed, all nodes know all tokens w.h.p.\ after $\bigO(x \log n)$ rounds.
\end{lemma}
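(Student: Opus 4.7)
The plan is to show that the $r=\Theta(x\log n)$ rounds of flooding in the connected local graph $G$ are enough so that (a) a hop-neighborhood of size at least $cx\log n$ around every vertex is reached, and (b) with high probability, every such neighborhood already contains at least one holder of each of the $k$ tokens after \texttt{Token-Seeding}. A straightforward induction on $r$ shows that at the end of round $r$ of \Cref{alg:localDissemination}, every node $v$ knows exactly the tokens
\[
\bigcup_{u\in N_r(v)} T_u^{(0)}, \qquad N_r(v):=\{u\in V\,:\,hop_G(u,v)\le r\},
\]
where $T_u^{(0)}$ denotes the token set of $u$ immediately after \texttt{Token-Seeding}. So it suffices to show that this union contains every token, for every $v$, w.h.p.

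Connectivity of $G$ gives the neighborhood size: a BFS from $v$ discovers at least one new vertex in each layer up to its eccentricity, so $|N_r(v)|\ge\min(r+1,n)$. Fix a sufficiently large constant $c$ (to be chosen after the tail bound below) and set $r:=\lceil cx\ln n\rceil$. If $r+1\ge n$ then $N_r(v)=V$ and there is nothing more to do for this $v$; otherwise $|N_r(v)|\ge cx\ln n$.

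Now fix a token $t$ and a node $v$. By the hypothesis of the lemma (supplied by \Cref{lem:tokenSeeding}), each node $u\in V$ holds $t$ after \texttt{Token-Seeding} with probability at least $1/x$, and I would observe that these events are mutually independent across $u$ for a fixed $t$: in \Cref{alg:tokenSeeding} every sample set $S_t$ is formed by independent per-node Bernoulli inclusion, so the coins deciding whether $u\in S_t$ are disjoint from those for any $u'\ne u$. Consequently
\[
\Pr\!\left(\text{no } u\in N_r(v)\text{ knows }t\right)\;\le\;(1-1/x)^{cx\ln n}\;\le\;e^{-c\ln n}\;=\;n^{-c}.
\]
A union bound over the $n$ vertices $v$ and the at most $\poly(n)$ tokens $t$ then completes the proof once $c$ is chosen large enough.

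The only delicate point is the independence claim, which has to be traced carefully back to the independent Bernoulli sampling in \Cref{alg:tokenSeeding}; once that is granted, the argument is just BFS expansion in a connected graph combined with a standard union bound. Contention on local edges is not an issue at this stage because the working model has $\lambda=\infty$, so an arbitrary token set may be forwarded along every local edge in each round.
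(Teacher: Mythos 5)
Your proposal is correct and follows essentially the same route as the paper's proof: flooding for $r=\Theta(x\log n)$ rounds collects all tokens in $N_r(v)$, connectivity gives $|N_r(v)|\ge r$, and a concentration argument plus a union bound over all $(v,t)$ pairs finishes the job. The only cosmetic difference is that you bound the failure probability directly by $(1-1/x)^{|N_r(v)|}$ using the per-node independence of the Bernoulli sampling in \texttt{Token-Seeding}, whereas the paper applies a Chernoff bound to the count of token-holders in $N_r(v)$ — both rest on the same independence, which you rightly flag as the point that needs tracing back to \Cref{alg:tokenSeeding}.
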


\begin{proof}
	Let $r = cx \ln n$. After $r$ rounds in the loop, all nodes know all tokens within their $r$-hop neighborhood $N_r(v)$. For a certain token $t$ let $X_{v,t}$ be the number of nodes $v \in N_r(v)$ that know $t$. 
	
	Since $G$ is connected we have $|N_r(v)| \geq r$ thus {$\mathbb{E}(X_{v,t}) = {|N_r(v)|}/{x} \geq {r}/{x} = c \ln n$}. We apply another Chernoff bound	
	$$\mathbb{P}\Big(X_{v,t} < 1 \Big) \leq \mathbb{P}\Big(X_{v,t} < \frac{\mathbb{E}(X_{v,t})}{2} \Big) \leq \exp\Big(\!\!-\! \frac{c\ln n}{8}\Big) \stackrel{c' \coloneqq c/8}{\leq} \frac{1}{n^{c'}}.$$
	Hence, w.h.p.\ $v$ has at least one node in its $r$-hop neighborhood that knows $t$. Since $k \leq n^2$ the event \smash{$\;\bigcap_{\substack{v \in V\\ \!\!\!\!\!\text{ tokens }t}} \!\!\big(X_{v,t} \!\geq\!1 \big)$} occurs w.h.p., due to \Cref{lem:unionbound}.
\end{proof}

We stitch together the results we have shown so far to prove \Cref{thm:tokenDissemination}. 

\begin{proof}[Proof of \Cref{thm:tokenDissemination}]
	Let $\ell_{\text{init}}$ be the initial maximum number of tokens per node. First we execute \Cref{alg:loadbalancing}: \texttt{Token-Balancing} which takes $\bigO(\ell_{\text{init}})$ rounds. Afterwards we have \smash{$\ell = \bigO\big(\lceil\frac{k}{n}\rceil \log n\big)$} tokens per node w.h.p., in accordance with \Cref{lem:loadbalancing}.
	
	In case $k \leq n/2$ we run at least one phase of \Cref{alg:tokenMultiplication}: \texttt{Token-Multiplication}. Due to $k \leq n/2$ we have $\ell = \bigO(\log n)$ and therefore \Cref{alg:tokenMultiplication} takes $\tilO(1)$ rounds according to \Cref{lem:tokenMultiplication}. We have shown that the condition $\ell = \bigO(\log n)$ is preserved by \Cref{alg:tokenMultiplication}.
	
	In accordance with \Cref{lem:tokenSeeding}, \Cref{alg:tokenSeeding}: \texttt{Token-Seeding} takes \smash{$\bigO\big(\ell \cdot {\min(k,n)}/{x}\big)$} rounds. The maximum number of tokens is $\ell = \bigO\big(\lceil\frac{k}{n}\rceil \log n\big)$. Since $\bigO\big(\!\min(k,n) \lceil\frac{k}{n}\rceil \big) = \bigO(k)$ the time complexity of \Cref{alg:tokenSeeding} reduces to \smash{$\tilO\big(\tfrac{k}{x}\big)$}.
	
	After \texttt{Token-Seeding} has terminated, the premise of \Cref{lem:localDissemination} is fulfilled. Thus \Cref{alg:localDissemination}: \texttt{Local-Dissemination} solves the $(k,\ell)$-Token-Dissemination Problem in $\tilO(x)$ rounds.
	
	The total number of rounds of \Cref{alg:tokenDissemination}: \texttt{Token-Dissemination} is \smash{$\tilO(x) + \tilO\big(\tfrac{k}{x}\big) + \bigO(\ell_{\text{init}})$}. This is optimized for $x = \Theta\big(\!\sqrt{k}\big)$, which results in the overall time complexity \smash{$\tilO\big(\!\sqrt{k} + \ell_{\text{init}}\big)$}.
\end{proof}

Finally, we show that we do not need the full power provided by the local network (\LOCAL model).

\begin{lemma}
	\label{lem:congestion_token_dissem}
	\Cref{alg:tokenDissemination} works in the same time \smash{$\tilO\big(\!\sqrt{k} + \ell\big)$} for local capacity $\lambda = \Theta\big(\!\sqrt{k}\big)$.
\end{lemma}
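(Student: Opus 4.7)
The plan is to observe that only \Cref{alg:localDissemination} (\texttt{Local-Dissemination}) uses the local network; \Cref{alg:loadbalancing,alg:tokenMultiplication,alg:tokenSeeding} communicate exclusively over global edges and are therefore unaffected by the value of $\lambda$. In particular, the $\bigO(\ell)$ rounds of \texttt{Token-Balancing} and the $\tilO(\sqrt{k})$ rounds of \texttt{Token-Seeding} are preserved verbatim. So the entire argument reduces to showing that \texttt{Local-Dissemination} can be implemented in $\tilO(\sqrt{k})$ rounds under local capacity $\lambda = \Theta(\sqrt{k})$.

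Next I would bound the total load any single local edge $\{u,v\}$ must carry throughout \texttt{Local-Dissemination}. Since there are only $k$ distinct tokens in the system, and once both endpoints of an edge know a given token all further transmissions of it are useless, each edge needs to forward at most $k$ tokens in total across the entire run. Recalling that \texttt{Local-Dissemination} is instantiated with $x = \Theta(\sqrt{k})$, the unconstrained flooding takes $\bigO(\sqrt{k}\log n)$ rounds, which I would take as the dilation of the associated broadcast task.

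I would then invoke a scheduling result in the spirit of \cite{Gha15}, Theorem~1.1: given a routing/broadcast task with per-edge congestion $c$ and dilation $d$, a schedule respecting an edge capacity of $\lambda$ per round can be produced in $\tilO(c/\lambda + d)$ rounds w.h.p. Fixing, for each token, an arbitrary BFS broadcast tree rooted at the node that seeded it reduces flooding to routing along fixed paths, with congestion $c \leq k$ per edge and dilation $d = \bigO(\sqrt{k}\log n)$. Substituting $\lambda = \Theta(\sqrt{k})$ yields total time $\tilO(k/\sqrt{k} + \sqrt{k}) = \tilO(\sqrt{k})$, matching the original bound up to $\polylog(n)$ factors.

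The main obstacle is the bridge from the ``send everything I know'' style of \Cref{alg:localDissemination} to a capacity-respecting schedule: concretely, I would modify the algorithm so that in each round each node forwards at most $\Theta(\sqrt{k})$ of its still-un-forwarded tokens over each incident local edge, prioritizing by the round a token was first learned, and then verify via the scheduling lemma that every node nonetheless receives every token within $\tilO(\sqrt{k})$ rounds w.h.p. Since the global subroutines are untouched, combining their bounds with the above gives the overall $\tilO(\sqrt{k} + \ell)$ runtime under $\lambda = \Theta(\sqrt{k})$, as claimed.
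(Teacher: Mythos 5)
Your proposal is correct and follows essentially the same route as the paper: identify \texttt{Local-Dissemination} as the only use of local edges, bound the per-edge congestion by $C\leq k$ and the dilation by $\tilO\big(\!\sqrt{k}\big)$, and invoke the congestion-plus-dilation scheduling lemma (\Cref{lem:balance_congestion}, adapted from \cite{Gha15}) to get $\tilO(k/\lambda+\sqrt{k})$ rounds, which is $\tilO\big(\!\sqrt{k}\big)$ for $\lambda=\Theta\big(\!\sqrt{k}\big)$. The only wording slip is ``a BFS tree rooted at \emph{the} node that seeded it'': each token has many seeds after \texttt{Token-Seeding}, and the $\tilO\big(\!\sqrt{k}\big)$ dilation relies on flooding from all of them (the paper instead decomposes into one per-token flooding sub-algorithm $\mathcal{A}_t$), but your stated dilation bound already reflects this, so the argument goes through.
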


\begin{proof}
	Note that the only instance where local edges are used is in the sub-procedure \Cref{alg:localDissemination}, where each node distributes the tokens it learns for the first time via its local edges. We aim to apply \Cref{lem:balance_congestion} in \Cref{apx:balance_congestion}. For each token $t$ let algorithm $\mathcal A_t$ be responsible for disseminating token $t$. That is, each $\mathcal A_t$ runs \Cref{alg:tokenDissemination} on each node but restricted to token $t$, i.e.\ it disseminates only token $t$ as soon as it learns $t$ for the first time (or knows it at the start of the algorithm). It is clear that running all $\mathcal A_t$ in parallel has the same outcome as \Cref{alg:localDissemination}.
	
	As any $\mathcal A_t$ sends $t$ at most twice over each edge (once from each endpoint) and since the algorithms $\mathcal A_t$ are obviously independent, they are simple. Each token has to travel at least \smash{$\Theta\big(\!\sqrt{k}\big)$} hops in $G$, thus we have \smash{$D = \Theta\big(\!\sqrt{k}\big)$} for the dilation. The maximum number of messages send over one edge in one round is $C=k$. Given a local capacity of $\lambda$, we can execute all $\mathcal A_t$ in time $O(C/\lambda + D + \log n) = \tilO(k/\lambda + \sqrt{k})$ using the method of \Cref{lem:balance_congestion}. Thus we can restrict can restrict ourselves to some $\lambda = \Theta\big(\!\sqrt{k}\big)$ and still maintain the same overall running time given in \Cref{thm:tokenDissemination}.
\end{proof}

  \section{All Pairs Shortest Paths}
\label{sec:apsp}
This section focuses on the APSP problem. We first show how to solve APSP exactly. Second, we show that a significant improvement in the time complexity is possible, if we restrict ourselves to approximations. Finally, we prove that the running times of our approximate algorithms are tight up to $\polylog n$ factors.

\subsection{Upper Bounds for Exact APSP}
\label{sec:apsp_exact}

In the following we present \Cref{alg:exactAPSP} and its subroutines and show their properties. Subsequently we prove \Cref{thm:APSP}.

\begin{algorithm}[H]	
	\caption{\texttt{Exact-APSP}  \Comment{$x \in [1..n]$} }	
	\label{alg:exactAPSP}
	\begin{algorithmic}
		\State \texttt{Construct-Skeleton($x$)} \Comment{\textit{construct subgraph $S=(M,E_S)$}}
		\State \texttt{Transmit-Skeleton} \Comment{\textit{transmit $S$ to all nodes via $G'$}}
		\State \texttt{Transmit-Distances} \Comment{\textit{transmit distances to close nodes in $S$}}
		\State compute APSP distances locally with \Cref{eq:exactAPSP}
	\end{algorithmic}			
\end{algorithm}

\begin{equation*}
D^G_{uv} = \min \Big(d_h(u,v), \min_{u'\!,v' \in M} \big( D'_{uu'} \!+ D^S_{u'v'} \!+ D'_{vv'}\big)\Big).\tag{\ref{eq:exactAPSP}}
\end{equation*}

First we construct the skeleton $S$ by sampling its nodes and then determining its edges via exploration on the local network. As byproduct of the latter, all nodes learn their $h$-hop neighborhood in $G$.

\begin{algorithm}[H]	
	\caption{\texttt{Construct-Skeleton($x$)}  \Comment{$x \!\in\! [1..n], h \!:=\! \xi x \ln n$}}
	\label{alg:constructskeleton}
	\begin{algorithmic}
		\State $v$ is marked with prob.\ $\tfrac{1}{x}$ \Comment{\textit{marked nodes form skeleton nodes $M$}}
		\For{$h$ rounds}\Comment{\textit{learn proximate nodes in $M$}}
		\State $v$ sends knowledge about $G$ it learned last round via local edges \Comment{\textit{initially $v$ sends its incident edges}}
		\EndFor
	\end{algorithmic}			
\end{algorithm}

\begin{fact}
	\label{fct:constructskeleton}
	The size of $M$ is \smash{$\tilO(n/x)$} w.h.p. Let $h := \xi x \ln n$. \Cref{alg:constructskeleton} establishes a weighted graph $S = (M,E_S)$ among the set of marked nodes $M$ in ${O}(x \log n)$ rounds, whereas we define $E_S \coloneqq \{ \{u,v\} \!\mid\! u,v\!\in\! M, \text{hop}(u,v) \!\leq\! h\}$. The weight of $\{u,v\} \in E_S$ is defined as $d_h(u,v)$.
	After the subroutine, all nodes $v \in V$ know all $u \in V$ that are within $h$ hops as well as the distances $d_h(u,v)$. Specifically, this means that all marked nodes know their neighbors in $S$ and the distances of the incident edges in $E_S$.
\end{fact}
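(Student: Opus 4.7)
The plan is to establish the three claims of the fact separately.

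First, I tackle the size bound $|M| = \tilO(n/x)$. Since each node is independently marked with probability $1/x$, we have $\E[|M|] = n/x$. A standard multiplicative Chernoff bound (in the same form already invoked in \Cref{sec:tokenDissemination}) then yields $|M| = O((n/x)\log n + \log n)$ w.h.p., which is $\tilO(n/x)$. The runtime bound is immediate from inspection: the for-loop iterates exactly $h = \xi x \ln n$ times, and since $\lambda = \infty$ each iteration constitutes a single round of the hybrid model.

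The core of the proof is the knowledge guarantee. I will establish by induction on $r$ the invariant that, after $r$ rounds of the loop, every node $v$ knows (i) the mark status of every node in its $r$-hop neighborhood $N_r(v)$ and (ii) every weighted edge of $G$ incident to at least one node of $N_r(v)$. The base case $r=0$ follows from the algorithm's initialization: each node knows whether it is marked and knows its own incident edges with their weights. For the inductive step, in round $r+1$ each node $v$ broadcasts everything it currently knows to its local neighbors. By the inductive hypothesis applied to each neighbor $u$, node $v$ receives the mark status of $N_r(u)$ and all edges incident to $N_r(u)$. Taking the union over $u \in N_1(v)$ exactly covers $N_{r+1}(v)$ and the edges incident to it, establishing the invariant.

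Applied at $r=h$, this invariant gives $v$ the full weighted induced subgraph on $N_h(v)$ together with the identity of all marked nodes within $h$ hops. Since any $u$-$v$ walk of at most $h$ hops stays inside $N_h(v)$ by the triangle inequality on hop-distance, $v$ can locally run Dijkstra's algorithm and recover $d_h(u,v)$ for every $u \in N_h(v)$. In particular, each $u \in M$ learns $M \cap N_h(u)$ together with the weights $d_h(u,v)$ for each such $v$, which are precisely its incident edges in the skeleton $S=(M,E_S)$.

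The argument is essentially a textbook $\LOCAL$-flooding analysis, enabled by the unbounded local capacity $\lambda = \infty$. The only mildly delicate point will be to formulate the induction invariant strongly enough to support the subsequent local Dijkstra computation, i.e., to track the entire induced subgraph on $N_r(v)$ (and incident boundary edges) rather than just the vertex set; everything else is a direct consequence of running $h$ rounds of unrestricted flooding.
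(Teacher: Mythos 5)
Your proof is correct and matches the paper's (implicit) justification: the statement is presented as a Fact without an accompanying proof, and your Chernoff bound for $|M|$, the loop-count argument for the runtime, and the flooding induction on $r$-hop neighborhoods are exactly the standard reasoning the paper relies on. One small imprecision: to recover the $h$-\emph{limited} distance $d_h(u,v)$ (rather than the unrestricted distance within $G[N_h(v)]$, which can be strictly smaller if a shorter path with more than $h$ hops lies entirely inside the neighborhood), the local computation should be $h$ iterations of Bellman--Ford rather than Dijkstra; since $v$ has full knowledge of the relevant subgraph, this does not affect the validity of the claim.
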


The following lemma shows that for nodes at sufficient hop-distance $\tilOm(h)$, there is a marked node every $\tilO(h)$ hops on some shortest path between those nodes.

\begin{lemma}
	\label{lem:longPathsM}
	Let $M$ be a subset of $V$ created by marking each of the nodes of $V$ with probability at least $\frac{1}{x}$. Then there is a constant $\xi \!>\! 0$, such that for any $u,v \!\in\! V$ with $hop(u,v) \!\geq\! \xi x \ln n$, there is at least one shortest path $P$ from $u$ to $v$, such that any sub-path $Q$ of $P$ with at least $\xi x \ln n$ nodes contains a node in $M$ w.h.p.
\end{lemma}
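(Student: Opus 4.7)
The plan is to run a standard concentration-and-union-bound argument, but with a small amount of care to handle the fact that the lemma promises the existence of \emph{one} shortest path with the stated property, not that all shortest paths have it.

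First I would fix, once and for all (independently of the random marking), a canonical shortest path $P_{uv}$ for every ordered pair $(u,v) \in V \times V$ with $hop(u,v) \geq \xi x \ln n$, breaking ties by (say) lexicographic order on the sequence of node identifiers. These canonical paths are determined purely by $G$ and the ID assignment, so the marking of $M$ is independent of the choice of $P_{uv}$. It therefore suffices to show that, with high probability, each $P_{uv}$ has the sub-path property simultaneously.

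Next I would reduce the statement ``every sub-path $Q$ of $P_{uv}$ with at least $h := \xi x \ln n$ nodes contains a node of $M$'' to a statement about \emph{windows of exactly $h$ consecutive nodes}: if every such window on $P_{uv}$ contains a marked node, then any longer sub-path trivially does as well. For a single window, since each node is independently marked with probability at least $1/x$, the probability that none of its $h$ nodes is marked is at most
\[
\Bigl(1-\tfrac{1}{x}\Bigr)^{h} \;\leq\; e^{-h/x} \;=\; e^{-\xi \ln n} \;=\; n^{-\xi}.
\]
A path $P_{uv}$ has at most $|P_{uv}| \leq n$ such windows, so a union bound gives that $P_{uv}$ violates the property with probability at most $n \cdot n^{-\xi} = n^{1-\xi}$.

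Finally, I would take a union bound over all (at most $n^2$) ordered pairs $(u,v)$ to conclude that the joint failure probability is at most $n^{3-\xi}$. Choosing $\xi$ to be a sufficiently large constant (namely $\xi \geq c+3$ for the desired w.h.p.\ exponent $c$) makes this polynomially small, establishing the lemma. I do not expect any serious obstacle here: the only mildly subtle point is the initial step of committing to a canonical $P_{uv}$ so that independence between ``which path we consider'' and ``which nodes are marked'' holds, and the only calculation is the standard $(1-1/x)^{\xi x \ln n} \leq n^{-\xi}$ bound together with two nested union bounds (over windows within a path, then over pairs).
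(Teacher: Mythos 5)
Your proposal is correct and follows essentially the same route as the paper's proof: fix a shortest path for each pair independently of the random marking, bound the probability that a length-$\xi x \ln n$ sub-path contains no marked node, and union bound over the (at most $n$) sub-paths per path and the (at most $n^2$) pairs. The only difference is cosmetic — you bound the empty-window probability directly via $(1-1/x)^{\xi x\ln n}\leq n^{-\xi}$, whereas the paper applies a Chernoff bound to the number of marked nodes on the sub-path; both yield the claim for a sufficiently large constant $\xi$.
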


\begin{proof}
	Let $u,v \in V$ with $hop(u,v) \!\geq\! \xi x \ln n$. Fix a shortest $u$-$v$-path $P_{u,v}$ and let $Q$ be a sub-path of $P_{u,v}$ with at least $\xi x \ln n$ nodes. Let $X_{u,v}$ be the random number of marked nodes on $Q$. Then we have \smash{$\mathbb{E}(X_{u,v}) \geq \frac{|Q|}{x} \geq {\xi \ln n}$}. Let $c > 0$ be arbitrary. We use a Chernoff bound:
	$$ \mathbb{P}\Big( X_{u,v} < \frac{\xi \ln n}{2}\Big) \leq \exp\Big(\!\!-\! \frac{\xi \ln n}{8}\Big) \stackrel{\xi \geq 8c}{\leq} \frac{1}{n^c}.$$
	Thus we have $X_{u,v} \geq 1$ w.h.p.\ for constant \smash{$\xi \geq \max(8c, 2 / \ln n)$}. Therefore the claim holds w.h.p.\ for the pair $u,v$. We claim that w.h.p.\ the event $X_{u,v} \geq 1$ occurs for all pairs $u,v \in V$ and for all sub-paths $Q$ of $P_{u,v}$ longer than $\xi x \ln n$ hops, for at least one shortest path $P_{u,v}$ from $u$ to $v$. There are at most \smash{$n^2$} many pairs $u,v \in V$. Moreover we can select at most $n$ sub-paths $Q$ of $P$ that do not fully contain any other selected sub-path. Hence the claim follows with the union bound given in \Cref{lem:unionbound}.
\end{proof}

Next we make the skeleton publicly known via token dissemination.

\begin{algorithm}[H]	
	\caption{\texttt{Transmit-Skeleton} \Comment{$h = \xi x \ln n$}}
	\label{alg:transmitskeleton}
	\begin{algorithmic}			
		\If{$v$ is marked}
		\State for $\{u,v\} \!\in\! E_S$ create token \smash{$t_{u,v} = \langle I\!D(u),I\!D(v),d_h(u,v)\rangle$}
		\EndIf
		\State \texttt{Token-Dissemination} \Comment{\textit{dissem.\ $E_S$ and weights of $E_S$}}
	\end{algorithmic}			
\end{algorithm}

\begin{lemma}
	\label{lem:distancesSkeleton}
	After \Cref{alg:transmitskeleton}, w.h.p.\ every node knows the skeleton $S$ and has  sufficient information to locally compute a distance matrix $D^S$ of $S$, with $D^S_{uv} = D^G_{uv}$ for all $u,v \in M$ (where $D^G$ denotes the true distance matrix of $G$), if $h = \xi x \ln n$ for appropriately chosen constant $\xi$. \Cref{alg:transmitskeleton} takes $\tilO\big(\frac{n}{x}\big)$ rounds.
\end{lemma}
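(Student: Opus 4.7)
I would split the argument into a correctness part (every node can reconstruct $S$ and the skeleton distances coincide with the true distances on pairs in $M$) and a runtime part (the token dissemination fits in $\tilO(n/x)$ rounds).

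For the runtime, I would bound the two parameters entering Theorem~\ref{thm:tokenDissemination}. By Fact~\ref{fct:constructskeleton}, $|M| = \tilO(n/x)$ w.h.p., so the total number of tokens satisfies $k \le |E_S| \le |M|^2 = \tilO(n^2/x^2)$, and each marked node creates at most $|M|-1 = \tilO(n/x)$ tokens, so $\ell = \tilO(n/x)$. Each token fits in $O(\log n)$ bits since weights and hop-distances are polynomial in $n$. Theorem~\ref{thm:tokenDissemination} then yields a runtime of $\tilO(\sqrt{k} + \ell) = \tilO(n/x)$ w.h.p., and after its completion every node knows all of $E_S$ together with the weights $d_h(u,v)$ of every edge, hence possesses the full weighted graph $S$ and can locally compute its distance matrix $D^S$ using any centralized shortest-paths algorithm.

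For the identity $D^S_{uv} = D^G_{uv}$ on $u,v \in M$, I would prove two inequalities. The direction $D^S_{uv} \ge D^G_{uv}$ is easy: every edge $\{u',v'\} \in E_S$ has weight $d_h(u',v')$, which is the length of an actual $u'$-to-$v'$ walk in $G$, so any $u$-$v$-path in $S$ can be concatenated into a $u$-$v$-walk in $G$ of the same length. The reverse inequality $D^S_{uv} \le D^G_{uv}$ is the main obstacle and relies on Lemma~\ref{lem:longPathsM}. I would choose $\xi$ to be the constant from that lemma and, for any $u,v \in M$, fix a shortest $u$-$v$-path $P$ for which every sub-path of at least $h = \xi x \ln n$ hops contains a node of $M$. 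If $\text{hop}(u,v) \le h$ then $\{u,v\} \in E_S$ with weight $d_h(u,v) = d_G(u,v)$ and we are done. Otherwise, I would walk along $P$ starting at $u$ and greedily cut $P$ into segments: whenever the current segment is about to exceed $h$ hops, the lemma guarantees a node of $M$ within its last $h$ hops, which becomes the endpoint of the current segment and the start of the next one. This produces a sequence $u = w_0, w_1, \dots, w_\ell = v$ of marked nodes such that consecutive $w_{i-1}, w_i$ are joined by a sub-path of $P$ of at most $h$ hops. Since that sub-path is itself a shortest $w_{i-1}$-$w_i$-path in $G$, its length equals $d_h(w_{i-1}, w_i)$, which is precisely the weight of the corresponding edge in $E_S$. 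Summing the edge weights along $w_0, \dots, w_\ell$ gives a path in $S$ of total weight $w(P) = D^G_{uv}$, establishing $D^S_{uv} \le D^G_{uv}$.

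The main obstacle is the careful segmentation argument above; in particular, one must ensure that (i) the segmentation actually terminates at $v$, which is immediate because we can always declare $v$ as the final endpoint even if the last segment is shorter than $h$, and (ii) the event that \emph{all} pairs $u,v \in M$ admit such a decomposition holds simultaneously, which is precisely the w.h.p.\ guarantee of Lemma~\ref{lem:longPathsM}. Together with the w.h.p.\ bound $|M| = \tilO(n/x)$ from Fact~\ref{fct:constructskeleton} and the w.h.p.\ guarantee of the token dissemination, the overall statement holds w.h.p.\ by a union bound.
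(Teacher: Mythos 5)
Your proposal is correct and follows essentially the same route as the paper: the runtime is obtained by bounding $k\le |M|^2$ and $\ell\le |M|$ and invoking \Cref{thm:tokenDissemination}, and the equality $D^S_{uv}=D^G_{uv}$ is derived from \Cref{lem:longPathsM} by decomposing a shortest path into segments of at most $h$ hops between consecutive marked nodes. Your write-up is in fact more explicit than the paper's (which compresses the segmentation into the observation that a marked node appears every $h$ hops and that $S$ inherits connectivity), and your careful statement of both inequalities is a welcome addition rather than a deviation.
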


\begin{proof}
	First we point out that every marked node $v \in M$ knows $d_h(u,v)$ for all $u \in V$ due to Fact \ref{fct:constructskeleton} (recall that we set $d_h(u,v) := \infty$ if $hop(u,v)>h$) and is thus able to create the tokens described in the algorithm. Each marked node creates at most $\ell = |M|$ tokens of size ${O}(\log n)$ (recall that weights are polynomially bounded in $n$). The total number of created tokens is at most $k = |M|^2$. By \Cref{thm:tokenDissemination}, \Cref{alg:transmitskeleton} takes $\tilO\big(|M|\big) = \tilO(n/x)$ rounds. After the token dissemination \textit{every} node knows \textit{every} edge $\{u,v\} \in E_S$ as well as its weight, defined as $h$-limited distance $d_h(u,v)$ (in Fact \ref{fct:constructskeleton}).
	
	Let $h := \xi x \ln n$ (where $\xi$ is the constant from \Cref{lem:longPathsM}) and let $u,v \in M$. If there is a shortest $u$-$v$-path $P$ with $|P| \leq h$, then obviously the weight $d_h(u,v)$ of the skeleton edge $\{u,v\} \in E_S$ equals $D^G_{uv}$ (let us denote this fact with (i)). Otherwise $|P| > h$ for any shortest $u$-$v$ path $P$. Then \Cref{lem:longPathsM} implies that w.h.p., within every $h$ hops of $P$ there must be at least one marked node (we denote this fact with (ii)). This entails that $S$ is connected if $G$ is connected; and $G$ is connected by definition (let this fact be (iii)).
	
	From (i),(ii) and (iii) we deduce that every node can compute $D^S$ by locally solving APSP on $S$.
\end{proof}

It remains to transmit the distances between skeleton nodes and non-skeleton nodes.

\begin{algorithm}[H]	
	\caption{\texttt{Transmit-Distances} \Comment{$h = \xi x \ln n$}}
	\label{alg:transmitdist}
	\begin{algorithmic}			
		\If{$v$ not marked} 
		\State for each $u \!\in\! M$ create token \smash{$t_{u,v}=\langle ID(u),ID(v),d_h(u,v)\rangle$}
		\EndIf
		\State \texttt{Token-Dissemination} \Comment{\textit{dissem.\ all $h$-limited dist.\ $d_h(u,v)$}}
	\end{algorithmic}			
\end{algorithm}

\begin{fact}
	\label{fct:transmitdist}
	\Cref{alg:transmitdist} disseminates $d_h(u,v)$ for all $u \in M$ and $v \in V \!\setminus\! M$ to all nodes in the network (recall that we define $d_h(u,v) \!:=\! \infty$ if $hop(u,v)>h$). The $h$-limited distances $d_h(u,v)$ are known to $v$ due to Fact \ref{fct:constructskeleton}. Each node creates at most $\ell = |M|$ tokens (of size $O (\log n)$ bits), thus there are at most $k = n|M|$ tokens in total. Due to \Cref{thm:tokenDissemination}, \Cref{alg:transmitdist} takes \smash{$\tilO\big(\!\sqrt{n|M|}\,\big)$} rounds. Since \smash{$|M| \in \tilO\big(\frac{n}{x}\big)$} w.h.p., this translates into a running time of \smash{$\tilO\big(n/\!\sqrt{x}\,\big)$} rounds.
\end{fact}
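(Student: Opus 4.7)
The plan is to verify each component of Fact \ref{fct:transmitdist} in order: first that the tokens can actually be formed by the unmarked nodes, then bound the per-node and global token counts, and finally invoke the token dissemination theorem with the stated parameters.

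First I would confirm the input requirement. By Fact \ref{fct:constructskeleton}, after \texttt{Construct-Skeleton($x$)} every node $v\in V$ knows the identity of all nodes $u\in V$ within $h$ hops together with the $h$-limited distances $d_h(u,v)$ (with the convention $d_h(u,v)=\infty$ when $\mathrm{hop}(u,v)>h$, in which case nothing useful needs to be broadcast, but for uniformity we can either omit infinite-distance tokens or represent $\infty$ succinctly in $O(\log n)$ bits since weights are polynomially bounded). Moreover, by Fact \ref{fct:constructskeleton} and \Cref{lem:distancesSkeleton}, the set $M$ and its membership are known to every node after \texttt{Transmit-Skeleton}, so each unmarked $v$ can decide for which marked $u\in M$ to create the triple $\langle \id(u),\id(v),d_h(u,v)\rangle$. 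Each such triple fits into $O(\log n)$ bits because identifiers lie in $\{1,\dots,n\}$ and weights, hence distances, are polynomially bounded in $n$.

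Next I would count tokens. Every marked node creates no tokens in this step; every unmarked node $v$ creates exactly one token per $u\in M$, so $\ell \le |M|$ tokens per node. Summing over the $n-|M|\le n$ unmarked nodes gives $k\le n|M|$ tokens in total, each of $O(\log n)$ bits. These are precisely the parameters required by the $(k,\ell)$-TD primitive.

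Now I would apply \Cref{thm:tokenDissemination} to the instance $(k,\ell)=(n|M|,|M|)$: the protocol solves it in $\tilde O(\sqrt{k}+\ell)=\tilde O(\sqrt{n|M|}+|M|)$ rounds, which simplifies to $\tilde O(\sqrt{n|M|})$ since $|M|\le n$ implies $|M|\le \sqrt{n|M|}$. Finally, by Fact \ref{fct:constructskeleton} we have $|M|\in\tilde O(n/x)$ with high probability, so substitution yields
\[
\tilde O\bigl(\sqrt{n\cdot n/x}\bigr)=\tilde O\bigl(n/\sqrt{x}\bigr)
\]
rounds w.h.p., matching the stated bound. Combining this round complexity with the correctness of token dissemination (every created token is learned by every node) gives the claim that after \Cref{alg:transmitdist} every node knows $d_h(u,v)$ for all $u\in M$ and $v\in V\setminus M$.

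The only potential subtlety, and the step most deserving of care, is ensuring the hypotheses of \Cref{thm:tokenDissemination} really are met with the bounds claimed: namely that $\ell$ is a deterministic upper bound on the per-node input (which holds because $|M|$ is an upper bound on tokens per unmarked node regardless of randomness, and the high-probability estimate $|M|=\tilde O(n/x)$ is only used at the very end to simplify the expression), and that the $\tilde O(\sqrt{k}+\ell)$ bound is invoked with $k=n|M|$ rather than the deterministic worst case $n^2$. Since $|M|=\tilde O(n/x)$ holds w.h.p.\ from Fact \ref{fct:constructskeleton}, the two high-probability events (skeleton size and token dissemination success) combine via a union bound to give the overall $\tilde O(n/\sqrt{x})$ guarantee w.h.p.
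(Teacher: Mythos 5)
Your proposal is correct and follows essentially the same argument the paper uses: bound the tokens per node by $\ell=|M|$ and the total by $k=n|M|$, invoke \Cref{thm:tokenDissemination} to get $\tilO\big(\!\sqrt{n|M|}\,\big)$ rounds, and then substitute $|M|=\tilO(n/x)$ w.h.p. Your added remarks on handling $\infty$-distance tokens and on when the high-probability bound on $|M|$ is actually used are fine clarifications but do not change the route.
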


\begin{proof}[Proof of \Cref{thm:APSP}]
	
	After the first two subroutines of \Cref{alg:constructskeleton}, due to \Cref{lem:distancesSkeleton}, every node knows $S$ and can locally compute the distance Matrix $D^S$ among all nodes in $M$. 
	Additionally, based on the $h$-limited distances disseminated by \Cref{alg:transmitdist} as described in Fact \ref{fct:transmitdist}, every node can locally compute the matrix $$D' \coloneqq \big(d_h(u,v)\big)_{u \in V\setminus M, v \in M}.$$ 
	
	Let $u,v \in V$. If there exists a shortest $u$-$v$-path that has at most $h$ hops, then $d(u,v) = d_h(u,v)$ which both $u$ and $v$ already know due to the local exploration conducted in \Cref{alg:constructskeleton} (c.f. Fact \ref{fct:constructskeleton}). Otherwise, we infer from \Cref{lem:longPathsM} that w.h.p., there is shortest $u$-$v$-path $P$ with two marked nodes $u',v' \in M$ with $hop(u,u'), hop(v,v') \leq h$ (possibly $u'=v'$). We deduce
	$$d(u,v) = d_h(u,u') + d(u',v') + d_h(v',v) \stackrel{(*)}{=} D'_{uu'} \!+ D^S_{u'v'} \!+ D'_{vv'}.$$
	Where $(*)$ is due to \Cref{lem:distancesSkeleton}. Hence every node can locally compute the complete distance matrix $D^G$ of $G$ as follows (we set $D'_{uv} = 0$ if $u,v \in M$):
	$$D^G_{uv} = \min \Big(d_h(u,v), \min_{u'\!,v' \in M} \big( D'_{uu'} \!+ D^S_{u'v'} \!+ D'_{vv'}\big)\Big).$$
	The total running time is $\tilO(x)+ \tilO\big(n/\!\sqrt{x}\big)$ due to Fact \ref{fct:constructskeleton}, \Cref{lem:distancesSkeleton} and Fact \ref{fct:transmitdist}. This is optimized for \smash{$x \in \Theta\big( n^{2/3}\big)$}.
\end{proof}

Finally, we show that local capacity $\lambda = \Theta(n^{4/3})$ suffices to solve APSP exactly in the claimed time.

\begin{lemma}
	\label{lem:exactAPSPcongestion}
	\Cref{alg:exactAPSP} works in the same time $\tilO(n^{2/3})$ for local capacity $\lambda = \Theta(n)$.
\end{lemma}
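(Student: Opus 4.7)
The plan is to pinpoint where Algorithm~\ref{alg:exactAPSP} actually stresses the local edges and show that a standard Bellman--Ford implementation suffices, while the remaining subroutines are handled by the congestion-aware version of token dissemination from Lemma~\ref{lem:congestion_token_dissem}. First I would observe that the only subroutine in which local edges are used in a meaningful way is \texttt{Construct-Skeleton} (Algorithm~\ref{alg:constructskeleton}); both \texttt{Transmit-Skeleton} and \texttt{Transmit-Distances} interact with local edges exclusively through invocations of \texttt{Token-Dissemination}, and the closing step performs purely local computation.

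For \texttt{Construct-Skeleton}, the abstract description ``send the knowledge about $G$ learned in the last round'' is wasteful, since the algorithm only needs the $h$-limited distances $d_h(u,v)$ to nodes within $h$ hops (which is exactly what Fact~\ref{fct:constructskeleton} requires, and in particular what \texttt{Transmit-Skeleton} and \texttt{Transmit-Distances} consume in the form of tokens $\langle ID(u),ID(v),d_h(u,v)\rangle$). I would therefore replace this step by $h$ rounds of synchronous distributed Bellman--Ford: every node $v$ maintains a table of current estimates $\hat d(v,u)$, initialized to $0$ for $u=v$, $w(v,u)$ for local neighbors, and $\infty$ otherwise, and in each round $v$ sends over each incident local edge its current set of finite labels and performs the usual relaxation. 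After $h$ rounds, $\hat d(v,u) = d_h(v,u)$ for every $u\in V$, which matches what Fact~\ref{fct:constructskeleton} demands. Since at any time a node holds at most $n$ finite labels, each local edge carries at most $n$ messages of $O(\log n)$ bits per round, hence $\lambda = \Theta(n)$ is sufficient for this subroutine, and its round complexity remains $O(h) = \tilO(x) = \tilO(n^{2/3})$.

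The hard part to verify is that the two token dissemination calls comfortably fit under the same budget. For \texttt{Transmit-Skeleton} we have $k \le |M|^2 = \tilO(n^2/x^2) = \tilO(n^{2/3})$ tokens (w.h.p.), so Lemma~\ref{lem:congestion_token_dissem} gives the same runtime for $\lambda = \Theta(\sqrt{k}) = \tilO(n^{1/3})$. For \texttt{Transmit-Distances} we have $k \le n|M| = \tilO(n^2/x) = \tilO(n^{4/3})$ tokens, so Lemma~\ref{lem:congestion_token_dissem} gives the runtime for $\lambda = \Theta(\sqrt{k}) = \tilO(n^{2/3})$. Both bounds lie well below $\Theta(n)$, so the Bellman--Ford step is the binding constraint. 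Concatenating the three phases and invoking the local-capacity guarantees of each yields the claim: Algorithm~\ref{alg:exactAPSP} runs in $\tilO(n^{2/3})$ rounds with $\lambda = \Theta(n)$, without changing the global-edge analysis from the proof of Theorem~\ref{thm:APSP}.
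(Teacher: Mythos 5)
Your proof is correct and follows essentially the same route as the paper: replace the local exploration in \texttt{Construct-Skeleton} by $h$ rounds of distributed Bellman--Ford (giving $\lambda=\Theta(n)$) and bound the congestion of the two token-dissemination calls via \Cref{lem:congestion_token_dissem}, with \texttt{Transmit-Distances} at $\lambda=\tilT(n^{2/3})$ being the larger of the two, so that Bellman--Ford is the binding constraint. The only detail the paper adds is that nodes must also learn \emph{which} of their $h$-hop neighbors are marked (needed by \Cref{fct:constructskeleton} so that skeleton nodes can create the tokens for $E_S$), which it handles by piggy-backing a single bit onto each Bellman--Ford label without changing the congestion bound.
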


\begin{proof}
	Note that we require the local edges in subroutine \Cref{alg:constructskeleton}: \texttt{Construct-Skeleton($x$)} only to learn the $h$-limited distances to all nodes in an $h$-hop neighborhood as well as all marked nodes in the $h$-hop neighborhood in $\tilO(h) = \tilO(x)$ rounds. We can learn the former by running the distributed Bellman-Ford algorithm for APSP for $h$ rounds, which requires $\lambda = 2n$ (as we show in \Cref{lem:congestion_bellman_ford}, for completeness). The information whether a node is marked or not can picky-back on the messages of the Bellman-Ford algorithm without producing additional congestion (messages are still of size $O(\log n)$).
	
	Moreover we use local edges implicitly in \Cref{alg:transmitskeleton}: \texttt{Transmit-Skeleton} and \Cref{alg:transmitdist}: \texttt{Transmit-Distances}, where we call the token dissemination subroutine. Most congestion on edges is caused by the latter, where we have to disseminate $n|M| = \tilO(n^2/x) = \tilO\big(n^{4/3}\big)$ tokens in time $\tilO\big(n^{2/3}\big)$. In \Cref{lem:congestion_token_dissem} we show that local capacity $\lambda = \tilT(n^{2/3})$ suffices for this.	
\end{proof}

\subsection{Upper Bounds for Approximate APSP}
\label{sec:apsp_approx}

Besides slightly adapted procedures \texttt{Transmit-Closest} and \texttt{Construct-Skeleton'}\hspace*{-1mm}, \Cref{alg:approxAPSP} uses the same subroutines as \Cref{alg:exactAPSP} to construct and disseminate the skeleton $S$ and then determine its edges with a local search via the physical edges. In the following we briefly explain the (minor) changes of the subroutines of \Cref{alg:approxAPSP}. Subsequently and more importantly, we prove that the approximate distance matrix $\tilde{D}^G_{uv}$ in fact meets the claimed properties.

\begin{algorithm}[H]	
	\caption{\texttt{Approximative-APSP}  \Comment{$x \in [1..n]$} }	
	\label{alg:approxAPSP}
	\begin{algorithmic}
		\State \texttt{Construct-Skeleton'\!($x$)} \Comment{\textit{construct subgraph $S=(M,E_S)$}}
		\State \texttt{Transmit-Skeleton} \Comment{\textit{transmit $S$ to all nodes via $G'$}}
		\State \texttt{Transmit-Closest} \Comment{\textit{transmit distances to close nodes in $S$}}
		\State approximate APSP distances locally with \Cref{eq:approxAPSP}
	\end{algorithmic}			
\end{algorithm}

\begin{equation*}
\tilde{D}^G_{uv} = \min\Big(d_m(u,v), \min_{u'\in M} \big(d_h(u,u') + D^S_{u'v'}\big) + d_{vv'}\Big). \tag{\ref{eq:approxAPSP}}
\end{equation*}

As a slight adaption over the exact variant, we conduct a local exploration (\Cref{alg:constructskeletonprime}) up to hop-distance $m = \max\!\big(h,\frac{n}{h}\big)$ (instead of $h$). This allows us to use the same algorithm to compute a 3-approximation for the weighted case in \smash{$\tilO\big(\!\sqrt{n}\big)$} rounds and a $(1\!+\!\varepsilon)$-approximation in \smash{$\tilO\big(\!\sqrt{n /\varepsilon}\big)$} rounds for the unweighted case. For the latter we prove a slightly more general variant, where we get a $(1\!+\!\varepsilon)$-approximation for weighted graphs in \smash{$\tilO\big(\!\sqrt{n W/\varepsilon}\big)$} rounds, where $W := W_{\text{max}}/W_{\text{min}}$ is the (potentially large) ratio of maximum to minimum weight. The unweighted case $W=1$ is a direct corollary.

\begin{algorithm}[H]	
	\caption{\texttt{Construct-Skeleton'}($x$)  \Comment{$x \!\in\! [1..n], h \!:=\! \xi x \ln n$}}
	\label{alg:constructskeletonprime}
	\begin{algorithmic}
		\State $v$ is marked with prob.\ $\tfrac{1}{x}$ \Comment{\textit{marked nodes form nodes $M$ of skeleton}}
		\For{$m = \max\!\big(h,\frac{n}{h}\big)$ rounds}\Comment{\textit{learn proximate nodes in $M$}}
		\State $v$ sends knowledge about $G$ it learned last round via local edges \Comment{\textit{initially $v$ sends its incident edges}}
		\EndFor
	\end{algorithmic}			
\end{algorithm}

\begin{fact}
	\label{fct:constructskeletonprime}
	By performing \Cref{alg:constructskeletonprime} all nodes learn all information described in Fact \ref{fct:constructskeleton}. Additionally all nodes $u \in V$ learn the $m$-limited distances $d_{m}(u,v)$ to all nodes $v \in V$, where $m := \max(h,n/h)$. The running time is $O(\max(h,n/h)) = \tilO(\max(x,n/x))$ (recall $h \!=\! \xi x \ln n$).
\end{fact}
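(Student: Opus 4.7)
The plan is to observe that Algorithm~\ref{alg:constructskeletonprime} is simply a flooding procedure exploiting the uncapacitated local edges ($\lambda = \infty$): every node repeatedly broadcasts all graph knowledge it has accumulated so far to all its neighbors for $m$ rounds. I will prove by induction on the round number $r$ that after $r$ iterations, every node $v$ knows the complete subgraph $G[N_r(v)]$ of $G$ induced by its $r$-hop neighborhood $N_r(v)$, including the weight of each edge and the marking status of each node therein.

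The base case $r = 0$ is immediate: each node initially knows its incident edges (hence itself and each neighbor), and its own marking status is assigned locally. For the inductive step, after round $r+1$ node $v$ has received from every neighbor $u$ the set $G[N_r(u)]$; taking the union and combining it with $v$'s own incident edges yields $G[N_{r+1}(v)]$, since any node at hop-distance $r+1$ from $v$ is at hop-distance $r$ from some neighbor $u$ of $v$. The broadcasts are feasible in a single round because local edges have unbounded capacity in the model under consideration.

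After the loop terminates, every node $v$ knows $G[N_m(v)]$ for $m = \max(h, n/h)$. From this subgraph $v$ can locally compute $d_m(u,v)$ for every $u \in V$: if $u \in N_m(v)$ the value is determined by a local shortest-path computation, otherwise $d_m(u,v) = \infty$ by definition. Since $m \geq h$, the $m$-hop knowledge subsumes the $h$-hop knowledge, so every node learns the $h$-limited distances $d_h(u,v)$ to all $u \in V$, and in particular each marked node identifies its incident skeleton edges in $E_S$ together with their weights $d_h(u,v)$. This yields all the conclusions of Fact~\ref{fct:constructskeleton} as well as the additional knowledge of $m$-limited distances claimed in the statement.

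The round complexity is exactly $m = \max(h, n/h)$ by construction of the loop, which, using $h = \xi x \ln n$, gives the stated bound $\tilO(\max(x, n/x))$. No genuine obstacle arises: the only subtlety worth flagging is that the argument relies on $\lambda = \infty$ so that an entire $r$-hop neighborhood can be shipped across a single local edge in one round; the companion discussion on local congestion (analogous to Lemma~\ref{lem:exactAPSPcongestion}) would need to quantify how large $\lambda$ has to be to support this flooding, but for the purpose of this Fact the model's permissive local capacity renders the analysis straightforward.
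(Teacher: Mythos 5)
Your proof is correct and matches the reasoning the paper leaves implicit: the statement is presented as a Fact without proof, being the standard \LOCAL-model flooding argument that after $r$ rounds of forwarding newly learned graph information each node knows its $r$-hop neighborhood (edges, weights, and marking status) and can therefore compute the $r$-limited distances locally, with the round count $m=\max(h,n/h)$ read off directly from the loop. One small bookkeeping nitpick: your inductive invariant that $v$ learns exactly the induced subgraph $G[N_{r+1}(v)]$ as $\bigcup_u G[N_r(u)]$ over neighbors $u$ can miss an edge whose two endpoints are both at hop-distance exactly $r+1$ from $v$; the cleaner (and equally easy) invariant is that after $r$ rounds $v$ knows every edge incident to a node within $r$ hops of $v$, which is all that is needed to compute $d_m(u,v)$ for every $u$.
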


\begin{algorithm}[H]	
	\caption{\texttt{Transmit-Closest} \Comment{$h \!=\! \xi x \ln n$}}
	\label{alg:transmitclosest}
	\begin{algorithmic}			
		\If{$v$ not marked}
		\State $v' \gets \argmin_{w \in M} d_h(v,w)$ \Comment{\textit{node $v' \in M$ closest to $v$}}
		\State $d_{vv'} \gets d_h(v,v')$ \Comment{\textit{distance from $v$ to closest node $v' \in M$}}
		\State create token \smash{$t_{v'\!,v}=\langle ID(v),ID(v'),d_{vv'}\rangle$}
		\EndIf
		\State \texttt{Token-Dissemination} \Comment{\textit{disseminate distances $d_{vv'}$}}
	\end{algorithmic}
\end{algorithm}

\begin{fact}
	\label{fct:transmitclosest}
	Through \Cref{alg:transmitclosest} all nodes learn which node $v' \in M$ is closest to any given $v \in V\setminus M$ as well as the distance $d_{vv'}$ between $v$ and $v'$. Since each node creates only one token, there are at most $n$ tokens in total. Due to \Cref{thm:tokenDissemination}, \Cref{alg:transmitclosest} takes \smash{$\tilO\big(\!\sqrt{n}\big)$} rounds w.h.p.
\end{fact}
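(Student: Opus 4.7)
The plan is to verify correctness of the local computation at each non-marked node and then reduce the runtime bound to an invocation of \Cref{thm:tokenDissemination}.

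First, each non-marked $v$ must compute $v' = \argmin_{w \in M} d_h(v,w)$ and $d_{vv'}$ locally, with no further communication. By Fact \ref{fct:constructskeletonprime}, after \texttt{Construct-Skeleton'}($x$) every node $v$ already knows $d_h(v,w)$ for each $w$ within $h$ hops, and whether each such $w$ is marked can be propagated alongside the hop-limited distance labels during that local exploration without exceeding $O(\log n)$ bits per message. To make sure the argmin is taken over a nonempty set, I would show that w.h.p.\ every $v \in V$ has at least one marked node within its $h$-hop neighborhood: since $G$ is connected and $n \gg h$, that neighborhood contains at least $h = \xi x \ln n$ distinct nodes besides $v$, each marked independently with probability $1/x$, so for a sufficiently large constant $\xi$ a Chernoff bound followed by a union bound over all $n$ choices of $v$ yields the claim, in the same spirit as the proof of \Cref{lem:longPathsM}. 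Consequently $v'$ is well-defined and $d_{vv'} = d_h(v,v') = d(v,v')$, since the witnessing path has at most $h$ hops.

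Next, I would count tokens and invoke the dissemination primitive: marked nodes create no tokens, and each of the at most $n$ unmarked nodes creates exactly one $O(\log n)$-bit token $t_{v',v}$, so the instance has $k \leq n$ tokens with initial load $\ell = 1$. Plugging these parameters into \Cref{thm:tokenDissemination} yields a runtime of $\tilO(\sqrt{k} + \ell) = \tilO(\sqrt{n})$ w.h.p., and on completion every node holds every triple $\langle ID(v), ID(v'), d_{vv'}\rangle$, which is precisely the claim.

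The only step not immediately reducible to an earlier result is the Chernoff/union-bound guarantee that the $h$-hop ball of every node contains a marked node; this is the mildest of obstacles, so I expect the proof itself to be quite short, essentially three short paragraphs mirroring the three bullets above.
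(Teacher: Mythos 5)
Your proposal is correct and matches the paper's (essentially proofless) justification: the paper states this as a Fact whose content is exactly your second paragraph — each unmarked node creates one token, so $k\leq n$ and $\ell=1$, and \Cref{thm:tokenDissemination} gives $\tilO(\sqrt{n})$ rounds; your added Chernoff/union-bound check that every $h$-hop ball contains a marked node is a harmless strengthening in the spirit of \Cref{lem:longPathsM}. One small caveat: your aside that $d_{vv'}=d_h(v,v')=d(v,v')$ is not needed for the Fact and is not always true (a shorter $v$-$v'$ path with more than $h$ hops may exist), which is why the paper only ever uses $d(v,v')\leq d_h(v,v')$ in \Cref{lem:approxAPSP}.
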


In order to prove the approximation ratios claimed at the beginning of this section, we give a number of notations and we call upon the reader to consult \Cref{fig:apsp_approx} for a graphic overview. Assume that \Cref{alg:approxAPSP} has terminated. Let $u,v \in V$ be a pair of nodes, for which all shortest $u$-$v$-paths have more than $h$ hops. Let $P$ be the shortest $u$-$v$-path that minimizes $hop(w,v)$, where $w \in M$ is the marked node on $P$ that is closest to $v$. We denote the sub-path of $P$ from $w$ to $v$ with $Q$. From \Cref{lem:longPathsM} we know that $|Q| \leq h$ w.h.p. 

Furthermore let $v' \in M$ be the node that minimizes $d_h(v,v')$, which corresponds to $d_{vv'}$ and is known to all nodes due to Fact \ref{fct:transmitclosest}. Let $O$ be a shortest $u$-$v'$-path. In case there are several, let $O$ be the shortest $u$-$v'$-path that has a marked node on every sub-path with at least $h$ hops (which exists w.h.p.\ due to \Cref{lem:longPathsM}). Let $u'$ be the marked node on $O$ closest to $u$, i.e., $hop(u,u') \leq h$. Note that $u' = v'$ is possible. Let $R$ be the path compounded of $O$ and a shortest path from $v'$ to $v$. The next Lemma shows that for $u$, $v$ with $hop(u,v) \geq h$ the result of \Cref{eq:approxAPSP} is $w(R)$ w.h.p.

\begin{figure}[H]
	\centering
	\includegraphics[width=0.6\linewidth]{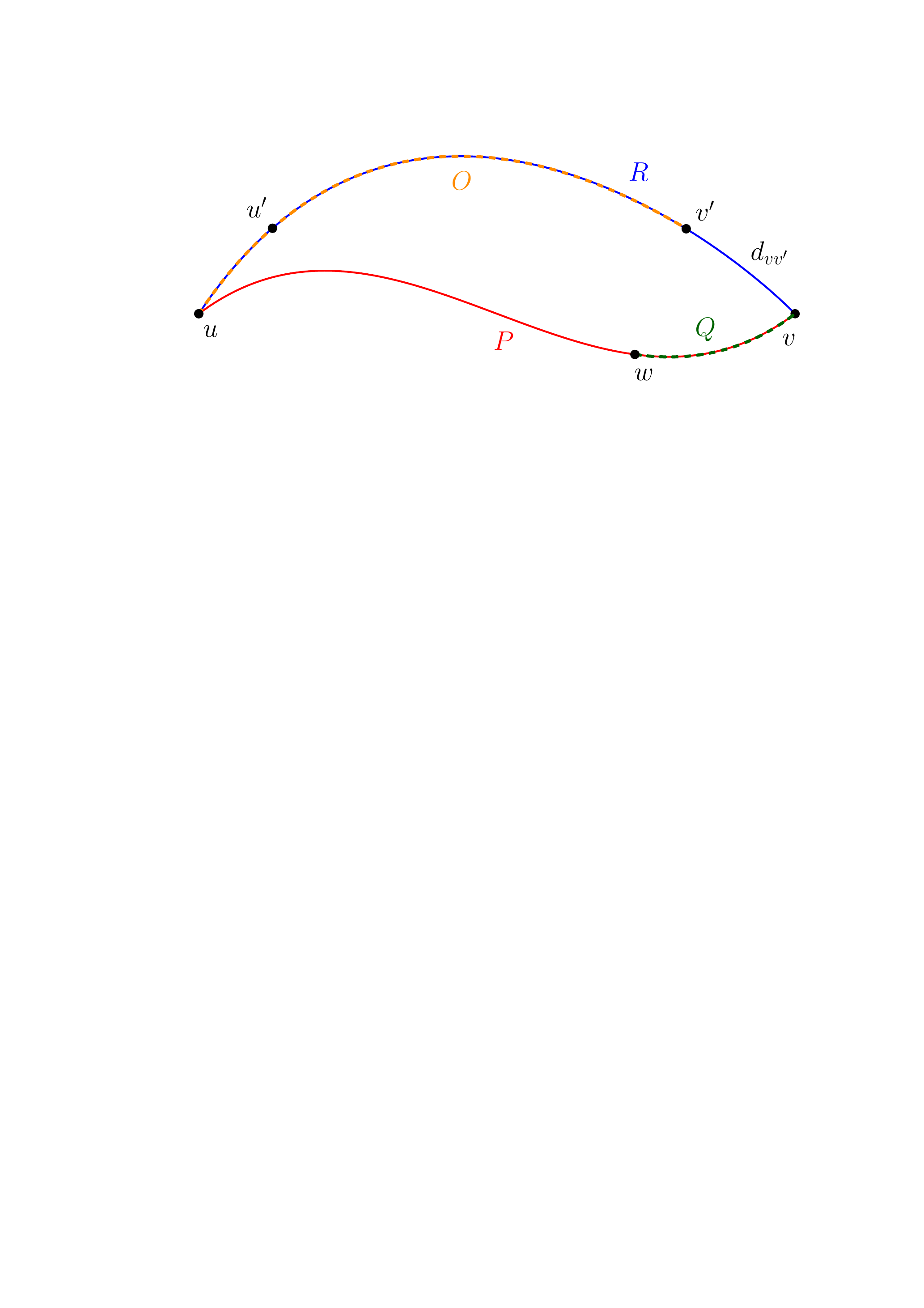}
	\caption{\boldmath Illustration of the given notations.}
	\label{fig:apsp_approx}
\end{figure}

\begin{lemma}
	\label{lem:computeR}
	Let $u,v \in V$ such that all shortest $u$-$v$-paths have more than $h$ hops and let path $R$ be defined as above. Then $w(R) = \tilde D^G_{uv}$ w.h.p.
\end{lemma}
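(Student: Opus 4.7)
The proof plan splits into the standard two-direction strategy for a $\min$-equality: establishing $\tilde{D}^G_{uv} \leq w(R)$ and $\tilde{D}^G_{uv} \geq w(R)$ separately.

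For $\tilde{D}^G_{uv} \leq w(R)$, I would instantiate the outer minimum at the specific $u'$ from the lemma's setup and show that this choice yields exactly $w(R)$. Decomposing $w(R) = d(u,u') + d(u',v') + d(v,v')$ via $R = O \cup P(v' \to v)$ along the marked node $u' \in O$, the plan is to match summands one-to-one with $d_h(u,u') + D^S_{u'v'} + d_{vv'}$. The first identity $d_h(u,u') = d(u,u')$ follows from $hop(u,u') \leq h$ w.h.p.\ by \Cref{lem:longPathsM} applied to $O$ (which was selected so that a marked node occurs within every $h$-hop window, with $u'$ being the first such node from $u$). The second $D^S_{u'v'} = d(u',v')$ is \Cref{lem:distancesSkeleton}. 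The third $d_{vv'} = d_h(v,v') = d(v,v')$ uses that $v$ has a marked node within $h$ hops w.h.p.\ by \Cref{lem:longPathsM}, together with $v'$ minimizing $d_h(v,\cdot)$ over $M$. Summing the three yields $w(R)$.

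For $\tilde{D}^G_{uv} \geq w(R)$, I would bound each of the two branches of the outer $\min$ by $w(R)$ separately. For the skeleton branch, for any $u' \in M$, combining $d_h \geq d$ with \Cref{lem:distancesSkeleton} and two applications of the triangle inequality gives $d_h(u,u') + D^S_{u'v'} + d_{vv'} \geq d(u,u') + d(u',v') + d(v,v') \geq d(u,v') + d(v,v') = w(R)$, where the final equality uses $w(O) = d(u,v')$ by definition of $O$.

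The main obstacle is the $d_m(u,v)$ branch of the outer $\min$: since the triangle inequality already yields $d(u,v) \leq w(R)$, no blanket bound $d_m(u,v) \geq w(R)$ is available from generalities. The intended argument must exploit both the hypothesis that every shortest $u$-$v$-path has more than $h$ hops and the choice $m = \max(h,n/h)$ fixed in \Cref{alg:constructskeletonprime}. I expect the resolution to proceed by case analysis: if $hop(u,v) > m$ then $d_m(u,v) = \infty$ trivially; otherwise some shortest $u$-$v$-path has hop-length in $(h,m]$, and applying \Cref{lem:longPathsM} to that path produces a marked node on a shortest $u$-$v$-path within $h$ hops of $v$ which, combined with the minimality defining $v'$, should force $v'$ itself to lie on a shortest $u$-$v$-path, so that $w(R) = d(u,v) \leq d_m(u,v)$. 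Threading this case split cleanly through both the weighted $3$-approximation and the unweighted $(1+\varepsilon)$-approximation parameter regimes of \Cref{thm:approxgeneralAPSP} and \Cref{thm:approxunweightedAPSP} is where the delicate work will lie.
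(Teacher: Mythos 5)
Your first direction and your lower bound on the skeleton branch are exactly the paper's argument: the paper shows $w(O) = d(u,u') + d(u',v') = d_h(u,u') + D^S_{u'v'}$ for the specific $u'$ (using $hop(u,u')\leq h$ and \Cref{lem:distancesSkeleton}), and separately $w(O) = d(u,v') \le d_h(u,\tilde u) + D^S_{\tilde u v'}$ for every $\tilde u \in M$ by the triangle inequality, so $w(O)$ equals the inner minimum and $w(R) = w(O) + d_{vv'}$. You are also right that this leaves the $d_m(u,v)$ branch untouched --- and in fact the paper's proof simply stops there: it establishes equality of $w(R)$ with the \emph{second} argument of the outer $\min$ in \Cref{eq:approxAPSP} and never argues $d_m(u,v) \ge w(R)$.

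However, your proposed repair of that branch would fail. First, $hop(u,v) \le m$ does not give you a shortest (weighted) $u$-$v$-path with at most $m$ hops, so the case split is not as you describe. More importantly, the minimality of $v'$ (it minimizes $d_h(v,\cdot)$ over $M$) cannot force $v'$ onto a shortest $u$-$v$-path: the marked node $w$ on $P$ within $h$ hops of $v$ only yields $d(v,v') \le d(v,w)$, and $v'$ may lie entirely off $P$ --- which is precisely why \Cref{lem:approxAPSP} gives $w(R) \le w(P) + 2w(Q)$ rather than $w(R) = w(P)$. If your argument went through, the algorithm would be exact, not a $3$-approximation. Indeed, when some shortest $u$-$v$-path has between $h+1$ and $m$ hops, one can have $d_m(u,v) = d(u,v) < w(R)$, so the lemma as literally stated fails in that regime. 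The tenable conclusion is the one the paper actually proves --- $w(R)$ equals the skeleton branch of \Cref{eq:approxAPSP} --- which suffices downstream because in the residual regime $\tilde D^G_{uv} = d(u,v)$ is exact and the bounds of \Cref{thm:approxgeneralAPSP} and \Cref{thm:approxsmallweightedAPSP} hold trivially.
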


\begin{proof}
	From our definition of the shortest $u$-$v'$-path $O$ we know that $hop(u,u') \leq h$ w.h.p. Since $u',v' \in M$ we have $d(u',v')= D^S_{u'v'}$ w.h.p.\ due to \Cref{lem:distancesSkeleton}. Therefore we find 
	\begin{align*}
	w(O) & = d(u,u') + d(u',v') = d_h(u,u') + D^S_{u'v'}\\ 
	& \geq \min_{u'\in M} \big(d_h(u,u') + D^S_{u'v'}\big)
	\end{align*}
	Since $O$ is a shortest $u$-$v'$-path, due to the triangle inequality and  \Cref{lem:distancesSkeleton}, we also have 
	\begin{align*}
	w(O) & = d(u,v') \leq \min_{u'\in M} \big(d_h(u,u') + d(u'v')\big)\\ 
	& = \min_{u'\in M} \big(d_h(u,u') + D^S_{u'v'}\big),
	\end{align*}
	and therefore we have equality. Finally we see from the definition of $R$
	$$w(R)  = w(O) + d(v,v') = \min_{u'\in M} \big(d_h(u,u') + D^S_{u'v'}\big) + d_{vv'}.\qedhere$$
\end{proof}

\begin{lemma}
	\label{lem:approxAPSP}
	Let $u,v \in V$ such that all shortest $u$-$v$-paths have more than $h$ hops and let $P,Q$ and $R$ be defined as above. Then we have $w(R) \leq w(P) + 2w(Q)$ w.h.p.
\end{lemma}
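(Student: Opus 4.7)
The plan is to split $w(R)$ into its two constituent parts and bound each separately using the triangle inequality and the defining minimality of $v'$.

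First I would observe that by the very definition of $R$, we have
\[
w(R) = w(O) + d(v',v) = d(u,v') + d(v',v),
\]
since $O$ is a shortest $u$-$v'$-path. The triangle inequality immediately gives $d(u,v') \leq d(u,v) + d(v,v') = w(P) + d(v',v)$, so it suffices to prove that $d(v',v) \leq w(Q)$ w.h.p.; plugging this in twice then yields $w(R) \leq w(P) + 2w(Q)$.

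To establish $d(v',v) \leq w(Q)$, I would use the fact that by \Cref{lem:longPathsM} we have $|Q| \leq h$ w.h.p., and $w \in M$ is the endpoint of $Q$ opposite to $v$ (so $w$ is a marked node within hop-distance $h$ of $v$). Hence $d_h(v,w)$ is finite and is at most $w(Q)$, because the sub-path $Q$ itself witnesses a $v$-$w$ path of length $w(Q)$ with at most $h$ hops. Since $v'$ was defined as $\argmin_{w' \in M} d_h(v,w')$, we obtain
\[
d(v,v') \leq d_h(v,v') \leq d_h(v,w) \leq w(Q).
\]

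Combining the two bounds gives $w(R) \leq w(P) + d(v,v') + d(v,v') \leq w(P) + 2w(Q)$, which is the claim. The only probabilistic ingredient is the bound $|Q| \leq h$ from \Cref{lem:longPathsM}, and that is the sole source of the ``w.h.p.''\ qualifier; everything else is a deterministic application of the triangle inequality together with the minimality of $v'$. There is no real obstacle here — the main thing to get right is tracking which distances are $h$-limited and which are true distances, and using the fact that $w \in M$ provides a concrete competitor in the minimization defining $v'$.
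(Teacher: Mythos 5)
Your proof is correct and follows essentially the same route as the paper's: the triangle inequality $d(u,v')\leq w(P)+d(v,v')$ for the first leg, the minimality of $v'$ giving $d_h(v,v')\leq d_h(v,w)$, and the w.h.p.\ bound $|Q|\leq h$ from \Cref{lem:longPathsM} making $Q$ a valid $h$-hop witness so that $d(v,v')\leq d_h(v,w)\leq w(Q)$. The only cosmetic difference is that you bound $d_h(v,w)\leq w(Q)$ where the paper notes equality; your weaker inequality suffices, and your accounting of where the ``w.h.p.''\ enters matches the paper exactly.
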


\begin{proof}
	Due to the triangle inequality and since $O$ is a shortest $u$-$v'$-path, we have $w(O) = d(u,v') \leq w(P)\!+\!d(v,v')$. Since $v' \in M$ is the marked node that minimizes $d_h(v,v')$ (c.f.\ \Cref{alg:transmitclosest}), we know that $d_h(v,v') \leq d_h(v,w)$. Furthermore, by definition of $P$, for the node $w \in M$ on $P$ closest to $v$ we have $hop(w,v) \leq h$ w.h.p., hence $d_h(v,w) = d(v,w)$. We put these pieces together and obtain
	\begin{align*}
	w(R) & = w(O) + d(v,v') \leq w(P)+2d(v,v') \\
	& \leq w(P)+2d_h(v,v') \leq w(P)+2d_h(v,w) \\
	& =  w(P)+2d(v,w) = w(P)+2w(Q). \tag*{\qedhere}
	\end{align*}
\end{proof}

We can now give the proof for the 3-approximate algorithm in the weighted case (\Cref{thm:approxgeneralAPSP}).

\begin{proof}[Proof of \Cref{thm:approxgeneralAPSP}]
	Let $u,v \in V$. We need to show $D^G_{uv} \leq \tilde D^G_{uv} \leq 3 \!\cdot\! D^G_{uv}$ (where $D^G$ denotes the true distance matrix). First consider the case that there is a shortest path $P$ between $u$ and $v$ with $|P| \leq h$. This case is easy, since then $d_m(u,v) = d(u,v)$ and therefore $\tilde D^G_{uv} = D^G_{uv}$ as can be seen from \Cref{eq:approxAPSP}. 
	
	Now consider the case that for the given pair $u,v \in V$ all shortest $u$-$v$-paths have more than $h$ hops. Since $R$ is a $u$-$v$-path (but not necessarily a shortest) we know from \Cref{lem:computeR} that $\tilde D^G_{uv} = w(R) \geq w(P) = D^G_{uv}$.
	We employ Lemmas \ref{lem:computeR} and \ref{lem:approxAPSP} to obtain 
	$$\tilde D^G_{uv} = w(R) \leq w(P) + 2w(Q) \leq 3 \!\cdot\! w(P) = 3 \!\cdot\! D^G_{uv}.$$	
	The total time complexity of \Cref{alg:approxAPSP} is $\tilO\big(\!\max(x,n/x)\big)+ \tilO\big(n/x\big) + \tilO\big(\!\sqrt n\big)$ due to Fact \ref{fct:constructskeletonprime}, \Cref{lem:distancesSkeleton} and Fact \ref{fct:transmitclosest}, which equals the claimed time complexity if we choose $x = \tilT(\!\sqrt n)$.
\end{proof}

\Cref{thm:approxunweightedAPSP} is an obvious corollary of the following theorem. The idea to prove the approximation ratio of $(1\!+\!\varepsilon)$, is to have nodes explore their neighborhood up to distance $n/h = \sqrt{2nW/\varepsilon}$ (by choosing $x$ appropriately). This guarantees that we only have to approximate distances between nodes with more than $\sqrt{2nW/\varepsilon}$ hops, which allows to make the approximation error arbitrarily small.

\begin{theorem}\label{thm:approxsmallweightedAPSP}
	For arbitrary $\varepsilon > 0$, there is an algorithm that computes a $(1\!+\!\varepsilon)$-approximation of the APSP problem in $\tilO\big(\!\sqrt{nW/\varepsilon} \,\big)$ rounds w.h.p., where $W := W_{\text{\emph{max}}}/W_{\text{\emph{min}}}$ is the ratio of maximum to minimum edge weight in $G$.
\end{theorem}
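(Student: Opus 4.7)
The plan is to reuse \Cref{alg:approxAPSP} but to tune the parameter $x$ (and hence $h = \xi x \ln n$ and $m = \max(h, n/h)$) so that the additive error from the skeleton-based detour is an $\varepsilon$-fraction of the path length $w(P)$. All the structural machinery (Fact~\ref{fct:constructskeletonprime}, \Cref{lem:distancesSkeleton}, Fact~\ref{fct:transmitclosest}, \Cref{lem:computeR}, \Cref{lem:approxAPSP}) already carries over verbatim, so only the final approximation bound and the parameter balancing need new analysis.

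Concretely, I would pick $h = \tilT\bigl(\sqrt{n\varepsilon/W}\bigr)$ by setting $x = h/(\xi \ln n)$, which forces $m = n/h = \tilT\bigl(\sqrt{nW/\varepsilon}\bigr)$ (under the mild assumption $n\varepsilon/W \geq \polylog n$, otherwise trivially handled). For a pair $u, v \in V$, I split the analysis on whether some shortest $u$-$v$-path uses at most $m$ hops. In the first case, $d_m(u,v) = d(u,v)$ and \Cref{eq:approxAPSP} recovers the exact distance. In the second case, all shortest $u$-$v$-paths have strictly more than $m \geq h$ hops, so \Cref{lem:computeR} and \Cref{lem:approxAPSP} yield $w(R) = \tilde{D}^G_{uv} \leq w(P) + 2 w(Q)$ w.h.p., where $|Q| \leq h$. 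Bounding $w(Q) \leq h W_{\text{max}}$ from above and $w(P) \geq |P| W_{\text{min}} > m W_{\text{min}}$ from below gives
$$ \frac{\tilde{D}^G_{uv}}{d(u,v)} \leq 1 + \frac{2 w(Q)}{w(P)} \leq 1 + \frac{2 h W_{\text{max}}}{m W_{\text{min}}} = 1 + \frac{2 h W}{m}, $$
which by the choice $m = n/h$ and $h^2 \leq n\varepsilon/(2W)$ is at most $1 + \varepsilon$.

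For the runtime, Fact~\ref{fct:constructskeletonprime} contributes $\tilO(m) = \tilO\bigl(\sqrt{nW/\varepsilon}\bigr)$, \Cref{lem:distancesSkeleton} contributes $\tilO(n/x) = \tilO(n/h) = \tilO\bigl(\sqrt{nW/\varepsilon}\bigr)$, and Fact~\ref{fct:transmitclosest} contributes $\tilO(\sqrt{n})$, all matching the target bound (assuming $W/\varepsilon \geq 1$; otherwise the algorithm from \Cref{thm:approxgeneralAPSP} is already better). The only real subtlety, and the main thing that needs to be verified carefully, is that the two competing quantities align: the local exploration radius $m$ must simultaneously be large enough to absorb the additive slack $2 h W$ into an $\varepsilon$-relative error and small enough that the skeleton-transmission cost $n/x \sim n/h$ stays within $\tilO\bigl(\sqrt{nW/\varepsilon}\bigr)$. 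Both constraints hit the same sweet spot $hm = \Theta(n)$ and $m/h = \Theta(W/\varepsilon)$, which is exactly what the definition $m = \max(h, n/h)$ from \Cref{alg:constructskeletonprime} permits, so no structural change to the algorithm is required. \Cref{thm:approxunweightedAPSP} then follows as the special case $W = 1$.
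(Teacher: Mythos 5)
Your proposal is correct and follows essentially the same route as the paper's proof: the same choice of $x$ (hence $h = \tilT\bigl(\sqrt{n\varepsilon/W}\bigr)$ and $m = n/h$), the same case split on whether a shortest path has at most $m$ hops, and the same application of \Cref{lem:computeR} and \Cref{lem:approxAPSP} with $w(Q) \leq hW_{\max}$ and $w(P) \geq mW_{\min}$ to absorb the additive error into an $\varepsilon$-fraction. The only difference is cosmetic (you argue via the ratio $1 + 2hW/m$ while the paper bounds the additive term directly), so no further comment is needed.
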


\begin{proof}
	Let $u,v \in V$. We show $D^G_{uv} \leq \tilde D^G_{uv} \leq (1\!+\!\varepsilon) D^G_{uv}$. As in the proof of \Cref{thm:approxgeneralAPSP} it is clear that $\tilde D^G_{uv} \geq  D^G_{uv}$. In case there is a shortest path $P$ between $u$ and $v$ with $|P| \leq m$ hops, we have that $\tilde D^G_{uv} = D^G_{uv}$ due to the first argument of the outer $\min$ function in \Cref{eq:approxAPSP}.
	
	What remains, is to prove the claim for a given pair $u,v \in V$ for which all shortest $u$-$v$-paths have more than $m$ hops. We choose $x := \sqrt{n\varepsilon/2W} \!\cdot\! \frac{1}{\xi \log n}$. With this choice we get $h = \xi x \ln n = \sqrt{n\varepsilon/2W}\leq \sqrt{n}$ and thus \smash{$m = \max\big(h,\frac{n}{h}\big) = \frac{n}{h}$}. Therefore \smash{$|P| \geq m = \frac{n}{h} = \sqrt{2nW/\varepsilon}$}. We obtain the following
	$$w(P) \geq W_{\text{min}}\cdot|P| \geq {W_{\text{min}}}\cdot \frac{n}{h} = \sqrt{{2nW_{\text{max}}W_{\text{min}}}/{\varepsilon}}.$$
	Then we use Lemmas \ref{lem:computeR} and \ref{lem:approxAPSP}, and the fact that $|Q|\leq h$ w.h.p., and we obtain 
	\begin{align*}
	\tilde D^G_{uv} & = w(R) \leq w(P) + 2w(Q) \\
	&  \leq w(P) + 2|Q|W_{\max} \leq w(P) + 2hW_{\max}\\
	&  = w(P) + {\varepsilon}\sqrt{{2nW_{\text{max}}W_{\text{min}}}/{\varepsilon}} \leq (1+\varepsilon)w(P). \tag*{\qedhere}
	\end{align*}
	
	As in the proof of \Cref{thm:approxgeneralAPSP}, the complexity of \Cref{alg:approxAPSP} is \smash{$\tilO\big(\!\max(x,n/x)\big)+ \tilO\big(n/x\big) + \tilO\big(\!\sqrt n\big)$}, which is dominated by \smash{$\tilO\big(n/x\big) = \tilO\big(\!\sqrt{nW/\varepsilon} \,\big)$} due to our choice of $x$.
\end{proof}

It remains to analyze the local capacity $\lambda$ for which \Cref{thm:approxgeneralAPSP} and \Cref{thm:approxsmallweightedAPSP} hold.

\begin{lemma}
	We can $3$-approximate APSP in $\tilO\big(\!\sqrt{n}\big)$ rounds and $(1\!+\!\varepsilon)$-approximate APSP in $\tilO\big(\!\sqrt{nW/\varepsilon} \,\big)$ rounds with local capacity $\lambda = \Theta(n)$ (for the latter: assuming $\varepsilon$ is not too small).
\end{lemma}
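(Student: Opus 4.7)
The plan is to analyze the three subroutines of \Cref{alg:approxAPSP} in turn, mirroring the analysis of \Cref{lem:exactAPSPcongestion} for the exact case. The three places local edges are touched are \texttt{Construct-Skeleton'}$(x)$ (explicit local exploration), \texttt{Transmit-Skeleton} (token dissemination), and \texttt{Transmit-Closest} (token dissemination). I would argue that only the first genuinely needs $\lambda = \Theta(n)$; the other two need at most roughly $\tilO(n/x)$ and $\tilO(\sqrt{n})$ respectively, which will always be dominated by $\Theta(n)$ in the parameter regimes of interest.

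For \texttt{Construct-Skeleton'}$(x)$, the goal is that every node learns the $m$-limited distances to all nodes within $m = \max(h, n/h)$ hops as well as which of them are marked. I would reuse the observation from the exact proof that this is exactly $m$ rounds of distributed Bellman--Ford for APSP: each node keeps one distance label per destination and, each round, transmits at most one $O(\log n)$-bit update per destination across each incident local edge. This gives a per-round, per-edge load of at most $2n$ messages, so $\lambda = \Theta(n)$ suffices; the marked/unmarked bit can be piggy-backed on these same messages without inflating the congestion.

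For \texttt{Transmit-Skeleton} and \texttt{Transmit-Closest}, which are token-dissemination calls, I would invoke \Cref{lem:congestion_token_dissem} directly, which states that $k$-token dissemination runs in the claimed time with local capacity $\lambda = \Theta(\sqrt{k})$. In \texttt{Transmit-Skeleton} the number of tokens is $k \leq |M|^2 = \tilO(n^2/x^2)$, so $\lambda = \tilO(n/x)$ suffices; in \texttt{Transmit-Closest} each non-marked node contributes a single token, so $k \leq n$ and $\lambda = \tilO(\sqrt{n})$ suffices. Plugging in $x = \tilT(\sqrt{n})$ for the $3$-approximation gives all three subroutine requirements at most $\Theta(n)$, which is the advertised bound; plugging in $x = \tilT(\sqrt{n\varepsilon/W})$ for the $(1{+}\varepsilon)$-approximation gives the Bellman--Ford bound $\Theta(n)$ and a \texttt{Transmit-Skeleton} bound of $\tilO(\sqrt{nW/\varepsilon})$.

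The main obstacle, and the reason for the ``$\varepsilon$ not too small'' qualification, is precisely this last bound: for the \texttt{Transmit-Skeleton} congestion $\tilO(\sqrt{nW/\varepsilon})$ to be absorbed by $\lambda = \Theta(n)$, one needs $\varepsilon = \tilOm(W/n)$. This is a very weak requirement (it is satisfied by any $\varepsilon$ that is not inverse-polynomially small compared to $W/n$), and in particular holds whenever the algorithm's runtime $\tilO(\sqrt{nW/\varepsilon})$ does not itself exceed $\tilO(n)$ — i.e., precisely in the regime in which the hybrid-model bound beats the trivial $\LOCAL$ bound. Aside from that caveat, the proof reduces to taking the maximum of the three per-subroutine local-capacity requirements above and observing it equals $\Theta(n)$.
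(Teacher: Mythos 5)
Your proposal is correct and follows essentially the same route as the paper's proof: the same decomposition into \texttt{Construct-Skeleton'} (handled by $h$-round Bellman--Ford with $\lambda = 2n$ and piggy-backed marked bits), and the token-dissemination calls handled via \Cref{lem:congestion_token_dissem} with $k = |M|^2$ giving $\lambda = \tilT(n/x)$, leading to the same $\varepsilon = \tilOm(W/n)$ caveat. Your explicit treatment of \texttt{Transmit-Closest} (which the paper leaves implicit) is a harmless, correct addition.
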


\begin{proof}
	We use local edges in the subroutine \Cref{alg:constructskeletonprime}: \texttt{Construct-Skeleton'($x$)}, where we learn $G$ up to distance $m$ in the same number of rounds. The information that each node actually requires from its $m$-hop neighborhood are the $m$-limited distances and the information which nodes in said neighborhood are marked. As before (c.f.\ proof of \Cref{lem:exactAPSPcongestion}) we can use the distributed version of Bellman-Ford to learn the required information in $m$ rounds. This requires only $\lambda = 2n$ (as we argue in \Cref{lem:congestion_bellman_ford}).
	
	
	The number of tokens disseminated in the token dissemination routine called in the sub-procedure \Cref{alg:transmitskeleton}: \texttt{Transmit-Skeleton} depends on the parameter $x$ (c.f. \Cref{lem:distancesSkeleton}). By \Cref{lem:congestion_token_dissem} we require local capacity $\lambda = \Theta({n/x})$. This is $\lambda = \tilT(\!\sqrt n)$ for the 3-approximation. For the $(1\!+\!\varepsilon)$-approximation we require $\lambda = \tilT\big(\!\sqrt{nW/\varepsilon}\big)$, which is at most $\Theta(n)$ assuming that $W$ is constant and $\varepsilon$ is not too small (e.g.\ $\varepsilon \geq \frac{1}{n}$). Note that without this assumption we require \smash{$\lambda = \tilT \big(\!\max\big(n,\sqrt{nW/\varepsilon}\big)\big)$}.
\end{proof}

\subsection{Lower Bounds for APSP}
\label{sec:apsp_lower_bounds}

In order to obtain rigorous lower bounds we introduce a technical lemma. It shows that for a class of graphs and a dedicated node $b$, we can create a bottleneck for the information that can be transmitted from parts of the graph to $b$ (c.f.\ \Cref{fig:lowerBounds}, left). Subsequently we show that obtaining solutions (or even approximations) for the all pair shortest path problem requires that a certain amount of information (measured in terms of its entropy) must be transmitted to $b$, which demonstrates the lower bounds claimed in this section.

\begin{lemma}
	\label{lem:lowerboundlemma}
	Let $G=(V,E)$ be an $n$-node graph that consists of a subgraph $G'=(V',E')$ and a path of length $L$ (edges) from some node $a\in V'$ to $b\in V\setminus V'$ and that except for node $a$ is vertex-disjoint from $V'$.
	Assume further that the nodes in $V'$ are collectively given the state of some random variable $X$ and that node $b$ needs to learn the state of $X$. Every randomized algorithm that solves this problem in the hybrid network model requires \smash{$\Omega\Big(\!\min\big(L,\frac{H(X)}{L \cdot \log^2 n}\big)\Big)$} rounds, where $H(X)$ denotes the Shannon entropy of $X$.
\end{lemma}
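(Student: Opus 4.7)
I would argue information-theoretically, bounding $I(X;\,\mathrm{view}_b(T))$ for any algorithm running in $T$ rounds and applying Yao's principle to reduce to deterministic algorithms with $X$ drawn from a hard distribution; let $H := H(X)$. A case split on $T$ vs.\ $L$ handles the easy direction: if $T \geq L$ we already have $T = \Omega(L)$, so assume $T < L$ and aim for an upper bound of $O(TL\log^2 n)$ on the information $b$ can hold about $X$.

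The key structural observation is that, for $T < L$, no information about $X$ can travel from $V'$ to $b$ using only local edges, since the attached path has length $L > T$. Hence every bit of $X$-information that ends up at $b$ by round $T$ must have entered the path from the outside via a global message received by some path node $p_i$ at some earlier round $t_0$ (after which it may hop via local and/or further global messages); for a purely-local onward route to $b$ the constraint $t_0 + (L-i) \leq T$ must hold. Alongside this, $b$ itself accrues at most $O(T\log^2 n)$ bits through its own direct global receptions, since $\gamma = O(\log n)$ and each global message carries $O(\log n)$ bits. To formalize, I would set $\Phi_i(t) := I(X;\,\mathrm{state}_{p_i}(t))$ with $\Phi_0(t) = H$ and $\Phi_i(0) = 0$ for $i \geq 1$, and use the chain rule plus the data-processing inequality to derive $\Phi_i(t) \leq \Phi_i(t-1) + \Phi_{i-1}(t-1) + \Phi_{i+1}(t-1) + O(\log^2 n)$: node $p_i$'s state after round $t$ is a function of its previous state, two local messages (each carrying at most the sender's information about $X$), and $O(\log^2 n)$ global bits.

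The main obstacle is the mutual dependency in this recurrence --- naive unrolling gives exponential blow-up in $T$, which is useless as an upper bound. The fix is a per-origin charging scheme: every bit of $X$-information present anywhere in the path at time $T$ is charged to the unique pair $(p_i, t_0)$ at which it first entered the path via a global message. Since each path node can receive at most $O(\log^2 n)$ bits of global payload per round, and there are $L$ path nodes and $T$ rounds, a total of $O(TL\log^2 n)$ bits is charged to the path; adding $b$'s own $O(T\log^2 n)$ direct receptions still leaves $I(X;\,\mathrm{view}_b(T)) \leq O(TL\log^2 n)$. Since $b$ must recover $X$ with high probability, Fano's inequality forces $I(X;\,\mathrm{view}_b(T)) = \Omega(H)$, so $TL\log^2 n = \Omega(H)$ and hence $T = \Omega(H/(L\log^2 n))$ in this regime; combined with the $T \geq L$ case this gives the claimed $T = \Omega\!\left(\min\!\left(L,\, H/(L\log^2 n)\right)\right)$.
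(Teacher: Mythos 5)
Your proof is correct and rests on exactly the same key observation as the paper's: since a purely local causal chain from $V'$ to $b$ must traverse all $L$ path edges, any influence of $X$ on $b$'s view within $T<L$ rounds must pass through a global message received by one of the $L$ path nodes, and these carry at most $O(TL\log^2 n)$ bits in total; combined with the trivial $T\ge L$ case this gives the stated $\min$. Where you differ is only in the formalization: the paper makes the bottleneck rigorous by a simulation argument (node $a$ is given $X$, the topology, and all randomness of $V'$, every global message destined for a path node is rerouted as a direct $a\to b$ message under a budget of $O(L\log^2 n)$ messages per round, and the unbounded local bandwidth is used to let path nodes send all possible local messages indexed by the global messages they might have received), reducing to a clean two-party problem; you instead do direct information accounting with data processing and Fano. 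Both work, and yours is arguably more self-contained. One step in your write-up deserves tightening: the ``per-origin charging scheme'' as literally stated (tracing each bit of $X$-information to the unique global message by which it first entered the path) is not a rigorous information-theoretic operation, and moreover $X$-information also enters the path \emph{locally} at $p_1$ from $a$ without any global message (it simply cannot reach $b$ in time). The clean way to cash in your charging intuition is to condition: given all randomness and the transcript $M$ of global messages received by the path nodes during rounds $1,\dots,T$, the view of $b$ at time $T<L$ is conditionally independent of $X$, so $I(X;\mathrm{view}_b(T))\le H(M)=O(TL\log^2 n)$; this replaces both the (invalid in general, as you note) additive recurrence and the bit-tracing. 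With that substitution your argument is complete.
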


\begin{proof}[Proof of \Cref{lem:lowerboundlemma}]
	We show that the problem at hand solves the basic two-party communication problem, where Alice (which simulates $G'$) knows the value of the random variable $X$ and Bob (represented by node $b$) needs to learn it. The Shannon entropy $H(X)$ constitutes a lower bound on the expected number of bits that must be sent from Alice to Bob in order that Bob can learn the $X$. This implies that at least $H(X)$ bits must be transmitted in the worst case. Furthermore, we show that in less than $L$ rounds, Bob (node $b$) can learn only $O(L \log^2 n)$ bits per round from Alice. Therefore it takes \smash{$\Omega\big(\frac{H(X)}{L \cdot \log^2 n}\big)$} rounds until $b$ knows the state of $X$.
	
	Let $\mathcal A$ be an algorithm that solves the problem at hand and assume it takes less than $L$ rounds. We reduce the problem above to a different setting, in which we can simulate the original execution of $\mathcal A$. We show that the altered setting is either equivalent or makes it easier for $\mathcal A$ to solve the problem. We do this by giving node $a$ some global knowledge and by carefully altering the mode of communication, such that $b$ still obtains the same information in the simulation as it would in the original execution of $\mathcal A$.
	
	
	First, we assume that the node $a$ has complete knowledge of $X$ at the beginning of $\mathcal{A}$. Second, we assume $a$ has knowledge of the whole structure of $G$. Third, we assume that except for the randomness determining $X$, all the additional randomness used by all nodes in $V'$ during the execution of $\mathcal{A}$ is also known to $a$. So far, any of the assumptions made can only help to solve the problem and they thus make our lower bound stronger. Moreover, under these assumptions $a$ can locally simulate the whole execution of $\mathcal{A}$.
	
	Let $P$ be the set of \textit{nodes} in $V\setminus V'$ of the path connecting $a$ and $b$. That is, $P$ contains all nodes of the path except for node $a$, thus $|P|= L$. Since we assumed that $\mathcal{A}$ runs for less than $L$ rounds, $b$ can only be influenced by the content of global messages that are either sent directly to $b$ itself via a global edge, or to a node in $P$, which can subsequently forward it to $b$ within less than $L$ rounds over the local edges of the path. We can therefore entirely focus on the global communication where the receiving node is in $P$.
	
	In particular, we show that we can equivalently assume that all messages received by nodes in $P$ over global links can be sent directly from $a$ to $b$ instead, whereas all other communication via global edges is prohibited (assuming sufficiently increased budgets for sending and receiving messages for $a$ and $b$). Since $a$ can simulate $\mathcal A$ for all nodes, it can also send any global message that is received by any node on $P$ via global edge directly to $b$ instead. More specific, whenever $u \in V$ sends a message to some $v \in P$ over a global edge, node $a$ sends the same message to $b$ instead, together with the information that $u$ is the sender and $v$ the recipient. Sending global messages directly to $b$ instead of some other node on $P$ can only help $b$ in learning the state of $X$.
	
	What remains is to show that $b$ receives the same information via local edges as in the original execution of $\mathcal{A}$. Note, that the messages that are sent to $b$ over local edges by the nodes on $P$ might depend on the messages that they received over global links. However, here we can use the fact that the message size on local links is not bounded. If a node $v\in P$ sends a message $m$ over a local link in the original execution $\mathcal{A}$, it can instead send a collection of all possible values for $m$ depending on the global messages it could have received. Because node $b$ knows the content of all these global messages, it can reconstruct the local messages $b$ would have received in the original execution of $\mathcal{A}$.
	
	We have therefore reduced the problem to the following setting. The whole execution of $\mathcal{A}$ for the nodes of $V'$ can be simulated by node $a$ alone. The reduced setting has only the path of length $L$ connecting $a$ and $b$ with local edges. In addition, $a$ is connected to $b$ over a single global edge and no other global edges exist. For the simulated setting, we can restrict the number of messages via this global edge to $O(L\log^2 n)$ messages per round, since that is the maximum amount of messages that can collectively be received by all nodes in $P$ during the execution of $\mathcal{A}$. In less than $L$ rounds, the only information that can reach $b$, is via this global edge. Since $b$ needs to learn $H(X)$ bits, it takes at least $\Omega(H(X)/(L\log^2 n))$ rounds for $b$ to learn $X$.
\end{proof}

\hide{
	\phil{Maybe I am missing some detail, but I'm still not convinced that the above reduction is strictly necessary. I'm leaving in a variant for consideration (can be removed).}
	
	\begin{proof}[Proof of \Cref{lem:lowerboundlemma}]
		Variant: We show that the problem at hand solves the basic two-party communication problem, where Alice (which simulates $G'$) knows the state of $X$ and Bob (represented by the node $b$) needs to learn it. The Shannon entropy $H(X)$ constitutes a lower bound on the expected number of bits of an (uniquely decodable) code $C$, whereas a codeword $c \in C$ must be sent from Alice to Bob in order that Bob can learn the state of $X$. This implies, that $|c| \geq H(X)$ bits in the worst case. Furthermore, we show that in less than $L$ rounds, Bob (node $b$) can learn only $O(L \log^2 n)$ bits per round from Alice, therefore it takes \smash{$\Omega\big(\frac{H(X)}{L \cdot \log^2 n}\big)$} rounds until $b$ knows the state of $X$.
		
		Assume that an algorithm takes less than $L$ rounds until $b$ knows the state of $X$. Then any bit that needs to be transmitted to $b$ can travel less than $L$ hops if \textit{only} local edges could be used. Since all nodes in $V'$ have hop-distance of at least $L$ to $b$, it can never learn anything about $X$ if only local edges could be used. Any bit of some codeword $c$ encoding the state of $X$ that reaches $b$ must therefore at some point be received via a global edge either directly by $b$ or by one of the other $L\!-\!1$ nodes on the path from $a$ to $b$ (excluding $a$) that can forward the information within less than $L$ rounds.
		
		Hence the number of bits stemming from $V'$ that can be learned by $b$ are upper bounded by the number of bits that can be received by the nodes on the path from $a$ to $b$ (excluding $a$) over global edges. Since we have $L$ nodes, each of which can receive $\log n$ messages per round of size $\log n$ bits via global edges, $b$ can learn at most $L \log^2 n$ per round. Note that even if we make the modeling assumption that the receiver of a message over a global link automatically learns the ID of the sender, this does not significantly change the amount of information ($\Theta(\log n)$) that a node learns from a single message over a global edge, since the ID itself is encodes at most $O(\log n)$ bits.
	\end{proof}
}

\Cref{thm:lowerBoundAPSP} exploits the fact that for a path of length $n$, a node $b$ that located at one end, must learn the permutation of all nodes on the path in order to solve APSP exactly (c.f.\ \Cref{fig:lowerboundAPSP}). We show that this requires that $b$ learns $\Omega(n\log n)$ bits, yielding the claimed lower bound by virtue of \Cref{lem:lowerboundlemma}.

\begin{theorem}
	\label{thm:lowerBoundAPSP}	
	An algorithm that solves APSP in the hybrid network model takes \smash{${\Omega}\big(\!\sqrt{n/\log n}\big)$} rounds, even on unweighted graphs.
\end{theorem}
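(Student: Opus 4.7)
The plan is to invoke \Cref{lem:lowerboundlemma} on the unweighted path $G$ with vertices $v_0, v_1, \ldots, v_{n-1}$, placing $b := v_0$ at one endpoint. For a parameter $L$ to be chosen later, I would set $a := v_L$, take $V' := \{v_L, v_{L+1}, \ldots, v_{n-1}\}$ with $G'$ the sub-path induced on $V'$, and use the prefix $v_0 v_1 \cdots v_L$ as the length-$L$ path from $a \in V'$ to $b$. The prefix is vertex-disjoint from $V'$ except at $a$, so the hypothesis of \Cref{lem:lowerboundlemma} is satisfied.

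Next, I would specify the random variable $X$: fix the identifiers on $v_0, \ldots, v_{L-1}$ to canonical values, and assign the remaining $n-L$ identifiers uniformly at random to the nodes of $V'$. Since each node of $V'$ knows its own identifier together with its fixed position in $G'$, the nodes in $V'$ collectively know $X$ as required by the lemma. The Shannon entropy of this permutation is $H(X) = \log_2\bigl((n-L)!\bigr) = \Theta(n \log n)$ whenever $L = o(n)$.

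Then I would reduce APSP to learning $X$. Since $G$ is a known, fixed path and $b$ sits at an endpoint, we have $d(b,u) = \operatorname{pos}(u)$ for every node $u$; the SSSP output from $b$ (which is implied by APSP) therefore reveals the position of every identifier and so determines $X$. Invoking \Cref{lem:lowerboundlemma} then gives a round lower bound of $\Omega\bigl(\min(L,\, H(X)/(L \log^2 n))\bigr) = \Omega\bigl(\min(L,\, n/(L \log n))\bigr)$. Balancing these two quantities by setting $L := \lceil\sqrt{n/\log n}\,\rceil$ equalizes them at $\Theta(\sqrt{n/\log n})$, which yields the theorem.

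The main technical point will be the reduction step: I must ensure that any correct APSP algorithm genuinely forces $b$ to information-theoretically learn $X$, not merely to output some lossy function of it. This is handled by the fact that the path topology is fixed and known a priori, so the distance labels $\{d(b,u)\}_u$ at $b$ form a bijection with the permutation $X$; hence any protocol that correctly produces APSP outputs at $b$ already supplies $b$ with enough information to recover $X$, and the entropic lower bound of \Cref{lem:lowerboundlemma} transfers directly.
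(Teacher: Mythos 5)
Your proposal is correct and follows essentially the same route as the paper: the same unweighted path with $b$ at one endpoint, the same random permutation of the identifiers of the nodes beyond hop-distance $L$ (with entropy $\Theta(n\log n)$), the same observation that the APSP output at $b$ determines the permutation, and the same balancing choice $L=\Theta(\sqrt{n/\log n})$ in \Cref{lem:lowerboundlemma}. No gaps.
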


\begin{figure}[H]
	\vspace*{-0.3cm}
	\centering
	\includegraphics[width=0.62\linewidth]{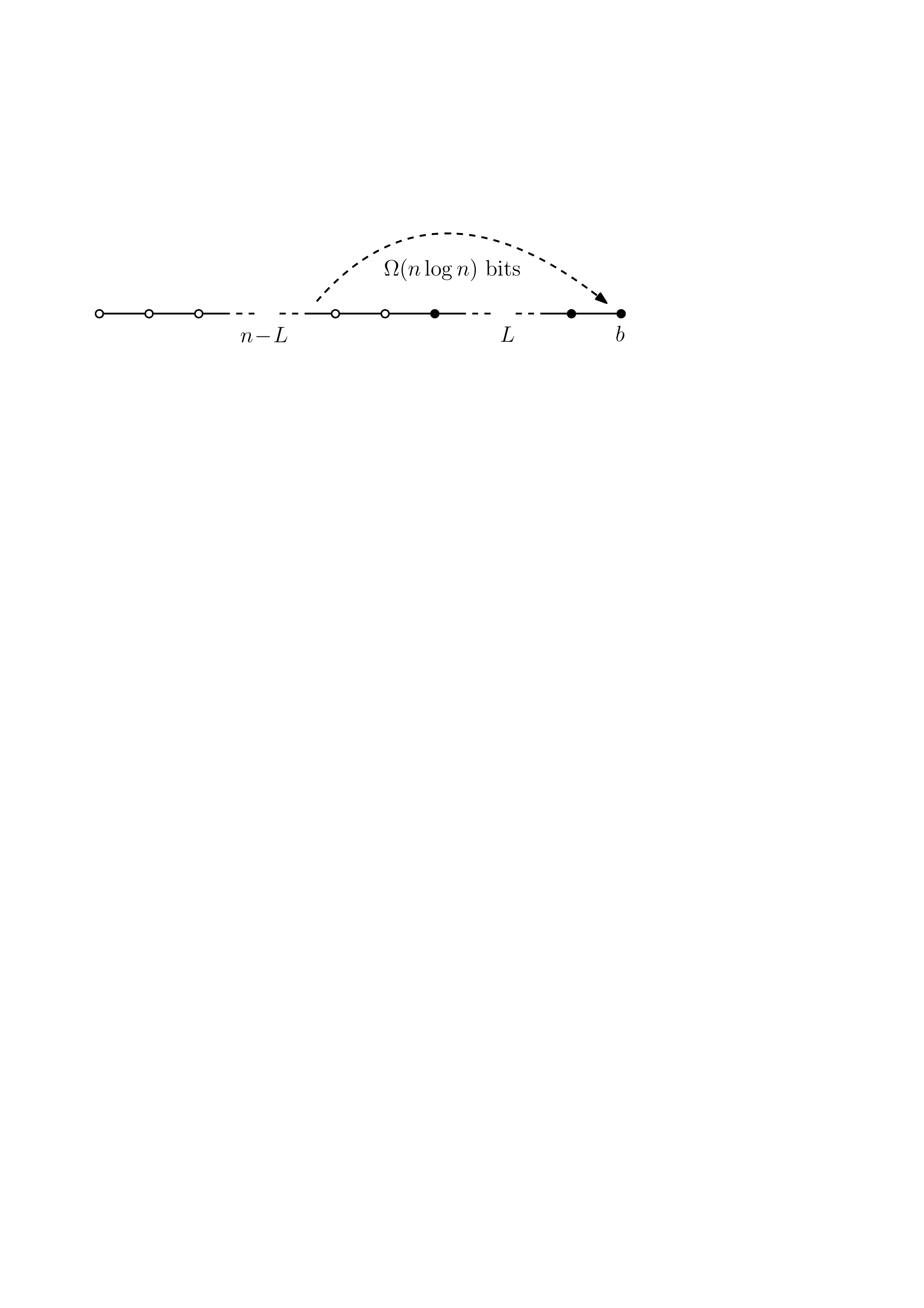}
	\caption{To learn the permutation of nodes farther than $L$ hops, $b$ must learn $\Omega(n \log n)$ bits.}
	\label{fig:lowerboundAPSP}
\end{figure}

\begin{proof}
	Let $G$ be an unweighted path with $n$ nodes and let $b$ be the last node on the path (c.f.\ \Cref{fig:lowerboundAPSP}). Let $L \leq n/2$ and let $\mathcal{A}$ be an algorithm that solves APSP on $G$. We allow that all nodes including $b$ know that $G$ is a path, but all nodes with hop distance at least $L$ from $b$ are permuted according to some random distribution and their permutation is unknown to $b$. 
	Let $S_{n-L}$ be the set of permutations of those nodes. We define the random permutation $X \in S_{n-L}$ and decree that each permutation $\pi \in S_{n-L}$ has the same probability $p := \mathbb{P}(X =\pi) := 1/|S_{n_L}|$. The Shannon entropy of $X$ is given by
	$$H(X) = \hspace*{-0.3cm}\sum_{\pi \in S_{n-L}} \hspace*{-0.2cm} p \log(1/p) = {|S_{n_L}|}\cdot {p} \log(1/p) = \log(|S_{n_L}|) \geq \log\big[\big(\tfrac{n}{2}\big)!\big] \geq \log\big[\big(\tfrac{n}{4}\big)^{n/4}\big] \in \Omega(n \log n)$$
	The random variable $X$ is collectively known by the nodes at hop distance at least $L$ from $b$. From \Cref{lem:lowerboundlemma} we learn that it takes at least \smash{$\Omega\big(\!\min(L,\frac{n}{L \cdot \log n})\big)$} rounds until $b$ knows $X$, which is necessary in order that $\mathcal{A}$ solves APSP on $G$. The claimed lower bound is obtained if we choose $L \in \Theta(\!\sqrt{n/\log n})$.
\end{proof}

\begin{proof}[Proof of \Cref{thm:lowerBoundAPSPapprox}]
	We construct an unweighted graph $G$ in which an $\alpha$-approximative APSP algorithm has the claimed lower bound (c.f.\ \Cref{fig:lowerBoundAPSPapprox}). One part of $G$ is a path with $x$ nodes, where \smash{$x \coloneqq \lfloor n/2 + L\rfloor$} and \smash{$L := \big\lfloor \!\frac{\sqrt{n}}{\sqrt{c}\log n} \big\rfloor$}. Node $b$ is at an end of the path. Moreover, $G$ has two sets of nodes $S_1, S_2$, of size \smash{$y \coloneqq \big\lfloor \frac{n-x}{2}\big\rfloor$}. 
	
	We have $x,y \in \Omega(n)$. Note that we round $x,y$ such that $G$ has $x\!+\!2y \in [(n\!-\!3) .. n]$ nodes in total. This is w.l.o.g.\ since we can always attach a few additional nodes to $b$. Every node in $S_1$ has an edge to $v_1$ which is the node with \smash{$hop(v_1,b) = L$}. Every node in $S_2$ has an edge to $v_2$, which is the node with $hop(v_2,b) = x$. 	
	
	We allow that the layout of $G$ and the nodes that are on the path from $v_2$ to $b$ are fixed and globally known. However, we assign the $2y$ remaining nodes randomly to $S_1$ and $S_2$. Formally, we fix a set of $y$ nodes $u_1, \ldots, u_y$ and assign each randomly to $S_1$ or $S_2$ with probability $1/2$. The last $y$ nodes are used to fill up $S_1$ and $S_2$ to size $y$.
	
	Let $\mathcal{A}$ be an algorithm that computes an $\alpha$-approximation of APSP for $\alpha \leq {\sqrt{nc}\cdot\log n}/{2}$. In order to approximate APSP, node $b$ needs to determine a distance estimation $\tilde d(b,u)$ for each $u \in \{u_1, \ldots, u_y\}$ such that $d(b,u) \leq \tilde d(b,u) \leq \alpha \!\cdot\! d(b,u)$. If $b$ does not know whether $u_i \in S_1$ or $u_i \in S_2$ for one node $u_i$ with $i \in [1..y]$, then the best, valid estimation $b$ can make is $\tilde d(b,u_i) = x$ under the assumption that $u_i \in S_2$. If however $u_i \in S_1$ is true, then the approximation ratio $\alpha$ would be
	$$\alpha = \frac{\tilde d(b,u_i)}{d(b,u_i)} = \frac{x}{L} = \frac{\lfloor n/2 + L\rfloor}{L} > \frac{n/2}{L} > {\sqrt{nc}\cdot\log n}/{2}.$$
	Hence $b$ must learn whether $u_i \in S_1$ or $u_i \in S_2$ for all $i \in [1..y]$. Let $X \in \{1,2\}^y$ be the random assignment $u_1, \ldots, u_y$ either to $S_1$ or $S_2$. I.e., $X$ represents the outcome of a $y$-fold Bernoulli process. Since each outcome $o \in \{1,2\}^y$ is equally probable, we have $p := \mathbb{P}(X \!=\! o) = 1/|X| = 1/2^n$. Thus the entropy of $X$ is 
	$$H(X) = \hspace*{-0.3cm}\sum_{o \in \{1,2\}^y} \hspace*{-0.2cm} p \log(1/p) = {2^y}\frac{1}{2^y} \log(2^y) = y \in \Omega(n)$$
	Now the conditions of \Cref{lem:lowerboundlemma} apply, hence it takes at least \smash{$\Omega\big(\!\min(L,\frac{n}{L \cdot \log^2 n})\big) = \Omega\big(\!\sqrt{n}/\log n\big)$} rounds until $b$ knows the state of $X$. We showed that the latter is a requirement in order that $\mathcal{A}$ can compute an $\alpha$-approximation for APSP on $G$.
\end{proof}

 \section{Single-Source Shortest Paths} \label{sec:sssp}
 The final section revolves around computing single-source shortest path distances. We first present an exact algorithm to solve SSSP in time $\tilde{O}(\sqrt{\SPD})$.
 Subsequently, we give two algorithms that approximate SSSP for various running times and approximation factors.
 
 \subsection{Exact SSSP}
\label{sec:sssp_exact}

In the following section we give \Cref{alg:exactSSSP} as well as a detailed description of its functionality. Subsequently we proof important properties of \Cref{alg:exactSSSP}, from which we can infer \Cref{thm:exactSSSP}.

\begin{algorithm}
	\caption{\texttt{Exact-SSSP}  \Comment{executed by a node $v$} }	
	\label{alg:exactSSSP}
	\begin{algorithmic}
		\State $i \gets 1$ 
		\While{some distance value changed in previous phase \textbf{or} $i=1$} \Comment{\textit{phase} $i$}
		\For{two rounds} \Comment{\textit{$v$ learns $G(v,2i)$}}
		\State $v$ sends all recently learned edges (\& weights) to neighbors \Comment{\textit{initially, $v$ sends incident edges}}
		\EndFor
		
		\State $R \gets \{\langle v, d_{t(i-1)}(s,v), \emptyset\rangle\}$ \Comment{\textit{initial recursion message}}
		
		\smallskip
		
		\For{$\lceil \log n \rceil \!+\! 1$ steps} \Comment{\textit{divide and conquer on subtrees}}
		\State $R' \gets \emptyset$ \Comment{\textit{recursion messages of $v$ for the next recursive step}}
		\For{every message $\langle u, d, L \rangle \in R$ \textbf{in parallel}}
		\State store $d$ as candidate value \Comment{\textit{break ties by ID(u)}}
		\State $S$ is subtree of $T(u, i)$ rooted at $v$ without the subtrees rooted at any node of $L$
		\If{$|V[S]| > 1$} \Comment{\textit{$S$ has more than one vertex}}
		\State $x \gets$ splitting node of $S$ \Comment{\textit{$x$ locally computable with knowledge of $G(v,2i)$}}
		\State add $\langle u, d, L \cup \{x\} \rangle$ to $R'$
		\State send $\langle u, d + d_i(v, x)\rangle$ by participating in an aggregation towards $x$ \Comment{\textit{use methods of \cite{AGG+18}}}
		\EndIf
		\EndFor
		
		\If{$v$ received message $\langle w,d \rangle$ through aggregation} \Comment{\textit{$v$ initiates recursion at children}}
		\State store $d$ as candidate value \Comment{\textit{break ties by ID(u)}}
		\State send recursion message $\langle w, d + w(v, c), \emptyset \rangle$ to every child $c$ of $v$ in $T(w, i)$ using local edges
		\EndIf
		
		\smallskip
		
		\State add received recursion message with smallest associated distance value to $R'$
		\State $R \gets R'$
		\EndFor
		
		\smallskip
		
		\State $d_{t(i)}(s,v) \gets$ minimum of all candidate values; $i \gets i+1$
		\EndWhile
	\end{algorithmic}			
\end{algorithm}



In phase $i$, the goal of each node $v \in V$ is to inform every node $u \in V$ within hop-distance $i$ about the weight of a path of length at most $t(i)$ from $s$ to $u$ that contains $v$. If there is a path of at most that length with a smaller weight that does not contain $v$, then it contains a different node that would instead succeed in informing $u$. 
Our idea is to use a divide-and-conquer approach. We define $T(v,i)$ as the shortest-path tree of $G(v,i)$ and we need to inform all nodes of $T(v,i)$ about $d_{t(i-1)}(s,v)$. 
The parent of each node $u$ in $T(v,i)$ is its immediate predecessors on a shortest path from $v$ to $u$ in $G(v,i)$; if there are multiple such nodes, we choose the one with smallest identifier.


We divide each phase into $\lceil \log n \rceil + 1$ steps, each of which corresponds to one recursive call in $T(v,i)$.
At the beginning of each step of phase $i$, $v$ stores a set $R$ of \emph{recursion messages}.
A recursion message is of the form $(u, d, L)$, where $u$ is the node from which the recursion originated, $d$ denotes the weight of a shortest path from $s$ to $v$ that contains $u$, and $L$ is a set of nodes of $T(u, i)$ whose subtrees can be disregarded by $v$ as they are taken care of by a different node.

At the beginning of the first step of phase $i$, the node $v$ only stores a single recursion message \\
$(v, d_{t(i-1)}(s,v), \emptyset)$, which initiates informing all nodes in $T(v,i)$.
Then, in every step $v$ does the following for each message $(u, d, L) \in R$ in parallel:
First, it stores $d$ as a \emph{candidate value} for $d_{t(i)}(s,u)$.
At the end of the phase, $v$ will determine the correct value by choosing the minimum of all received candidate values.
Let $S$ be the subtree of $T(u, i)$ rooted at $v$ that does not contain the subtrees rooted at any node of $L$.
Note that our algorithm ensures that $v$ itself is a node of $T(u, i)$, and, as $v$ knows its complete neighborhood up to a hop-distance of $2i$, it also knows $T(u, i)$.

If $|V[S]| > 1$, $v$ has to continue the recursion in $S$ by choosing a \emph{splitting node} $x$ of $S$, which is a node whose removal disconnects $S$ into trees of size at most $\frac{1}{2} |V[S]|$.
As we later show, such a node can easily be computed locally at $v$.
We continue the recursion (1) in $S$ without the subtree of $x$, and (2) in each subtree of $S$ that is rooted at a child of $x$.
For (1), $v$ simply sends a recursion message $(u, d, L \cup \{x\})$ to itself.
For (2), $v$ does the following:
It first sends a message $(u, d + d_i(v, x))$ to $x$.
More precisely, $u$ does not send the message directly, but participates in an aggregation in a simulated butterfly network as described in \cite{AGG+18}.
In the aggregation procedure $x$ does not receive all messages, but only the one that contains the minimum distance value.
We break ties by preferring the message that contains the node with minimum identifier.
Assume that $v$ receives an aggregation message $(w, d)$ from some node $w$ (where $d$ is minimal among all messages taking part in the aggregation towards $v$). Then $v$ stores $d$ as a candidate value and sends a recursion message $(w, d + w(v, c), \emptyset)$ to every child $c$ of $v$ in $T(w, i)$, which is again known to $v$ as $v$ lies in $T(w,i)$.
From all recursion messages a node receives in that way, it only keeps the one with minimal associated distance value, again breaking ties by choosing the message that contains the node with minimum identifier, and stores the distance as a candidate value.


As we do not require the nodes to know $\SPD$, we have to let the nodes detect when to terminate.
We simply stop the algorithm when for the first time no distance value changes at any node, i.e., after the first phase $i$ such that $d_{t(i)}(s,v) = d_{t(i - 1)}(s,v)$ for all nodes $v \in V$.
This can be detected by simply performing a convergecast in the butterfly as described in \cite{AGG+18} at the end of every phase.
When for the first time every node declares that its value did not change, all nodes are instructed to terminate.

We begin our analysis by showing the correctness of our algorithm.
To that end, we first show that the subtrees that are covered by a recursion message essentially halve in size every step, which implies that after $\lceil \log n \rceil + 1$ steps no node stores a recursion message anymore.

\begin{lemma}\label{lem:splittingSizes}
	Let $S$ be of size at least $|V[S]| \ge 2$.
	$v$ can compute a splitting node $x$, whose removal disconnects $S$ into trees each of size at most $\frac{1}{2} |V[S]|$.
\end{lemma}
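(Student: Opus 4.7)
The plan has two pieces: first, to justify that $v$ has enough local information to carry out the computation, and second, to exhibit a node $x$ of $S$ with the required splitting property.

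\textbf{Local computability.} At the beginning of phase $i$ invariant (i) guarantees that $v$ knows the induced subgraph $G(v,2i)$. In every recursion message $(u,d,L)$ that $v$ processes, the algorithm ensures that $v$ is a node of $T(u,i)$, so $v$ is within $i$ hops of $u$. Since $T(u,i)$ is built inside $G(u,i)$, every node of $T(u,i)$, and \emph{a fortiori} every node of the subtree $S$, lies within $2i$ hops of $v$. Hence $v$ knows all vertices and weighted edges of $S$ and can carry out any polynomial-time tree computation on $S$ locally. (The fact that $v \in T(u,i)$ is maintained inductively: the very first recursion message is a self-message at $u$, and the aggregation step only delivers the message $(u, d+d_i(v,x))$ to nodes that lie in the corresponding subtree of $T(u,i)$.)

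\textbf{Existence of a splitting node.} For each $y \in V[S]$ let $\phi(y)$ denote the size of the largest connected component of $S \setminus \{y\}$; I want a node $x$ with $\phi(x) \le \tfrac{1}{2}|V[S]|$. Set $n := |V[S]|$ and pick $x := \argmin_{y} \phi(y)$, which $v$ can compute locally because it knows $S$. I will argue by contradiction: suppose $\phi(x) > n/2$. Then exactly one component of $S \setminus \{x\}$, call it $C$, has $|C| > n/2$; let $y$ be the unique neighbor of $x$ that lies in $C$.

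\textbf{The key comparison.} I will compare $\phi(y)$ with $\phi(x) = |C|$. The components of $S \setminus \{y\}$ split into (a) the component containing $x$, whose size is $n - |C| < n/2 < |C|$, and (b) the subtrees obtained by deleting $y$ from $C$, each of which is strictly contained in $C$ and hence has size at most $|C|-1 < |C|$. Therefore every component of $S \setminus \{y\}$ has size strictly less than $|C|$, so $\phi(y) < \phi(x)$, contradicting minimality of $x$. The contradiction forces $\phi(x) \le n/2$, which is precisely the required splitting condition.

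The only step that needs any care is maintaining the invariant $v \in T(u,i)$ mentioned in the local-computability paragraph; every other detail is a routine consequence of the centroid argument. I do not foresee this being a genuine obstacle, but it is the one spot where the distributed structure of the algorithm, rather than pure tree combinatorics, enters the proof.
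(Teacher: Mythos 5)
Your proof is correct, but it establishes the key combinatorial fact (existence of a centroid) by a different route than the paper. The paper's proof is constructive and top-down: it defines $s(u)$ as the size of the subtree rooted at $u$ and $p(u)=|V[S]|-s(u)$, starts a search at the root of $S$, repeatedly descends to the child $w$ maximizing $s(w)$ as long as $p(w)<\tfrac12|V[S]|$, and stops otherwise; it then verifies directly that the stopping node splits $S$ correctly (the heaviest child's subtree has size $|V[S]|-p(y)\le\tfrac12|V[S]|$, and the ``parent side'' is small because the parent was not chosen). You instead use the classical extremal argument: take $x=\argmin_y\phi(y)$ where $\phi(y)$ is the largest component of $S\setminus\{y\}$, and show by an exchange step toward the unique heavy component that a violating $x$ cannot be a minimizer. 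Both arguments are sound and both are trivially computable once $v$ knows $S$; the paper's descent doubles as an explicit linear-time procedure, while yours is shorter to verify but needs a brute-force minimization (harmless here, since the computation is local and free). Your preliminary paragraph justifying that $v$ knows all of $S$ (via the invariant $v\in T(u,i)$ and knowledge of $G(v,2i)$) matches what the paper asserts in the surrounding text. One small omission, irrelevant to the lemma as stated but relevant to the algorithm: the splitting node must be chosen canonically so that all nodes of $T(u,i)$ agree on it, so your $\argmin$ needs a deterministic tie-break (e.g.\ by identifier), just as the paper's descent breaks ties by minimum identifier.
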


\begin{proof}
	For an inner node $u$ of $S$ define $s(u)$ as the number of nodes in the subtree of $S$ rooted at $u$, and let $p(u) = |V[S]| - s(u)$.
	The splitting node is computed by performing a search that descends into $S$, starting at its root $v$ (which, as $|V[S]| \ge 2$, must have at least one child).
	If the search is currently at some inner node $u$, then let $w$ be the child of $u$ that maximizes $s(w)$ (choosing the node with minimum identifier in case of a tie).
	If $p(w) < \frac{1}{2} |V[S]|$, then the search continues at $w$; otherwise, $u$ is chosen as the splitting node.
	Note that if $p(w) < \frac{1}{2} |V[S]|$ then $w$ cannot be a leaf node.
	Clearly, the search can be performed locally at $v$ and will eventually terminate at a splitting node $x$.
	
	Let $y$ be the child of $x$ in $S$ that maximizes $s(y)$.
	As $x$ is chosen as a splitting node, $p(y) \ge \frac{1}{2} |V[S]|$.
	Therefore, $s(y) = |V[S]| - p(y) \le |V[S]| - \frac{1}{2} |V[S]| = \frac{1}{2} |V[S]|$, and, as $y$ is the child that maximizes $s(y)$, the same holds for all other children of $x$.
	
	If $x$ does not have a parent, then the claim holds immediately.
	Otherwise, its parent must have been considered as a splitting node as well.
	However, as it has not been chosen, $p(x) < \frac{1}{2} |V[S]|$, which concludes the proof.
\end{proof}

\begin{lemma}\label{lem:correctnessPhase}
	Let $v \in V$.
	$v$ learns $d_{t(i)}(s,v)$ in phase $i$.
\end{lemma}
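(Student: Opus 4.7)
The plan is to show that at the end of phase $i$ the minimum candidate value stored at $v$ equals $d_{t(i)}(s,v)$, by sandwiching it on both sides. First I would establish a bookkeeping invariant by induction on the algorithmic step: every recursion message $\langle u, d, L\rangle$ ever processed at a node $v$ and every aggregation message $\langle u, d\rangle$ ever delivered to a node satisfies $d = d_{t(i-1)}(s, u) + d_i(u, v)$, which I abbreviate as $d_{uv}$. The base case is the initial message $\langle v, d_{t(i-1)}(s,v), \emptyset\rangle$, and both update rules preserve the invariant: the aggregation transformation $d\mapsto d+d_i(v,x)$ preserves it since $v$ lies on the $u$-to-$x$ shortest path in $T(u,i)$ so $d_i(u,v)+d_i(v,x)=d_i(u,x)$, and the child-forwarding $d\mapsto d+w(x,c)$ preserves it since $c$ is a child of $x$ in $T(u_{\mathrm{win}},i)$ so $d_i(u_{\mathrm{win}},x)+w(x,c)=d_i(u_{\mathrm{win}},c)$. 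Consequently every stored candidate equals $d_{uv}$ for some $u$ and corresponds to a walk from $s$ to $v$ of at most $t(i-1)+i=t(i)$ hops, so each candidate is bounded below by $d_{t(i)}(s,v)$.

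For the matching direction, I fix a node $u^* \in G(v,i)$ realizing $d_{u^*v}=d_{t(i)}(s,v)$ and trace the recursion initiated at $u^*$ on $T(u^*,i)$. By Lemma~\ref{lem:splittingSizes}, at every step the subtree currently responsible for $v$ shrinks by at least a factor of two as long as it has more than one node, so after $\lceil \log n\rceil+1$ steps the subtree collapses to $\{v\}$, meaning that $v$ itself has processed a recursion message originating from $u^*$'s chain; by the bookkeeping invariant this message carries value $d_{u^*v}$, furnishing the desired candidate.

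The main obstacle is the min-aggregation at splitting nodes: when a node $x$ on the path from $u^*$ to $v$ in $T(u^*,i)$ receives aggregated messages, the aggregation may be won by a competitor $\langle u', d_{u'x}\rangle$ with $d_{u'x}\leq d_{u^*x}$ (ties broken by ID), in which case $x$ forwards only to its children in $T(u',i)$, diverting the recursion away from $T(u^*,i)$. I plan to handle this via a strengthened invariant that tracks, across all steps, the minimum distance label carried by any in-flight recursion message whose responsibility still covers $v$: every diversion can only decrease this tracked value, since the winning aggregated label is bounded above by $u^*$'s, while the halving property of Lemma~\ref{lem:splittingSizes} applies to the responsibility subtree regardless of which originator currently owns it. The delicate point will be arguing that some in-flight message continues to cover $v$ despite the originator switches, by chaining the halving bound through the hierarchy of diversions, until after $\lceil \log n\rceil+1$ steps the responsibility collapses onto $v$ itself and yields a candidate of value at most $d_{u^*v}$.
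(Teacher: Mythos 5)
Your lower-bound half --- every candidate stored at $v$ equals $d_{t(i-1)}(s,u)+d_i(u,v)$ for some $u$, hence corresponds to an actual walk of at most $t(i)$ hops and cannot undercut $d_{t(i)}(s,v)$ --- is exactly the invariant the paper uses, and you have correctly isolated the real difficulty: a min-aggregation at a splitting node may be won by a competing originator. But your plan for that difficulty is where the proof breaks down. If a splitting node $x$ on the $u^*$-to-$v$ branch is won by $\langle u', d_{u'x}\rangle$, then $x$ forwards recursion messages only to its children in $T(u',i)$, so the new responsibility subtrees are subtrees of $T(u',i)$ --- a tree in which $v$ need not appear at all (indeed $v$ need not even lie in $G(u',i)$). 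After such a diversion there is in general \emph{no} in-flight message whose responsibility still covers $v$, and the halving property of \Cref{lem:splittingSizes} cannot restore coverage once it is lost; ``chaining the halving bound through the hierarchy of diversions'' therefore has nothing to chain. You also omit the induction on $i$ (with the base case $i=1$ at $s$) that is needed to justify seeding the recursion with the correct value $d_{t(i-1)}(s,v)$ in the first place.

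The paper closes the argument by showing that the diversion you worry about cannot occur along the relevant branch, rather than by tracking it. It takes $u^*$ to be the \emph{minimum-identifier} node achieving $d_{t(i-1)}(s,u^*)+d_i(u^*,v)=d_{t(i)}(s,v)$. If some $x$ on the $u^*$-to-$v$ branch of $T(u^*,i)$ ever received a strictly smaller value $d_{u'x}<d_{u^*x}$, then --- by the very bookkeeping invariant you proved --- that value is realized by an actual path from $s$ to $x$, and extending it along the branch to $v$ would yield a path to $v$ shorter than $d_{t(i)}(s,v)$, contradicting the choice of $u^*$; ties at equal value are broken toward the minimum identifier and hence toward $u^*$. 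Consequently $u^*$'s recursion wins every aggregation on the branch, survives all $\lceil\log n\rceil+1$ steps by \Cref{lem:splittingSizes}, and delivers the candidate $d_{t(i-1)}(s,u^*)+d_i(u^*,v)$ to $v$. This exclusion argument is the crux of the lemma, and it is precisely the step your proposal leaves open.
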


\begin{proof}
	We prove by induction on $i$.
	In the first phase, $s$ will choose itself as a splitting node, and send a recursion message $(s, w(s,u), \emptyset)$ to every neighbor $u \in V$, and one recursion message $(s, 0, \{s\})$ to itself.
	No subsequent recursion message will be sent, and therefore every neighbor $u$ of $s$ learns $d_{t(1)}(s, u) = d_1(s, u) = w(s,u)$.
	For every other node $w$ we have $d_1(s,w) = \infty$.
	
	Now consider phase $i>1$ and let $v \in V$.
	There must be a node $u$ such that $d_{t(i - 1)}(s, u) + d_i(u, v) = d_{t(i)}(s, v)$; if there are multiple such nodes, let $u$ be the one with minimum identifier.
	$v$ lies in $T(u,i)$, and the branch from $u$ to $v$ must be part of a shortest path of at most $t(i)$ hops from $s$ to $v$.
	Note that every distance value that is received by any node $w$ in phase $i$ corresponds to an actual path of that length and with at most $t(i)$ hops from $s$ to $w$; therefore, no node of the branch from $u$ to $v$ will ever receive a recursion message with a smaller distance value, as in this case there would exist an even shorter path from $s$ to $v$.
	Furthermore, by our choice of $u$, every node of the branch will always prefer recursion messages corresponding to root $u$ over recursion messages with the same distance value corresponding to any other root.
	Therefore, and by \Cref{lem:splittingSizes}, $v$ will receive $d_{t(i)}(s, v)$ within the $\lceil \log n \rceil + 1$ steps, and will not receive any smaller distance value.
\end{proof}	
	%
	

We now show that termination of the algorithm is correct.

\begin{lemma}\label{lem:termination}
	No distance value changes in phase $i$ if and only if $d_{t(i)}(s,v) = d(s,v)$ for every node $v \in V$.
\end{lemma}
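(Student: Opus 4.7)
The plan is to prove each direction separately: the ``only if'' direction is the substantive correctness statement for the termination check, while the ``if'' direction follows from monotonicity of hop-limited distances.

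For the ``only if'' direction, assume that $d_{t(i)}(s,v) = d_{t(i-1)}(s,v)$ for every $v \in V$, and argue by contradiction that $d_{t(i)}(s,v) = d(s,v)$ for all $v$. Suppose otherwise and pick a node $v$ that minimizes $d(s,v)$ among those with $d_{t(i)}(s,v) > d(s,v)$. Let $u$ be the node immediately preceding $v$ on some shortest $s$-to-$v$ path. Then $d(s,u) < d(s,v)$, so by the minimality of $v$ we get $d_{t(i)}(s,u) = d(s,u)$, and the no-change assumption then yields $d_{t(i-1)}(s,u) = d(s,u)$ as well. Since $u$ is a neighbor of $v$, we have $u \in G(v,i)$ for any $i \ge 1$, and the update formula \eqref{eq:exactSSSP}, whose correctness is given by \Cref{lem:correctnessPhase}, implies
\[
d_{t(i)}(s,v) \;\le\; d_{t(i-1)}(s,u) + d_i(u,v) \;\le\; d(s,u) + w(u,v) \;=\; d(s,v),
\]
contradicting the choice of $v$. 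Hence $d_{t(i)}(s,v) = d(s,v)$ for every $v \in V$.

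For the ``if'' direction, use that for every node $v$ the sequence $d_{t(j)}(s,v)$ is monotone non-increasing in $j$ and always lower bounded by $d(s,v)$. Once the value has reached $d(s,v)$, it cannot decrease further and cannot strictly exceed $d(s,v)$ either, so exactness of all $d_{t(i)}$ values forces $d_{t(i-1)}(s,v) = d_{t(i)}(s,v) = d(s,v)$ for every $v$, which is precisely ``no distance value changes in phase $i$''.

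The main obstacle is the minimality argument in the ``only if'' direction: one must arrange carefully that the predecessor $u$ of $v$ on a shortest path inherits \emph{both} exactness at hop-bound $t(i-1)$ and membership in $G(v,i)$, so that the single-edge relaxation through \Cref{lem:correctnessPhase} delivers the contradicting upper bound on $d_{t(i)}(s,v)$. Once this is in place, the single-step Bellman--Ford style relaxation closes the argument immediately.
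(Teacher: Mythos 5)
Your ``only if'' direction (no change $\Rightarrow$ all values exact) is correct, and it is essentially the paper's argument in a different guise: the paper inducts on the hop count of a shortest path to $v$, while you take a minimal counterexample with respect to $d(s,v)$; both reduce to the same single-edge relaxation $d_{t(i)}(s,v)\le d_{t(i-1)}(s,u)+d_i(u,v)\le d(s,u)+w(u,v)$ through the predecessor $u$, and your strict inequality $d(s,u)<d(s,v)$ is justified because edge weights are at least $1$. (Minor quibble: the inequality you need is a definitional property of $h$-limited distances via $t(i)=t(i-1)+i$, not something that requires \Cref{lem:correctnessPhase}.)

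The ``if'' direction, however, contains a genuine gap. From the hypothesis $d_{t(i)}(s,v)=d(s,v)$ together with monotonicity you only get $d_{t(i-1)}(s,v)\ge d_{t(i)}(s,v)$; the step ``exactness of all $d_{t(i)}$ values forces $d_{t(i-1)}(s,v)=d_{t(i)}(s,v)$'' assumes that the value had \emph{already} reached $d(s,v)$ before phase $i$, which is precisely what has to be shown and is false in general. Concretely, on the unweighted path $s$--$a$--$b$ we have $d_{t(1)}(s,b)=d_1(s,b)=\infty$ but $d_{t(2)}(s,b)=2=d(s,b)$, so at the end of phase $i=2$ all values are exact even though $b$'s value changed during phase $2$. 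The lemma's phrasing is admittedly loose here; what the paper's proof actually establishes (and what the runtime analysis uses) is the shifted statement that once $d_{t(i)}(s,v)=d(s,v)$ holds for all $v$, no node can receive a strictly smaller candidate value in phase $i+1$ or later, because every candidate corresponds to the weight of an actual $s$--$v$ path and is therefore bounded below by $d(s,v)$. You should prove that version (exactness at the end of phase $i$ implies no change in any subsequent phase) rather than the literal ``no change in phase $i$'' claim.
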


\begin{proof}
	First, if $d_{t(i)}(s,v) = d(s,v)$ for every node $v$, then clearly no node will ever receive a smaller distance value anymore, as in that case there would exist an even shorter path from $s$ to to that node.
	For the other direction, assume that no distance value changes in some phase $i$.
	Let $\ell_v$ be the number of hops of a shortest path from $s$ to $v$.
	We prove that $v$ knows $d(s,v)$ at the beginning of phase $i$ by induction on $\ell_v$.
	If $\ell_v = 0$, then $v = s$, and as all edge weights are positive, $v$ knows $d(s,v)$ already at the beginning of the first phase, and thus also at the beginning of phase $i$.
	Now let $\ell_v = j$.
	Then there exists a neighbor $u$ of $v$ such that $\ell_u = j-1$ and $d(s,v) = d(s,u) + w(u,v)$.
	By the induction hypothesis, $u$ knows $d(s,u)$ at the beginning of phase $i$, and by definition of our algorithm, $v$ must receive $d(s,v)$ as a candidate value throughout the execution of phase $i$.
	However, as $d_{t(i-1)}(s,v) = d_{t(i)}(s,v)$, we must have $d_{t(i-1)}(s,v) = d(s,v)$, and thus $v$ knew $d(s,v)$ already at the beginning of phase $i$.
\end{proof}

Let us turn to the runtime of the algorithm.

\begin{lemma}
	The algorithm terminates after $2\sqrt{\SPD} + 1$ phases.
	Every phase takes time $O(\log^2 n)$, w.h.p.
\end{lemma}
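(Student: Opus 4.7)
The plan is to treat the two claims separately. For the phase count, I would combine \Cref{lem:correctnessPhase} and \Cref{lem:termination}: after phase $i$ every node $v$ knows $d_{t(i)}(s,v)$, and by the termination lemma the while-loop exits after the first phase in which no value changes, which is the first phase whose $t(i)$ already equals or exceeds $\SPD$. Since $t(i) = i(i+1)/2 \geq i^2/2$, the threshold $t(i) \geq \SPD$ is reached once $i \geq \lceil \sqrt{2\SPD}\,\rceil$, and one additional phase then suffices for the algorithm to detect stability and halt. A short numerical check confirms $\lceil \sqrt{2\SPD}\,\rceil + 1 \leq 2\sqrt{\SPD}+1$ for every $\SPD \geq 1$, yielding the first claim deterministically.

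For the per-phase runtime, I would decompose one phase into (a) the two rounds of local-edge communication that re-establish invariant (i), and (b) the $\lceil \log n\rceil + 1$ recursion steps that carry out the divide-and-conquer maintaining invariant (ii). Part (a) trivially contributes $O(1)$ rounds. Within each recursion step of (b), the only nontrivial work is one round of local-edge communication (to forward recursion messages from each node to its children in some $T(u,i)$) together with one invocation of the butterfly-based aggregation routine of \cite{AGG+18} over the global network. The latter terminates in $O(\log n)$ rounds w.h.p.\ provided that no node is source or target of more than polylogarithmically many aggregations in that step. Multiplying the $\lceil \log n\rceil+1$ steps by $O(\log n)$ rounds per aggregation would then yield the claimed $O(\log^2 n)$ bound.

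The main obstacle is therefore to bound the concurrent aggregation load at each node: I need to show that $|R_v| = O(\log n)$ at every step $j$ of every phase $i$, w.h.p. My plan is induction on $j$. Each message $\langle u, d, L\rangle \in R_v$ corresponds to the (unique) frontier subtree of $T(u,i)$ rooted at $v$ at the current recursion depth, so for fixed $u$ there is at most one message in $R_v$ per step. Consequently, $|R_v|$ is bounded by the number of distinct roots $u$ for which $v$ currently heads a frontier subtree of $T(u,i)$. Using the halving property of \Cref{lem:splittingSizes} (which caps the recursion depth at $\lceil \log n\rceil + 1$), combined with the observation that the frontier subtrees for a fixed root partition $T(u,i)$ into pieces of roughly balanced size, a counting argument over the dyadic splitting structure controls this quantity. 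A union bound over all $n$ nodes, $O(\log n)$ recursion steps, and $O(\sqrt{\SPD})$ phases then establishes the load bound uniformly, after which the $O(\log^2 n)$ per-phase runtime follows from the aggregation guarantees of \cite{AGG+18}.
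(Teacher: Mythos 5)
Your first claim (the phase count) is handled exactly as in the paper: \Cref{lem:correctnessPhase} gives $d_{t(i)}(s,v)=d(s,v)$ once $t(i)\geq\SPD$, \Cref{lem:termination} turns the subsequent unchanged phase into termination, and the arithmetic $\lceil\sqrt{2\SPD}\,\rceil+1\leq 2\sqrt{\SPD}+1$ is fine. The decomposition of a phase into $O(1)$ local rounds plus $\lceil\log n\rceil+1$ recursion steps, each costing one $O(\log n)$-round aggregation, is also the paper's structure.

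The gap is in the load bound, which is the only nontrivial part of the second claim. You correctly identify that you must show $|R_v|=O(\log n)$ at every step, but the argument you sketch --- counting the distinct roots $u$ for which $v$ ``heads a frontier subtree'' and controlling this via ``a counting argument over the dyadic splitting structure'' --- is not carried out, and I do not see how it would go through on its own: a node $v$ can lie in $T(u,i)$ for up to $\Theta(n)$ roots $u$, and the halving property of \Cref{lem:splittingSizes} only bounds the recursion \emph{depth}, not the number of roots whose recursions simultaneously reach $v$. What actually bounds $|R_v|$ is a purely algorithmic accounting step that your proposal never invokes: in each recursion step, (a) each message $\langle u,d,L\rangle$ currently in $R$ contributes at most one continuation message $\langle u,d,L\cup\{x\}\rangle$ back to $v$ itself, and (b) of all recursion messages $v$ receives from splitting nodes over local edges in that step, the algorithm explicitly \emph{keeps only the one with smallest associated distance value} (and the aggregation already collapses all global messages destined for $v$ into a single one). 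Hence $|R|$ increases by at most one per step, so after $j\leq\lceil\log n\rceil+1$ steps $|R|\leq j+1=O(\log n)$, each node is a member of $O(\log n)$ aggregations and the target of at most one, and the \cite{AGG+18} guarantee applies. Without point (b) the bound fails, since $v$ may be a child of splitting nodes in many distinct trees in the same step. (A minor omission: the per-phase cost should also account for the $O(\log n)$-round convergecast used to detect termination.)
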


\begin{proof}
	As any shortest path has length at most $\SPD$, after phase $2\sqrt{\SPD}$ every node $u$ knows $d_{t(i)}(s,u) = d(s,u)$ by \Cref{lem:correctnessPhase}.
	Therefore, \Cref{lem:termination} implies that no distance value changes in the subsequent round, in which case the algorithm terminates.
	A phase consists of $O(\log n)$ steps, where in each step a node may be member of as many aggregations as it has recursion messages stored.
	Note that if a node stores $k$ recursion messages at the beginning of a step, then it may send at most $k$ recursion messages to itself and receive at most one recursion message corresponding to a different root in this step.
	Together with \Cref{lem:splittingSizes}, this observation implies that in each step a node may only be member of at most $\lceil \log n \rceil + 1$ aggregations, and target of at most one aggregation.
	From the discussion of \cite{AGG+18} it follows that all aggregations of a single step can be performed in time $O(\log n)$, w.h.p.
	The convergecast to detect termination at the end of a phase takes an additional $O(\log n)$ steps.
\end{proof}

From the above, we conclude \Cref{thm:exactSSSP}. The algorithm can easily be modified to solve $(h,k)$-SSP for given $h$ and $k$.
As the shortest-path diameter of $G$ is generally not known, our algorithm has to sequentially increase the distance at which the nodes learns their respective neighborhood, and propagate their distance information. 
For a given hop-distance $h$, we can first let every node $v \in V$ learn $G(v,2\lceil \sqrt{kh} \rceil)$ by sending all information about $G$ known so far via its local edges for $2\lceil \sqrt{kh} \rceil$ rounds.
Then, we separately perform $\lceil h/ \sqrt{kh} \rceil$ phases of the algorithm for each $k$, where in each phase every node $v$ always propagates distance information by $\lceil \sqrt{kh} \rceil$ hops. 
The total runtime amounts to $\widetilde{O}(\sqrt{kh} + k \cdot h/ \sqrt{kh}) = \widetilde{O}(\sqrt{kh})$, which concludes the following theorem.

\begin{theorem}
	The modified algorithm solves $(h,k)$-SSP in time $\tilO\big(\!\sqrt{kh}\big)$, w.h.p.
\end{theorem}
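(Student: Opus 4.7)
The plan is to verify the sketch of the modified algorithm preceding the theorem, and analyze it along the same lines as \Cref{thm:exactSSSP}. Set $L := \lceil \sqrt{kh} \rceil$ and $P := \lceil h/L \rceil$. The modification compared to \Cref{alg:exactSSSP} is twofold: first, we drop the triangular schedule $t(i) = \sum_{j \le i} j$, and replace it with a \emph{uniform} schedule where phase $i$ extends known distances from $(i-1) \cdot L$ to $i \cdot L$ hops; second, we run the algorithm simultaneously for all $k$ sources in $S$, labeling recursion messages by their originating source so that they do not interfere with each other.

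First I would handle the local precomputation: every node $v$ broadcasts all currently known edges of $G$ to its local neighbors for $2L$ rounds, after which $v$ knows the induced subgraph $G(v, 2L)$. This guarantees, just as in the exact SSSP algorithm, that every node locally knows the full shortest-path tree $T(u, L)$ whenever $\mathrm{hop}(u,v) \le L$, which is all we will need to run a single phase covering $L$ hops. Next, for each source $s \in S$ we run $P$ phases of the divide-and-conquer subroutine from \Cref{alg:exactSSSP}, each tagged with $s$. In phase $i$, the invariant we maintain is that every node $v$ knows $d_{(i-1)L}(s,v)$ at the start, and learns $d_{iL}(s,v)$ by the end; correctness of a single phase is the direct analogue of \Cref{lem:correctnessPhase}, since the only property used there is that distance labels correspond to actual paths and that the recursion on $T(u,L)$ eventually reaches every target. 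After $P$ phases we have $d_{P \cdot L}(s,v) \ge d_h(s,v)$ for every $v$, and since all edge weights are positive, the minimum recorded candidate value is exactly $d_h(s,v)$.

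For the runtime, the local exploration costs $O(L)$ rounds. Each phase of the recursion consists of $\lceil \log n \rceil + 1$ steps, and by \Cref{lem:splittingSizes} a node participates as member in at most $O(\log n)$ aggregations and as target in at most one per step, so the aggregation machinery of \cite{AGG+18} completes one step in $\tilO(1)$ rounds w.h.p. Thus a single phase for a single source runs in $\tilO(1)$ rounds, and the $kP$ phases in total contribute $\tilO(kP) = \tilO(kh/L)$ rounds. Adding the local exploration gives $\tilO(L + kh/L)$, which is minimized at $L = \sqrt{kh}$ and evaluates to $\tilO(\sqrt{kh})$, as claimed.

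The step I expect to be the main obstacle is controlling global congestion once we overlay the $k$ source-instances. Each node can be a splitting node, target of an aggregation, and carrier of recursion messages for several sources at once, so I would either (i) schedule the $k$ source-instances in $O(k/\text{polylog}(n))$ blocks and amortize, or (ii) argue that because every source instance sends $\tilO(1)$ aggregations per step per node, the combined per-round load across all sources is $\tilO(k)$, which the butterfly aggregation can absorb in $\tilO(1)$ amortized rounds per step after scheduling — giving the same $\tilO(\sqrt{kh})$ bound. The correctness argument (\Cref{lem:correctnessPhase}) carries over essentially verbatim, since recursion messages for distinct sources never affect each other's candidate values.
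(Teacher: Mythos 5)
Your proposal matches the paper's proof: replace the triangular schedule by uniform $L$-hop phases with $L=\lceil\sqrt{kh}\rceil$, let every node learn $G(v,2L)$ once in $O(L)$ rounds, and run $\lceil h/L\rceil$ phases per source, for a total of $\tilO(L + k\cdot h/L)=\tilO(\sqrt{kh})$ rounds. Your closing worry about congestion from overlaying the $k$ source-instances is moot in the paper's version --- and already in your own runtime accounting of $kP$ phases at $\tilO(1)$ each --- because the $k$ instances are executed \emph{sequentially}, one after the other, so no congestion arises beyond that of a single instance.
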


It remains to discuss the local capacity $\lambda$ that \Cref{alg:exactSSSP} requires.

\begin{lemma}
	\label{lem:exactAPSPcongestion}
	\Cref{alg:exactSSSP} works in the same time $\tilO\big(\!\sqrt{\SPD}\big)$ for local capacity $\lambda = \tilT\big(n^2/\sqrt{\SPD}\big)$.
\end{lemma}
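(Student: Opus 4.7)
The plan is to identify the two distinct uses of local edges in \Cref{alg:exactSSSP} and bound the per-edge congestion produced by each, before showing that a pipelined/scheduled version of the algorithm fits within local capacity $\lambda=\tilT(n^2/\sqrt{\SPD})$ without inflating the overall runtime.

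First, I would locate the two places where local edges are used. Use (A) is the two-rounds-per-phase exploration loop, whose purpose is to maintain invariant (i), namely that every node $v$ knows $G(v,2i)$ at the beginning of phase $i$. Use (B) is the downward propagation of recursion messages of the form $\langle w, d+w(v,c), \emptyset\rangle$ that a node $v$ sends to each child $c$ in $T(w,i)$ after receiving an aggregation result. Use (B) is easy to dismiss: the analysis of the algorithm already shows that each node participates in at most $O(\log n)$ aggregations per step and the recursion has $O(\log n)$ steps per phase, so each local edge carries only $\tilO(1)$ recursion messages per phase, i.e. $\tilO(\sqrt{\SPD})$ over the whole run — well within $\lambda=\tilT(n^2/\sqrt{\SPD})$.

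The interesting case is (A). Here the key observation is that, over the entire execution, the only information that ever needs to traverse a given local edge $\{u,v\}$ is a description of the edges of $G(u,2\sqrt{\SPD})\cup G(v,2\sqrt{\SPD})$, since by the end of the algorithm each endpoint knows at most its $2\sqrt{\SPD}$-hop neighborhood. This neighborhood contains at most $O(n^2)$ edges, each encoded by $O(\log n)$ bits. Instead of literally running the ``for two rounds'' exploration of each phase as written, I would perform a single pipelined BFS-style flooding on the local network in parallel with the global-network parts of the algorithm: in every round, each local edge forwards up to $\lambda$ not-yet-forwarded edge descriptions, selected in BFS order (nearest-hop-first, breaking ties by a canonical ordering known to both endpoints). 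The classical dilation+congestion argument, applied exactly as in \Cref{lem:congestion_token_dissem} via \Cref{lem:balance_congestion}, then gives a time bound of $O(D+C/\lambda+\log n)$, where $D=2\sqrt{\SPD}$ is the dilation (depth of the BFS) and $C\le \tilO(n^2)$ is the per-edge message load. Choosing $\lambda=\tilT(n^2/\sqrt{\SPD})$ makes both terms $\tilO(\sqrt{\SPD})$.

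The main obstacle is showing that the pipelined schedule delivers $G(v,2i)$ to $v$ in time for the recursion of phase $i$ to start. With nearest-hop-first forwarding, an edge at hop distance $j$ from $v$ is scheduled for transmission no later than round $2j+O(|G(v,2j)|/\lambda)$; for every $j\le\sqrt{\SPD}$ this is $\tilO(\sqrt{\SPD})$, so information needed in phase $i$ arrives by round $\tilO(i)+\tilO(\sqrt{\SPD})$, which is subsumed by a global $\tilO(\sqrt{\SPD})$ runtime — and we can simply have each node withhold its start of phase $i$ until it has detected receipt of its full $2i$-neighborhood (detected by an end-of-BFS marker). Since the recursion steps of phase $i$ add only $\tilO(1)$ rounds and can be run concurrently with ongoing lower-layer forwarding, the total time remains $\tilO(\sqrt{\SPD})$. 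Alternatively, the cleanest route is to perform all local exploration upfront in $\tilO(\sqrt{\SPD})$ rounds (using the same pipelined BFS) and then execute the purely global recursive phases sequentially; both routes yield the lemma.
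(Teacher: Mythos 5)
Your proposal is correct and follows essentially the same route as the paper: it splits the local communication into the neighborhood-learning flood and the splitting-node notifications, bounds the former via the congestion-plus-dilation scheduling argument of \Cref{lem:balance_congestion} with congestion $C=\tilO(n^2)$ and dilation $\tilO(\!\sqrt{\SPD})$ (this is exactly \Cref{lem:congestion_to_learn_G}), and observes that the recursion messages contribute only $\tilO(1)$ per edge per step. Your per-phase bound for the recursion messages is in fact slightly tighter than the paper's $\tilO(n)$-per-step bound, but both are comfortably within the stated $\lambda$.
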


\begin{proof}
	\Cref{alg:exactSSSP} uses local edges for two purposes: (I) so that each node learns $G$ up to distance $\tilO(\!\sqrt{\SPD})$ in the same number of rounds and (II) for splitting nodes to notify their children in their respective subtrees for which they need to start a new recursion. In \Cref{apx:balance_congestion} we give \Cref{lem:congestion_to_learn_G}, which shows that (I) can be done with $\lambda = \tilT\big(n^2/\sqrt{\SPD}\big)$ in the time of the algorithm. For (II) we analyze the congestion on local edges caused by the notifications from splitting nodes to their children in one recursion step.
	
	Let $N$ be the overall number of splitting nodes in some fixed step of \Cref{alg:exactSSSP}. Note that some node $v \in V$ can be splitting node in multiple trees. However, since the number of recursion instances each node handles in parallel is $\tilO(1)$ and since every node appoints at most one splitting node per recursion instance, we have $N = \tilO(n)$. 	
	Every splitting node sends at most one message over each incident edge. This means that in every step we have congestion at most $N = \tilO(n)$ on each edge caused by splitting nodes.
\end{proof}

 \subsection{\boldmath Approximate SSSP in ${\tilO(n^{1/3})}$}
\label{sec:sssp_approx}

In the following we discuss the concept and properties of \Cref{alg:approxSSSP}. As some of the arguments are similar to \Cref{sec:apsp}, we will restrict ourselves to a briefer discussion. Subsequently we prove \Cref{thm:approximateSSSP}. The base concept is to compute a skeleton graph $S=(M,E_S)$ (which the source $s$ is always part of) and then use token dissemination to simulate the broadcast congested clique (BCC) model (c.f.\ \Cref{def:BCCmodel}) on $M$. This allows use the algorithm of \cite{BKK+17} for said model, to $(1\!+\!\varepsilon)$-approximate SSSP on $S$. Then we broadcast the distance estimations of all pairs $M\times\{s\}$ with token dissemination, which we can use to approximate SSSP-distances on the whole graph.

\begin{definition}[Broadcast Congested Clique Model]
	\label{def:BCCmodel}
	Given a set of nodes with the standard synchronous message passing model,  with a different notion of how messages are sent and received. That is, every round each node can send one message of size $O(\log n)$, which is known by all nodes in the subsequent round.
\end{definition}

\begin{algorithm}[H]	
	\caption{\texttt{Approximative-SSSP($\varepsilon$)}\Comment{$h  = \tilO(x)$, \textit{$1\!+\!\varepsilon$ is the approximation factor}}} 
	\label{alg:approxSSSP}
	\begin{algorithmic}
		\If {$v$ is the source} mark $v$ \Comment{\textit{source is always part of $M$}}
		\Else \,mark $v$ with probability \smash{$\frac{1}{x}$}
		\EndIf
		\For{$h$ rounds}\Comment{\textit{learn $h$-hop limited distances}}
		\State $v$ sends knowledge about $G$ it learned last round via local edges \Comment{\textit{initially $v$ sends its incident edges}}
		\EndFor
		\State \texttt{SSSP-By-Simulating-BCC($\varepsilon$)} \Comment{\textit{SSSP approximation of \cite{BKK+17} on marked nodes simulating BCC}}
		\State approximate SSSP distances locally with \Cref{eq:approxSSSP}
	\end{algorithmic}			
\end{algorithm}

\begin{equation*}
\tilde d_{sv} := \min\Big(d_h(s,v), \min_{u\in M} \big(\tilde d_{su} + d_h(u,v)\big)\Big) \tag{\ref{eq:approxSSSP}}
\end{equation*}

\begin{fact}
	\label{fct:approxSSSPskeleton}	
	As we have seen before, the size of $M$ is \smash{$\tilO(n/x)$} w.h.p. Moreover \Cref{alg:approxSSSP} establishes a weighted graph $S = (M,E_S)$ with $E_S \coloneqq \{ \{u,v\} \!\mid\! u,v\!\in\! M, \text{hop}(u,v) \!\leq\! h\}$ and weights $d_h(u,v)$ for $\{u,v\} \in E_S$.
	All nodes $v \in V$ know all $u \in V$ that are within $h$ hops as well as the distances $d_h(u,v)$.
\end{fact}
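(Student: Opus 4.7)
The plan is to verify the three claims of the fact by reducing them to arguments essentially identical to those already used for \texttt{Construct-Skeleton} in \Cref{sec:apsp_exact}, with only cosmetic modifications needed to account for the deterministic inclusion of the source $s$ in $M$.

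For the size bound $|M| = \tilO(n/x)$, I would apply a Chernoff bound to the sum of independent $\{0,1\}$-indicators, one per non-source node, each of which is $1$ with probability $1/x$. The expectation is $(n-1)/x$, and the standard tail inequality invoked in the proof of \Cref{fct:constructskeleton} gives concentration within a $\log n$ factor, w.h.p. The forced marking of $s$ contributes a deterministic $+1$ that is absorbed by the $\tilO$ notation.

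For the claim that every node learns its $h$-hop neighborhood together with the $h$-limited distances to its members, I would argue by induction on the round counter $r \in \{1,\dots,h\}$ that after $r$ rounds every node $v$ knows the weighted subgraph of $G$ induced by its $r$-hop neighborhood, along with each such node's marked/unmarked status. The marking bit can be piggybacked on the structural messages at no asymptotic cost, since it is one bit per ID and the local edges are uncapacitated. The base case holds because each node initially transmits its incident edges together with its own mark bit; the inductive step follows since every neighbor of $v$ forwards, in round $r{+}1$, everything it learned by round $r$. After $h$ rounds, $v$ can locally run Dijkstra on the subgraph it has learned to recover $d_h(u,v)$ for every $u$ with $hop(u,v)\leq h$.

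The existence of the weighted graph $S=(M,E_S)$ with the specified edge set and weights then follows immediately: since each $u \in M$ knows both the marks and the $h$-limited distances of all nodes within hop-distance $h$, it knows exactly which skeleton edges are incident to it and the associated weight $d_h(u,v)$ of each, so $S$ is well-defined and its edges are implicitly known to their endpoints. I do not anticipate any real obstacle, as this fact is a direct restatement of \Cref{fct:constructskeleton} with $s$ always included in $M$; the only point meriting explicit attention is confirming that this deterministic inclusion does not perturb the Chernoff concentration bound on $|M|$.
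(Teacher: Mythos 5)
Your proposal is correct and follows essentially the same route as the paper, which states this fact without proof as a direct restatement of Fact~\ref{fct:constructskeleton} (flooding the local network for $h$ rounds, piggybacking the mark bit, and computing $h$-limited distances locally), the only difference being the deterministic inclusion of $s$ in $M$, which, as you note, is absorbed into the $\tilO(n/x)$ bound.
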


\begin{algorithm}[H]	
	\caption{\texttt{SSSP-By-Simulating-BCC($\varepsilon$)}\Comment{\textit{$1\!+\!\varepsilon$ is the approximation factor}}} 
	\label{alg:simulateBCC}
	\begin{algorithmic}
		\If {$v$ is marked} \Comment{\textit{marked nodes participate in SSSP approximation}}
		\State take part in BCC simulation \Comment{\textit{BCC-round equals a run of \texttt{Token-Dissemination}}}
		\State take part in the algorithm of \cite{BKK+17} to ($1\!+\!\varepsilon$)-approximate SSSP on simulated BCC
		\State create a token $\langle I\!D(v), \tilde d_{sv}\rangle$ \Comment{\textit{$\tilde d_{sv}$ is an $(1\!+\!\varepsilon)$-approximation of $d(s,v)$}}
		\EndIf
		\State \texttt{Token-Dissemination} \Comment{\textit{disseminate approximate distances}}
	\end{algorithmic}			
\end{algorithm}

\begin{lemma}
	\label{lem:simulateBCC}
	\Cref{alg:simulateBCC} computes and disseminates a $(1\!+\!\varepsilon)$-approximation $\tilde d_{su}$ of $d(s,u)$ between every pair $(u,s) \in M \times \{s\}$ in $\tilO(\!\sqrt{n/x} \cdot \varepsilon^{-9})$ rounds.
\end{lemma}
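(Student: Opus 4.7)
The plan is to establish three things in sequence: (i) one round of the broadcast congested clique (BCC) restricted to the set $M$ of marked nodes can be simulated in $\tilO(\sqrt{n/x})$ hybrid rounds; (ii) running the Becker--Karrenbauer--Krinninger--Lenzen algorithm on top of this simulation yields, at every $u \in M$, a value $\tilde d_{su}$ that $(1+\varepsilon)$-approximates $d_G(s,u)$; (iii) broadcasting these $|M|$ values to all of $V$ adds only another $\tilO(\sqrt{n/x})$ rounds. Summing the three contributions gives the stated $\tilO(\sqrt{n/x}\cdot \varepsilon^{-9})$ bound.

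For (i), in a single BCC round every node of $M$ wants to make one $O(\log n)$-bit message known to every other node of $M$. This is exactly a token-dissemination instance with $k=|M|=\tilO(n/x)$ tokens (the sender's ID packed into each token) and $\ell=1$, so by Theorem~\ref{thm:tokenDissemination} it finishes in $\tilO(\sqrt{n/x})$ rounds w.h.p. Concatenating the $\tilO(\varepsilon^{-9})$ BCC rounds of the Becker et al.\ algorithm therefore costs $\tilO(\sqrt{n/x}\cdot\varepsilon^{-9})$ hybrid rounds. The input that each simulated BCC node needs is the list of its incident skeleton edges together with their weights, which is already guaranteed by Fact~\ref{fct:approxSSSPskeleton}, so the emulation really is black-box.

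For (ii) the only nontrivial point is that the $(1+\varepsilon)$ approximation computed on $S$ is also a $(1+\varepsilon)$ approximation in $G$. This follows from the same dichotomy used in Lemma~\ref{lem:distancesSkeleton}: for any pair $u,v \in M$, either a shortest $uv$-path has at most $h$ hops and is represented by a single edge of weight $d_h(u,v)$ in $E_S$, or, by Lemma~\ref{lem:longPathsM}, some shortest $uv$-path contains a marked node every $h$ hops w.h.p.\ and can thus be assembled from skeleton edges of matching total weight. Consequently $d_S(s,u)=d_G(s,u)$ for every $u\in M$, and the $(1+\varepsilon)$-approximation on $S$ carries over verbatim. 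Step (iii) is another direct invocation of Theorem~\ref{thm:tokenDissemination}: each marked node turns its value $\tilde d_{su}$ into one token, giving $k=|M|=\tilO(n/x)$ tokens with $\ell=1$, which are disseminated to every node of $V$ in $\tilO(\sqrt{n/x})$ rounds.

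The only conceptually delicate obstacle is the BCC simulation in step (i), since a naive simulation would saturate the $O(\log n)$ per-node global capacity: the point is that token dissemination already trades contention on the global network against a short local exchange, collapsing the $|M|$ broadcasts of a BCC round into $\tilO(\sqrt{|M|})$ hybrid rounds; everything else in the argument is bookkeeping.
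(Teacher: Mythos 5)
Your proposal is correct and follows essentially the same route as the paper's proof: simulate each BCC round on $M$ by one run of token dissemination ($k=|M|$, $\ell=1$), run the Becker et al.\ algorithm as a black box for $\tilO(\varepsilon^{-9})$ simulated rounds, and disseminate the resulting $|M|$ estimates with one final token-dissemination call, giving $\tilO\big(\!\sqrt{n/x}\cdot\varepsilon^{-9}\big)$ in total. Your step (ii), which justifies that skeleton distances equal distances in $G$ so the $(1\!+\!\varepsilon)$ guarantee carries over, is left implicit in the paper's proof of this lemma (it is supplied by \Cref{lem:longPathsM} and the analogue of \Cref{lem:distancesSkeleton}), so making it explicit is a harmless and indeed welcome addition.
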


\begin{proof}
	We can simulate one round in the BCC model on the skeleton $S$ in our hybrid network model as follows. Every node in $M$ creates a token of size $O(\log n)$ from the message it wants to send and then participates in the \texttt{Token-Dissemination} protocol from \Cref{sec:tokenDissemination} to disseminate that message to all nodes. This takes $\tilO(\!\sqrt{|M|})$ rounds due to \Cref{thm:tokenDissemination}. Then we invoke the algorithm of \cite{BKK+17} on the simulated BCC model. This gives us approximations $\tilde{d}_{su}$ with $d(s,u) \leq \tilde{d}_{su} \leq (1\!+\!\varepsilon)d(s,u)$ for all $u \in M$ (c.f.\ Theorem 8 of \cite{BKK+17}).
	
	Their algorithm has a running time of $\tilO(\varepsilon^{-9})$ rounds in the BCC model, thus the total running time is \smash{$\tilO\big(\!\sqrt{|M|} \cdot \varepsilon^{-9}\big)$} rounds. The subsequent dissemination of the approximated distances $\tilde{d}_{su}$ for all $u \in M$ takes only \smash{$\tilO\big(\!\sqrt{|M|}\big)$} rounds. Since we sample $M$ with probability $1/x$ out of $n$ nodes (and add at most one additional node, namely the source) we have \smash{$|M| \in  \tilO(n/x)$} and the claim follows.
\end{proof}

\begin{proof}[Proof of \Cref{thm:approximateSSSP}]
	Let $v\in V$. We make a case distinction. First assume the simple case that a shortest $s$-$v$ path $P$ with $|P| \leq h$ hops exists. Then $d(s,v) = d_h(s,v)$ and \Cref{alg:approxSSSP} outputs the exact distance, as can be seen from \Cref{eq:approxSSSP}. Now assume the contrary, i.e., all shortest $s$-$v$ paths have more than $h$ hops. Then there is a shortest $s$-$v$ path $P$, that has a marked node $u \in M$ on it with $hop(u,v) \leq h$. This is due to \Cref{lem:longPathsM} for appropriately chosen $h \in \tilO(x)$. Therefore $d_h(u,v) = d(u,v)$. Node $v$ already knows an approximation $\tilde{d}_{su}$ for the sub-path of $P$ from $s$ to $u$ with $d(s,u) \leq \tilde{d}_{su} \leq (1\!+\!\varepsilon)d(s,u)$ as we showed in \Cref{lem:simulateBCC}. We have
	$$w(P) = d(s,u) + d(u,v) \leq \tilde{d}_{su} + d_h(u,v) \leq (1\!+\!\varepsilon)d(s,u) + d(u,v) \leq (1\!+\!\varepsilon)w(P),$$
	where $\tilde{d}_{su} + d_h(u,v)$ is computed by \Cref{alg:approxSSSP} in \Cref{eq:approxSSSP}. The round-complexity of \Cref{alg:approxSSSP} is $\tilO(x)$ for the local search and $\tilO(\!\sqrt{n/x} \cdot \varepsilon^{-9})$ for the computation and dissemination of the approximated distances $\tilde{d}_{su}$ with $u \in M$. This is optimized for $x = n^{1/3}\cdot\varepsilon^{-6}$.
\end{proof}

Finally we give an analysis of the local capacity $\lambda$ the above algorithm requires.

\begin{lemma}
	\label{lem:approxSSSPcongestion}
	\Cref{alg:approxSSSP} works in the same time $\tilO(n^{1/3}\varepsilon^{-6})$ for local capacity $\lambda = \tilT(n^{2/3}\varepsilon^6)$.
\end{lemma}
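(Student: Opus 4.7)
The strategy is to identify the two places in \Cref{alg:approxSSSP} where local edges are used and to bound the per-edge congestion of each. The two places are: (I) the $h$-round local exploration phase at the beginning, where each node gathers $h$-hop neighborhood information; and (II) the invocations of \texttt{Token-Dissemination} inside the sub-procedure \texttt{SSSP-By-Simulating-BCC} (one per simulated BCC round and one for the final broadcast of the distances $\tilde d_{su}$).

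For step (II), each simulated BCC round disseminates at most $|M|$ tokens of size $O(\log n)$, since only marked nodes broadcast. By \Cref{lem:congestion_token_dissem}, a $k$-token dissemination runs in its stated round complexity using only $\lambda = \Theta(\sqrt{k})$ local capacity. With $|M| = \tilT(n/x) = \tilT(n^{2/3}\varepsilon^6)$, this yields a local-capacity requirement of $\tilT(n^{1/3}\varepsilon^3)$, which is dominated by the bound we are aiming for.

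For step (I), the key observation is that, in contrast to the exact APSP algorithm, each node $v$ does not need the full $h$-hop distance vector to all of $V$: inspection of \Cref{eq:approxSSSP} shows that it only requires $d_h(u,v)$ for $u \in M$ (and $d_h(s,v)$, but $s\in M$). Hence, instead of running an all-source Bellman--Ford, it suffices to run an $|M|$-source distributed Bellman--Ford for $h$ rounds, with the marked nodes as sources. In each round, each edge then carries at most one distance label per source, i.e.\ at most $|M| = \tilT(n^{2/3}\varepsilon^6)$ labels, exactly the bound claimed. The information of which nodes are marked can piggyback on the initial Bellman--Ford messages at no additional congestion cost (messages remain of size $O(\log n)$). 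The routine correctness and congestion bound of this restricted Bellman--Ford is the analogue of \Cref{lem:congestion_bellman_ford} and is the main technical step; the only subtlety is to verify that the hop budget $h$ matches the one required by \Cref{lem:longPathsM} and Fact~\ref{fct:approxSSSPskeleton}, which is immediate since $h$ is unchanged from the original algorithm.

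Combining (I) and (II), the local-edge bottleneck is the multi-source Bellman--Ford of step (I), which needs $\lambda = \tilT(|M|) = \tilT(n^{2/3}\varepsilon^6)$ while still running in $O(h) = \tilO(n^{1/3}\varepsilon^{-6})$ rounds. The total runtime is unchanged at $\tilO(n^{1/3}\varepsilon^{-6})$, proving the lemma. I expect the main (very minor) obstacle to be arguing cleanly that the restricted Bellman--Ford indeed delivers all values needed by \Cref{eq:approxSSSP} within the same round count as the original local exploration, but this follows directly from the fact that we only drop information that is not consulted by the formula.
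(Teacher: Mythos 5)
Your proposal is correct and matches the paper's own argument: both identify the $h$-round local exploration as the bottleneck, replace it by an $|M|$-source Bellman--Ford requiring $\lambda = \Theta(|M|) = \tilT(n^{2/3}\varepsilon^6)$, and bound the token-dissemination steps by $\lambda = \tilT(\sqrt{|M|})$ via \Cref{lem:congestion_token_dissem}. The only cosmetic difference is that the paper first publishes the set $M$ with an extra token-dissemination run, whereas you piggyback the marked-status on the Bellman--Ford messages (as the paper itself does in \Cref{lem:exactAPSPcongestion}); either works.
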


\begin{proof}
	In \Cref{alg:approxSSSP} we use local edges to do the local exploration in a $\tilO(x)$-hop neighborhood of each node in the same number of rounds. However, the only type of information that the nodes require from this local search in order to first establish the skeleton and finally compute \Cref{eq:approxSSSP}, is that each node needs to learn its $h$-limited distance to the set of marked nodes $M$ (which includes the source $s$).
	
	We can alternatively do this in the same time by first making the set of marked $M$ nodes public knowledge via a run of token dissemination. Since \smash{$|M|=n/x=n^{2/3}\varepsilon^6$}, this takes only \smash{$\tilO(n^{1/3}\varepsilon^3)$} (\Cref{thm:tokenDissemination}) rounds and local capacity $\lambda = \tilT(n^{1/3}\varepsilon^3)$ (\Cref{lem:congestion_token_dissem}). Then we conduct a distributed Bellman-Ford with sources $M$ for $h$ rounds, so each node learns its $h$-limited distance to each marked node. In \Cref{lem:congestion_bellman_ford} we show that $\lambda = 2|M| = \tilT(n^{2/3}\varepsilon^6)$ suffices for this.	
	
	For each round of BCC simulation and finally to make the distance estimations of the skeleton public knowledge we conduct an additional run of token dissemination with $|M|$ tokens in \smash{$\tilO(n^{1/3}\varepsilon^3)$}. We already established that this takes local capacity at most $\lambda = \tilT(n^{1/3}\varepsilon^3)$.
%
%
\end{proof}

 \subsection{\boldmath Approximate SSSP in $\tilO(n^\varepsilon)$}
\label{sec:subpolynomial}

\hide{

\phil{short version from the PODC submission}

Finally, we present a $(\log_\alpha n)^{O(\log_\alpha n)}$-approximate SSSP algorithm that takes time $\tilO(\alpha^3)$, w.h.p., for a parameter $\alpha \ge 5$.
By setting $\alpha = n^\varepsilon$ for some $\varepsilon > 0$, we obtain a $(1/\varepsilon)^{O(1/\varepsilon)}$-approximation in time $O(n^{3\varepsilon})$, which, for example, allows to compute a constant factor approximation for any constant $\varepsilon$.
Furthermore, for $\varepsilon = \sqrt{\log \log n / \log n}$, this gives a $2^{O(\sqrt{\log n \log \log n})}$-approximation in subpolynomial time $2^{O(\sqrt{\log n \log \log n})}$.
Due to space constraints, we only describe the algorithm from a very high level, and provide all details in Appendix~\ref{apx:subpolynomial}.

The main idea of the algorithm is to recursively construct a hierarchy of spanners $G_1, \ldots, G_T$, where $G_{i}$ is a spanner of the nodes in $M_i \subseteq V[G_i]$, which contains each node of $G_{i-1}$ with probability $\log(n)/\alpha$ for $i=2$, and with probability $1/\alpha$ for $i\geq 3$.
The first spanner $G_1$, which contains all nodes of $G$, is constructed by simply performing the distributed Baswana-Sen algorithm \cite{BS07} in the local network with parameter $k$, which gives a $(2k - 1)$-spanner in time $O(k^2)$.
For each constructed spanner, we ensure that every edge of the spanner is learned by one endpoint of the edge, such that no node has to take care of more than $\tilO(\alpha)$ edges.
This allows to construct every spanner $G_{i}$, $i \ge 2$, efficiently even when only relying entirely on global edges by using techniques of \cite{AGG+18}.
More specifically, for $i \ge 2$, we construct $G_i$ as an \emph{$h$-hop skeleton spanner} of $G_{i-1}$, which is defined as follows.

Intuitively, a skeleton spanner gives a good approximation of distances between nodes that are within hop-distance $h$.
We provide the details and formal analysis of the algorithm, from which the lemma below and our final theorem follow, in \Cref{apx:subpolynomial}.

\begin{lemma}\label{lemma:spannersimple}
    Given $G_i$ and parameters $h \ge 1$, $k \ge 2$, and $\eta > 1$, there is an algorithm that constructs an $h$-hop skeleton spanner $G_{i+1}$ with node set $M_i$ and stretch $2\eta k$ in time $O((\delta + \log n)|M_i|^{1/k}\log^2 n \cdot h k^2 \cdot \log_\eta W)$, w.h.p.
    Every node of $M_i$ has to learn $O(k\cdot |M|^{1/k}\log n \cdot \log_\eta W)$ edges.
\end{lemma}

By choosing $h := c \alpha$ for a sufficiently large constant $c$, $k := \log_\alpha n$, and constant $\eta > 1$, we specifically construct $G_{i}$ as an $O(\alpha)$-skeleton spanner of stretch $O(\log_\alpha n)$ of the graph $G_{i-1}$ for all $i \ge 2$.
At the very end, we construct a global spanner $H$ for the whole graph by just taking the union of all the graphs $G_i$. 
By applying the properties of the skeleton spanner construction, we can show that the graph $H$ has an $(\log_{\alpha} n)^{O(\log_{\alpha} n)}$-approximate path consisting of at most $\tilO(\alpha)$ hops for every pair of nodes $u,v\in V$.
Therefore, every node $v$ already learns good approximation $\tilde{d}(s,v)$ of $d(s,v)$ when we perform a BFS from $s$ in $H$ for only $\tilO(\alpha)$ rounds.
Using techniques of \cite{AGG+18}, one round of BFS can be realized in the global network in time $\tilO(\alpha)$.
The following theorem can be shown by a careful analysis of the approximation guarantees of our recursive spanner construction and the overall runtime.

\begin{theorem} \label{theorem:polySSSPSimple}
    The algorithm solves $(\log_\alpha n)^{O(\log_\alpha n)}$-SSSP in time $\tilO(\alpha^3)$, w.h.p.
\end{theorem}
}


In this section we present a fast algorithm that runs in $O(\alpha^5)$ for some parameter $\alpha \ge 5$, albeit with a coarser approximation ratio of $(\log_\alpha n)^{O(\log_\alpha n)}$. Nevertheless, notice that we get a constant approximation ratio when we set $\alpha = n^\epsilon$ for fixed $\epsilon > 0$.  The key ingredient for our algorithm is a sparse spanner of the skeleton graph that we call a {\em skeleton spanner} that we formally define shortly.
In Section~\ref{sub:spanskel}, we present an algorithm to construct such a sparse  skeleton spanner. Subsequently, in  Section~\ref{sub:recurse}, we describe our algorithm for computing the $(\log_\alpha n)^{O(\log_\alpha n)}$-SSSP.

\subsubsection{Constructing a Sparse Spanner of the Skeleton Graph} \label{sub:spanskel}

In the following, we give a simple algorithm to compute a sparse
spanner of a skeleton graph, which we will call a \textit{skeleton spanner}. A formal definition is given below in \Cref{def:skeletonspanner}. 
The spanner algorithm is solely based on
computing limited-depth BFS trees, which can be done efficiently even when relying entirely on global edges by using the methods of \cite{AGG+18} as black-box.

\begin{definition}[Skeleton Spanner]\label{def:skeletonspanner}
	Let $G=(V,E,w)$ be a weighted graph, let $M\subseteq V$ be a set of
	marked nodes of $G$, and let $h\geq 1$ be an integer parameter. A
	$h$-hop skeleton spanner $H=(M,E_H)$ with stretch $s\geq 1$ is a weighted
	graph with the following properties that (1) every edge $\set{u,v}\in E_H$ corresponds to a path $P$ in $G$
	between $u$ and $v$ and the weight of $\set{u,v}$ is the total
	weight $w(P)$ of $P$, and (2) for every two nodes $u, v\in M$, we have $d_{h,G}(u,v)\leq
	d_H(u,v)\leq s\cdot d_{h,G}(u,v)$.
\end{definition}



\paragraph{High-Level Algorithm.}
As the algorithm may be of interest beyond our application, we first describe it at a high level and then provide the details on how to efficiently implement the algorithm in our model.
Assume that we are given a graph $G=(V,E,w)$, a set of marked nodes $M\subseteq V$, and a hop distance parameter $h\geq 1$. 
Let us further assume that for all $e\in E$, we have $1\leq w(e)\leq W/h$ for some given $W\geq h$, so that the length of any path consisting of at most $h$ hops is between $1$ and $W$. 
The algorithm further has two parameters $k\geq 2$ and $\eta>1$ that control the stretch and the number of edges of the resulting spanner.

The algorithm consists of $\lceil\log_\eta W\rceil$ stages. In
the following, we focus on a specific stage $i\geq 1$. For
convenience, we define $L_i:=\eta^{i}$. The objective of stage $i$
is to construct a subset of the edges of $H=(M, E_H)$ that provides a good
approximation for any two nodes $u,v\in M$ for which the $h$-limited
distance in $G$ is in the range $[L_i/\eta,L_i]$. The final spanner is
then obtained by taking the union of the edges computed in the
individual stages.

Each stage consists of $k$ phases, which we number by $j=0,1,\dots,k-1$. Initially, all nodes in $M$ are active. We will see that nodes in $M$
become inactive as soon as it is guaranteed that all their $h$-limited
distances in the target range are already approximated well
enough. In the following, for a node $r\in G$, an integer parameter
$x\geq 1$, and a distance $L\geq 1$, we define
\[
B_G(r,x,L) := \set{v\in V\,:\,d_{h \cdot x,G}(r,v) \leq x\cdot L}
\]
The details of the algorithm for stage $i$ 
are given in \Cref{alg:skeletonSpanner}. We refer to the $k$ iterations of the
outermost for-loop as the $k$ phases $j=0,\dots,k-1$.

\begin{algorithm}[t]	
	\caption{\texttt{Skeleton Spanner Algorithm -- Stage $i$}}
	\label{alg:skeletonSpanner}
	\begin{algorithmic}
		\State \Comment{stage $i$ dealing with $h$-limited distances $\in[L_i/\eta,L_i]$}
		\State $V_0 := V$ \Comment{$V_j$ is the set of nodes that are active in phase $j$}
		\For{$j := 0$ \textbf{to} $k-1$}
		\State $G_j := G[V_j]$ \Comment{subgraph of $G$ induced by the active nodes} 
		\For{\textbf{each} $r\in V_j\cap M$} \Comment{for all active nodes in $M$}
		\State randomly sample $r$ with probability $|M|^{\frac{j+1}{k}-1}$ 
		\EndFor
		\For{\textbf{each} sampled $r\in M$}
		\For{\textbf{each} $v\in B_{G_j}(r,k-j,L_i)\cap M$}
		{add edge $(v,r)$ of weight $d_{h(k-j),G_j}(v,r)$ to $E_H$}
		\EndFor
		\State $V_{j+1}:= V_j \setminus B_{G_j}(r, k-j-1, L_i)$
		\EndFor
		\EndFor
	\end{algorithmic}			
\end{algorithm}

\begin{lemma}\label{lemma:spannerstretch}
	When a node $u\in M$ gets deactivated in stage $i$, for every $v\in M$ for which
	$d_{h,G}(u,v)\leq L_i$, the algorithm has added a path of length at most
	$2 k L_i$ to the spanner edge set $E_H$. Furthermore, this path consists of at most $2$ edges.
\end{lemma}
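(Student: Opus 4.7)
The plan is to prove the lemma by strong induction on the phase $j$ in which $u$ gets deactivated, exhibiting a 2-edge path of the form $u$-$r^*$-$v$ in $E_H$ for some sampled center $r^*$ (with the degenerate case that $u=r^*$ or $v=r^*$ giving a 1-edge path). When $u$ is deactivated in phase $j$ via a sampled center $r$, we have $u \in B_{G_j}(r, k-j-1, L_i)$, so a $G_j$-path from $r$ to $u$ of at most $h(k-j-1)$ hops and weight at most $(k-j-1) L_i$ exists; since $u \in B_{G_j}(r, k-j-1, L_i) \subseteq B_{G_j}(r, k-j, L_i) \cap M$, the algorithm adds the edge $(u,r)$ to $E_H$ of weight at most $(k-j-1)L_i$ in phase $j$. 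It therefore suffices to produce a matching edge $(r^*, v)$. For the base case $j = 0$, since $G_0 = G$, concatenating the $G$-path witnessing $u \in B_{G_0}(r, k-1, L_i)$ with the $h$-hop, weight-$L_i$ path from $u$ to $v$ gives a path in $G_0$ from $r$ to $v$ of at most $hk$ hops and weight at most $k L_i$. Thus $v \in B_{G_0}(r, k, L_i) \cap M$, the edge $(r, v)$ is added in phase $0$, and the 2-edge path $u$-$r$-$v$ has weight at most $2kL_i$.

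For the inductive step $j \geq 1$, I split into two cases based on whether $v$ is still active at the start of phase $j$. If $v \notin V_j$, then $v$ was deactivated in some earlier phase $j' < j$, and the inductive hypothesis applied to $v$'s deactivation (with the roles of $u$ and $v$ swapped, using that $d_{h,G}$ is symmetric) immediately yields the desired 2-edge path in $E_H$ of weight at most $2kL_i$. If $v \in V_j \cap M$, I invoke an auxiliary invariant: for every pair of marked nodes $u', v' \in V_j \cap M$ with $d_{h, G}(u', v') \leq L_i$, either a 2-edge spanner path in $E_H$ of weight at most $2kL_i$ has already been added, or $d_{h, G_j}(u', v') \leq L_i$. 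In the first sub-case we are done; in the second, concatenating the $G_j$-path from $r$ to $u$ with the $G_j$-realization of the $h$-hop $u$-to-$v$ path shows that $v \in B_{G_j}(r, k-j, L_i) \cap M$, so the edge $(r, v)$ is added in phase $j$, completing the 2-edge path $u$-$r$-$v$.

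The main obstacle is establishing the auxiliary invariant, which I would prove by a separate induction on $j$. The base case $j = 0$ is trivial because $G_0 = G$. For the inductive step, consider $u', v' \in V_{j+1} \cap M$ with $d_{h, G}(u', v') \leq L_i$; by the inductive hypothesis at phase $j$, either the 2-edge spanner path already exists (and persists into phase $j+1$), or there is a $G_j$-realized $h$-hop path $P$ of weight at most $L_i$ between $u'$ and $v'$. If every internal node of $P$ remains in $V_{j+1}$, the invariant carries over. Otherwise, some internal node $w$ of $P$ lies in $V_j \setminus V_{j+1}$ and was deactivated in phase $j$ via a sampled center $r'$, so there is a $G_j$-path from $r'$ to $w$ of at most $h(k-j-1)$ hops and weight at most $(k-j-1) L_i$. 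Concatenating with the $G_j$-subpaths of $P$ from $w$ to $u'$ and from $w$ to $v'$ (each of hop count at most $h$ and weight at most $L_i$) produces $G_j$-paths from $r'$ to $u'$ and from $r'$ to $v'$ of at most $h(k-j)$ hops and weight at most $(k-j)L_i$ each. Hence $u', v' \in B_{G_j}(r', k-j, L_i) \cap M$, and in phase $j$ the algorithm adds both edges $(u', r')$ and $(r', v')$, yielding a 2-edge spanner path of total weight at most $2(k-j) L_i \leq 2kL_i$, which re-establishes case (a) of the invariant at phase $j+1$.
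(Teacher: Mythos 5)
Your proof is correct and rests on the same key mechanism as the paper's: whenever a node on the $h$-hop witness path between $u$ and $v$ is deactivated by a sampled center $r'$ in phase $j$, the ball-radius offset (deactivation at radius $k-j-1$ versus edge-addition at radius $k-j$) lets you concatenate paths to place both $u$ and $v$ in $B_{G_j}(r',k-j,L_i)$, yielding the two edges $(u,r')$ and $(v,r')$ of weight at most $(k-j)L_i$ each. The only difference is organizational: your auxiliary invariant is exactly the paper's claim (a) ("either a $(\leq 2)$-hop path of length $\leq 2kL_i$ already exists, or $d_{h,G_j}(u,v)\leq L_i$"), which you establish by an explicit phase-by-phase induction where the paper argues directly via the first phase in which the witness path loses a node.
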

\begin{proof}
	Let $u,v\in M$ be two nodes for which $d_{h,G}(u,v)\leq L_i$ and let
	us show that the algorithm adds a path between $u$ and $v$ of length
	at most $2k L_i$ and consisting of at most $2$ edges---we call such a path a $(\leq 2)$-hop path in the following---to the spanner.  W.l.o.g., assume that
	$u$ is deactivated in phase $j$ and that $v$ is deactivated
	in a phase $j'\geq j$. If the algorithm has already added an $(\leq 2)$-hop path of
	length at most $2 k L_i$ between $u$ and $v$ prior to phase
	$j$, we are done.  Otherwise, we show that a) if there is not already
	a $(\leq 2)$-hop path of length at most $2 k L_i$ in the spanner connecting $u$
	and $v$, in the graph $G_j$ of the active nodes in phase $j$, the
	$h$-limited distance between $u$ and $v$ is still at most $L_i$ and
	b) in this case, in phase $j$, the algorithm adds a $(\leq 2)$-hop path of length at most
	$2 k L_i$ between $u$ and $v$ to the spanner.
	
	Let $P$ be a path of hop length $|P|\leq h$ and weight
	$w(P)\leq L_i$ connecting $u$ and $v$ in $G$. First assume that all
	nodes of $P$ are still active in phase $j$. We then clearly have
	$d_{h,G_j}(u,v)\leq L_i$. As node $u$ gets deactivated in phase $j$,
	we have that $u\in B_{G_j}(r,k-j-1,L_i)$ for some sampled node
	$r\in M$. We thus add an edge $(u,r)$ of length
	$d_{h(k-j-1),G_j}(r,u) \leq (k-j-1)L_i$ to $E_H$. Because
	$u\in B_{G_j}(r,k-j-1,L_i)$ and because $d_{h,G_j}(u,v)\leq L_i$, we
	can further conclude that $v\in B_{G_j}(r,k-j,L_i)$. We thus also
	add an edge $(v,r)$ of length
	$d_{h(k-j),G_j}(r,v) \leq (k-j) L_i$ to $E_H$. Together, the two
	edges thus provide a $(\leq 2)$-hop path of length at most
	$(2(k-j)-1) L_i< 2 k L_i$ between $u$ and $v$.
	
	It remains to consider the case that some nodes of $P$ are deactivated
	before phase $j$. Let $j'<j$ be the first phase, where some  node of
	$P$ is deactivated and let $w\in V$ be some node of $P$ that is
	deactivated in phase $j'$. This implies that there is some node
	$r'\in M$ such that $w\in B_{G_{j'}}(r,k-j',L_i)$. Because the path
	$P$ is completely contained in $G_{j'}$, we have
	$d_{h,G_{j'}}(w,u)\leq L_i$ and $d_{h,G_{j'}}(w,v)\leq L_i$. Both
	nodes $u$ and $v$ are thus contained in $B_{G_{j'}}(r,k-j',L_i)$ and
	thus in phase $j'$, the algorithm adds edges $(u,r')$ and $(v,r')$
	of length at most $(k-j')L_i\leq k L_i$ to $E_H$ and thus there is a
	$(\leq 2)$-hop path of length at most $2k L_i$ between $u$ and $v$ in the
	constructed spanner.
\end{proof}

The following lemma shows that in each phase, every node is only involved in the distance computations for few randomly centers. This on the one hand implies that the spanner algorithm does not add too many edges, and on the other hand it also allows to implement the algorithm efficiently by using only global edges. The lemma follows because of the radius of the balls that are contacted by each randomly selected center decreases from phase to phase such that the radius at which nodes are deactivated in phase $j$ is the same as the radius in which nodes are contacted in the phase $j+1$ and thus essentially, if a node $v$ expects to ``see'' many centers in phase $j+1$, the node should have been deactivated in phase $j$. A similar argument has previously been used by Blelloch et al.\ in \cite{blelloch14}.

\begin{lemma}\label{lemma:spannerprob}
	W.h.p., in every phase $j$ of \Cref{alg:skeletonSpanner},
	every node $v\in V_j$ is in $B_{G_j}(r,k-j,L)$ for at most $O(|M|^{1/k} \log
	n)$ sampled nodes $r\in M$.
\end{lemma}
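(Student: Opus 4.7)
Fix a phase $j$ and a node $v \in V_j$. By the symmetry of hop-limited distances in the undirected graph $G_j$, the set of centers in phase $j$ that ``see'' $v$ is precisely
\[
N_j(v) \;:=\; B_{G_j}(v, k-j, L_i) \cap V_j \cap M,
\]
and each element of $N_j(v)$ is independently sampled with probability $p_j := |M|^{(j+1)/k-1}$. So it suffices to (i) bound $|N_j(v)|$ from above by $O(|M|^{1-j/k}\log n)$, and (ii) apply a Chernoff bound to the $\mathrm{Bin}(|N_j(v)|, p_j)$ count, whose expectation is then $O(|M|^{1/k}\log n)$.

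The case $j=0$ is immediate, since $|N_0(v)| \le |M|$ and $|M|\cdot p_0 = |M|^{1/k}$. For $j\ge 1$, the plan is a ``if too many would see me now, then I should already have been killed'' argument, in the spirit of Blelloch et al. Since $G_j$ is a subgraph of $G_{j-1}$ and $V_j\subseteq V_{j-1}$, hop distances only grow when restricting to $G_j$, and therefore
\[
N_j(v) \;\subseteq\; D(v) \;:=\; B_{G_{j-1}}(v, k-j, L_i) \cap V_{j-1} \cap M.
\]
The key observation is that any $r\in D(v)$, if sampled in phase $j-1$, would deactivate $v$: indeed, in phase $j-1$ the deactivation ball is $B_{G_{j-1}}(r, k-(j-1)-1, L_i) = B_{G_{j-1}}(r, k-j, L_i)$, and $r\in D(v)$ is equivalent (by symmetry) to $v$ lying in this ball. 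The sampling probability in phase $j-1$ is $p_{j-1} = |M|^{j/k-1}$, so conditioned on the state at the start of phase $j-1$, if $|D(v)| \ge c'|M|^{1-j/k}\log n$, then the probability that no node of $D(v)$ is sampled is at most $(1-p_{j-1})^{|D(v)|} \le e^{-c'\log n} = n^{-c'}$. Taking the contrapositive and marginalizing, we obtain $\Pr[v\in V_j \text{ and } |N_j(v)| \ge c'|M|^{1-j/k}\log n] \le n^{-c'}$.

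Conditioned on the (high probability) event that $|N_j(v)| = O(|M|^{1-j/k}\log n)$, the number of phase-$j$ centers in $N_j(v)$ is binomially distributed with mean $O(|M|^{1/k}\log n)$; a standard Chernoff bound (\Cref{lem:chernoffbound}) yields the claimed $O(|M|^{1/k}\log n)$ upper bound with probability at least $1 - n^{-c}$ for any prescribed constant $c$. A final union bound over the at most $n$ nodes $v\in V_j$ and the $k \le \log n$ phases (applied inside each of the $O(\log_\eta W)$ stages) finishes the proof. The only subtle point is the conditioning: the event constraining $|N_j(v)|$ depends on the randomness of phases $0,\dots,j-1$, while the Chernoff bound over the phase-$j$ sampling uses fresh independent randomness, so the two bounds compose cleanly; this will be the step that requires the most care to write up formally.
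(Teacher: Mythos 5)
Your proposal is correct and follows essentially the same route as the paper's proof: your set $D(v)$ is (by symmetry of hop-limited distances) exactly the paper's $R'_{v,j-1}$, the containment $N_j(v)\subseteq D(v)$ via $G_j\subseteq G_{j-1}$ is the paper's $R_{v,j}\subseteq R'_{v,j-1}$, and the ``if many centers could see $v$ now, $v$ would already have been deactivated'' step followed by a Chernoff bound and the careful conditioning on earlier phases is precisely the paper's argument. The only cosmetic differences are that you bound the cardinality $|N_j(v)|$ directly rather than the product $p_j|R_{v,j}|$, and that you make the $j=0$ base case explicit.
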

\begin{proof}
	For phase $j\geq 0$ and a node $v\in V_j$, let
	$R_{v,j}:=\set{r\in M : v\in B_{G_j}(r,k-j,L_i)}$ and let
	$R_{v,j}':=\set{r\in M : v\in B_{G_j}(r,k-j-1,L_i)}$. The nodes in
	$R_{v,j}$ are the ones that, when sampled in phase $j$, reach node
	$v$, whereas $R_{v,j}'$ contains the set of nodes that, when samples
	in phase $j$, reach and also deactivate $v$. In phase $j$, nodes of
	$M$ are sampled with probability $p_j:=|M|^{\frac{j+1}{k}-1}$. We
	need to show that w.h.p.,
	$p_j\cdot |R_{v,j}|=O(|M|^{1/k}\cdot \log n)$ for all $v\in V$ and
	all phases $j$. The lemma then follows by a standard Chernoff bound
	and a union bound over all $v$ and $j$.
	
	To prove that $p_j\cdot |R_{v,j}|=O(|M|^{1/k}\cdot \log n)$, we show
	that otherwise, $v$ would have been deactivated in the previous
	phase, w.h.p. In the following, let $c>0$ be a constant that will be
	determined at the end. For a node $v\in V_j$ and a phase $j$, let
	$\calE_{v,j}$ be the event that $p_j\cdot |R_{v,j}'|\geq c\ln n$ and
	that node $v$ is \emph{not} deactivated in phase $j$. Recall that
	node $v$ is deactivated in phase $j$ if and only if one of the nodes
	in $R_{v,j}'$ is sampled in \Cref{alg:skeletonSpanner}. For all
	$v\in V_j$, we
	therefore have 
	\begin{equation}\label{eq:notdeactivated}
	\forall v\in V_j\,:\,\Pr(\calE_{v,j}) \leq (1-p_j)^{|R_{v,j}'|} <
	e^{-p_j|R_{v,j}'|} \leq \frac{1}{n^c}.
	\end{equation}
	Note that for $v\in V_j$ for which $p_j\cdot |R_{v,j}'|< c\ln n$, we
	have $\Pr(\calE_{v,j}) = 0$. Let us further define the random
	variable $X_{v,j}$ as the number of sampled nodes $r\in M$ in phase
	$j$ for which node
	$v\in V_j$ is in $B_{G_j}(r,k-j,L)$.  That is,
	$X_{v,j}$ counts the number of sampled nodes from
	$R_{v,j}$ in phase $j$. If we prove that $X_{v,j} = O(|M|^{1/k}\log
	n)$ w.h.p., the claim of the lemma follows by applying a union bound
	over all phases $j$ and all nodes $v\in V_j$.
	
	To study the number of sampled nodes from $R_{v,j}$, observe that
	$R_{v,j}\subseteq R_{v,j-1}'$. This follows from the definition of
	$R_{v,j}=\set{r\in M : v\in B_{G_j}(r,k-j,L_i)}$ and
	$R_{v,j-1}'=\set{r\in M : v\in B_{G_{j-1}}(r,k-j,L_i)}$ and the fact
	that $G_j$ is a subgraph of $G_{j-1}$. If we condition on the event
	$\overline{\calE}_{v,j-1}$, we know that $p_{j-1}|R_{v,j-1}'|< c\ln n$ as
	otherwise, $\overline{\calE}_{v,j-1}$ would have been deactivated in
	phase $j-1$ and thus $v\not\in V_j$. The sampling probabilities
	increase by a factor $|M|^{1/k}$ from phase to phase and thus,
	conditioning on $\overline{\calE}_{v,j-1}$ implies that
	$p_j|R_{v,j}|\leq p_j|R_{v,j-1}'| < c |M|^{1/k}\ln n$. We therefore have
	$\E[X_{v,j}] < c |M|^{1/k}\ln n$. A standard Chernoff argument thus
	shows that 
	\begin{equation}\label{eq:Xvj_bound}
	\Pr\left(X_{v,j} > 2c|M|^{1/k}\ln n\,\big|\,
	\overline{\calE}_{v,j-1}\right) \leq e^{-c/3 \cdot |M|^{1/k}\cdot
		\ln n} 
	\ \stackrel{(|M|^{1/k}>1)}{<}\ \frac{1}{n^{c/3}}.
	\end{equation}
	Let $\calB_{v,j}$ be the event that $X_{v,j}> 2c|M|^{1/k}\ln n$. By using the law of
	total probability, we then get
	\begin{eqnarray*}
		\Pr(\calB_{v,j}) 
		& = &
		\Pr\big(\calB_{v,j}\,\big|\,\overline{\calE}_{v,j-1}\big)\cdot \Pr(\overline{\calE}_{v,j-1})
		\ +\ 
		\Pr\big(\calB_{v,j}\,\big|\,{\calE}_{v,j-1}\big)\cdot
		\Pr({\calE}_{v,j-1})\\
		& \leq &
		\Pr\big(\calB_{v,j}\,\big|\,\overline{\calE}_{v,j-1}\big)
		+ \Pr({\calE}_{v,j-1})\\
		& \stackrel{\eqref{eq:notdeactivated},\eqref{eq:Xvj_bound}}{\leq}
		&
		\frac{1}{n^{c/3}} + \frac{1}{n^c},                                                                                       
	\end{eqnarray*}
	which concludes the proof.
\end{proof}

We now have everything we need in order to prove the main property of
the described spanner algorithm.

\begin{lemma}\label{lemma:spanneralg}
	Given a weighted graph $G=(V,E,w)$, a set of marked nodes
	$M\subseteq V$, as well as parameters $h\geq 1$, $k\geq 2$, and
	$\eta>1$,  the described spanner algorithm computes an $h$-hop skeleton
	spanner $H=(M, E_H)$ with stretch $2\eta k$. W.h.p., the number of
	edges $|E_H|$ is at most $|E_H|=O(k\cdot |M|^{1+1/k}\log n \cdot
	\log_\eta W)$. Further, for any two nodes $u,v\in M$ at hop distance at most $h$ in $G$, the spanner contains a path of hop length at most $2$ and of (weighted) length at most $2\eta k d_{h,G}(u,v)$.
\end{lemma}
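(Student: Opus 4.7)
The plan is to combine the stretch guarantee of Lemma~\ref{lemma:spannerstretch} with the edge-count guarantee of Lemma~\ref{lemma:spannerprob}, and then essentially do the stage-by-stage and phase-by-phase bookkeeping.

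First, I would simultaneously establish the stretch bound and the $(\le 2)$-hop path claim. Fix any pair $u,v\in M$ with $hop_G(u,v)\le h$. Since edge weights satisfy $1\le w(e)\le W/h$, the $h$-limited distance lies in $[1, W]$, so there is a unique stage index $i$ with $L_i/\eta < d_{h,G}(u,v) \le L_i$. I then need to argue that both $u$ and $v$ are deactivated during stage $i$: the key observation is that in the final phase $j = k-1$ the sampling probability equals $|M|^{(j+1)/k - 1} = 1$, so every still-active node in $M$ is sampled and, since $B_{G_{k-1}}(r, 0, L_i) = \{r\}$, is deactivated by itself. Hence, w.l.o.g.\ $u$ is deactivated first in stage $i$, and by Lemma~\ref{lemma:spannerstretch} the spanner already contains by then a $(\le 2)$-hop path between $u$ and $v$ of weight at most $2kL_i \le 2\eta k \cdot d_{h,G}(u,v)$. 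This gives both the stretch bound $d_H(u,v)\le 2\eta k \cdot d_{h,G}(u,v)$ and the $(\le 2)$-hop claim; the reverse inequality $d_{h,G}(u,v)\le d_H(u,v)$ is automatic since every spanner edge's weight equals the length of a true path in $G$ of hop length at most $h(k-j)$.

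Next, I would bound $|E_H|$ by applying Lemma~\ref{lemma:spannerprob}. In any phase $j$ of any stage $i$, each spanner edge incident to a fixed node $v\in M$ arises from a sampled center $r$ with $v \in B_{G_j}(r, k-j, L_i)$; Lemma~\ref{lemma:spannerprob} bounds the number of such $r$ by $O(|M|^{1/k}\log n)$ w.h.p. Summing over the $k$ phases per stage and the $\lceil\log_\eta W\rceil$ stages, each $v\in M$ gains $O(k\cdot |M|^{1/k}\log n\cdot \log_\eta W)$ incident edges, for a total of $|E_H| = O(k\cdot |M|^{1+1/k}\log n\cdot \log_\eta W)$. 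A union bound across all $v$, all phases, and all stages (all polynomially bounded in $n$) upgrades this to a w.h.p.\ statement.

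The mildly subtle step I anticipate is the stage selection in the stretch argument: each stage handles only its own distance band $[L_i/\eta, L_i]$ of $h$-limited distances, so I have to be careful to identify the correct stage for every pair and to invoke the Lemma~\ref{lemma:spannerstretch} guarantee inside that stage rather than across stages. Once the stage is pinned down, Lemmas~\ref{lemma:spannerstretch} and~\ref{lemma:spannerprob} do essentially all the work, and the rest is routine.
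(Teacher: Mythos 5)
Your proposal is correct and follows essentially the same route as the paper's proof: the lower bound on $d_H$ from the fact that spanner edges correspond to real paths, the observation that the sampling probability becomes $1$ in phase $k-1$ so all nodes are deactivated within each stage, Lemma~\ref{lemma:spannerstretch} applied within the stage whose band $[L_i/\eta,L_i]$ contains $d_{h,G}(u,v)$ (giving $2kL_i\le 2\eta k\, d_{h,G}(u,v)$), and Lemma~\ref{lemma:spannerprob} summed over the $k$ phases and $\lceil\log_\eta W\rceil$ stages for the edge count. No substantive differences from the paper's argument.
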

\begin{proof}
	First note that by construction, as the weight of every edge that we
	add to $H$ corresponds to a path in $G$, we have $d_H(u,v)\geq
	d_{h,G}(u,v)$ for all $u,v\in M$. Also note that at the end of
	\Cref{alg:skeletonSpanner}, all nodes are inactive. This
	follows because for $j=k-1$, the sampling probability is set to $1$
	and therefore in the last phase, each remaining node in $M$ is a
	sampled. 
	
	To prove the stretch bound, it remains to show that
	$d_H(u,v)\leq 2\eta k\cdot d_{h,G}(u,v)$. We will at the same time also show that any two nodes within hop distance $h$ in $G$ will be connected by a $(\leq 2)$-hop path of this length. Let us first consider a
	single stage $i$. \Cref{lemma:spannerstretch} together with the
	fact that at the end, all nodes are inactive, implies that for
	any two nodes $u,v\in M$ for which $d_{h,G}(u,v)\leq L_i$, the
	spanner contains a $(\leq 2)$-hop path of length at most $2k L_i$. This provides a
	path of the right stretch for all node pairs $u,v\in M$ for which
	$d_{h,G}(u,v)\in [L_i/\eta,L_i]$. The stretch bound now directly
	follows because the spanner is defined as the union of the parts
	computed in each stage and because $\bigcup_{i=1}^{\lceil\log_\eta
		W \rceil}[L_i/\eta, L_i] \supseteq [1,W]$.
	
	To upper bound the number of edges of the spanner $H$, we again
	consider a single stage $i$. In each phase $j$ of stage $i$, each
	node $v\in V_j\cap M$ adds an edge to each sampled node $r\in M$ for which
	$v\in V_j\cap M$ is in $B_{G_j}(r,k-j,L)$. By Lemma
	\ref{lemma:spannerprob}, the number of such nodes is
	$O(|M|^{1/k}\log n)$, w.h.p. As the stage has $k$ phases, w.h.p., we therefore add
	at most $O(k |M|^{1/k}\log n)$ edges per node $v\in M$ and thus at
	most $O(k |M|^{1+1/k}\log n)$ edges in total. The lemma now follows
	because the total number of stages is $O(\log_\eta W)$.
\end{proof}

\paragraph{Realization in the Global Network.} 
We now describe how the algorithm can be efficiently implemented in our model.
As our algorithm exclusively relies on the global network, we can again use techniques from \cite{AGG+18}.
We assume that graph $G$ is given in \emph{$\delta$-oriented form}:
Every edge is only known by one its endpoints, which is \emph{responsible} for the edge, and every node in $G$ is responsible for at most $\delta$ edges.
We construct $H$ in a similar form:
When $v \in B_{G_j}(r, k-j, L_i) \cap M$ in some phase of the algorithm, the edge $(v,r)$ is added to $E_H$ only by $v$, which becomes responsible for the edge, and without the knowledge of $r$.

We achieve this by essentially performing limited-depth BFS constructions as in \cite{AGG+18}, and refer the reader to the paper for the technical details of our algorithm.
In phase $j$ of the algorithm, our goal is to let every node $r \in M$ inform all nodes in $B_{G_j}(r, k-j, L_i)$ about their $h(k-j)$-limited distance to $r$.
From a high level, we achieve this by propagating distance information from all nodes in $M$ for $h(k-j)$ iterations in $G_j$.
Whenever a message passes an edge of $G_j$, its distance value is increased by the weight of that edge, and is dropped, if that weight exceeds $(k-j) \cdot L_i$.
After $h(k-j)$ iterations, every node $v \in M$ knows $d_{x, G_j}(v,r)$ for all $x \le h(k-j)$ and all $r \in M$, if that value is at most $(k-j) \cdot L_i$.
In particular, $v$ can conclude if $v \in B_{G_j}(r, k-j, L_i)$, in which case $(v,r)$ must be added to $E_H$, and if $v \in B_{G_j}(r, k-j-1, L_i)$, in which case it becomes inactive.

As in \cite{AGG+18}, we realize one iteration of message passing by performing multi-aggregations in a butterfly, but have to take care of a few difficulties. 
First, in each phase $j$ we require broadcast trees that connect each node with its neighbors in $G_j$; therefore, each node $u$ learns which of its neighbors $v$ such that $u$ is responsible for $\{u,v\}$ is still active.
We achieve this by using multicast trees with multicast groups, which, for each node $v \in V$, contain all neighbors $u$ of $v$ such that $u$ is responsible for $\{u,v\}$.
We construct the trees already at the beginning of the algorithm, and use them prior to each phase to update nodes about neighbors that have become inactive in the previous phase.
Then, the nodes construct broadcast trees that connect all active neighbors with each other as in Lemma~16.
Here, we use the fact that every node is only responsible for at most $\delta$ edges for which it has to inject packets into the butterfly.
To allow for updating distance messages by edge weights, we additionally annotate each packet with the weight of the edge it corresponds to; thereby, each leaf node of every broadcast tree knows the weight of the corresponding edge, and can update the distance values of messages accordingly.
When the distance value of a message exceeds $(k-j) \cdot L_i$, it is simply dropped.

In iteration $t$ of the message passing process, every node $v$ participates in the BFS construction for each $r$ such that $d_{t-1, G_j}(v, r) \le (k-j) \cdot L_i$, and has to deliver a message to its neighbors for each construction.
Then, instead of injecting a single packet of size $O(\log n)$ into the butterfly, each node injects as many sub-packets as the number of constructions it participates in.
Sub-packets are forwarded sequentially throughout the multi-aggregation, such that each round of forwarding packets in the butterfly is "simulated" by performing multiple rounds of forwarding sub-packets.
As the number of sub-packets a packets consists of may vary, we synchronize each simulated round by performing convergecasts.
In the aggregation phase, two packets with the same destination may consist of sub-packets belonging to different constructions, which may result in a packet consisting of more sub-packets. For two sub-packets that belong to the same construction the one with smaller distance value is preferred.

\begin{lemma}\label{lemma:spannerreal}
Suppose $G=(V,E,w)$ is a weighted subgraph  of the global network given in $\delta$-oriented form and $M \subseteq V$. Then, an $h$-hop spanner as described in \Cref{lemma:spanneralg} in $O(k\cdot |M|^{1/k}\log n \cdot \log_\eta W)$-oriented form can be constructed in time $O((\delta + \log n)|M|^{1/k}\log^2 n \cdot h k^2 \cdot \log_\eta W)$, w.h.p.
\end{lemma}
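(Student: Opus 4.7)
The plan is to implement Algorithm \ref{alg:skeletonSpanner} stage by stage and phase by phase in the global network, reusing the butterfly/multicast routines of \cite{AGG+18}. First I would build, once at the very beginning, for every node $v \in V$ a multicast tree over the butterfly whose leaves are the neighbors $u$ of $v$ such that $u$ is responsible for the edge $\{u,v\}$. Because $G$ is given in $\delta$-oriented form, each node is a leaf of at most $\delta$ such trees, so by Lemma~16 of \cite{AGG+18} one round of broadcast along all trees can be simulated in $O((\delta + \log n)\log n)$ rounds. At the start of each phase $j$ I would refresh the trees using a single multicast on each (to remove nodes that went inactive in phase $j-1$); this cost is absorbed by the BFS rounds below.

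The core step is simulating one relaxation iteration of the multi-source BFS from all sampled centers in phase $j$. In iteration $t$, a node $v \in V_j$ wants to forward, over each of its tree edges, the pair $(r,\hat d)$ for every center $r$ for which it currently stores $\hat d = d_{t-1, G_j}(v,r) \leq (k-j)L_i$. By Lemma \ref{lemma:spannerprob}, the number of such pairs at $v$ is $O(|M|^{1/k}\log n)$ w.h.p. I would therefore inject that many sub-packets into the butterfly, annotating each sub-packet with the weight of the traversed edge so that the recipient can update its distance value and drop it as soon as it exceeds $(k-j)L_i$. In the aggregation/convergecast direction, sub-packets headed to the same destination and carrying the same center $r$ are combined by keeping the minimum distance, so the per-node load again stays in $O(|M|^{1/k}\log n)$. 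Convergecasts between simulated rounds are used to synchronize the variable-length trains of sub-packets, exactly as in \cite{AGG+18}.

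After the $h(k-j)$ iterations of phase $j$ each node $v \in V_j \cap M$ knows $d_{h(k-j), G_j}(v,r)$ whenever this value is at most $(k-j)L_i$. It then locally adds the edge $(v,r)$ of the corresponding weight to its own list whenever $v \in B_{G_j}(r, k-j, L_i)$ and marks itself inactive whenever $v \in B_{G_j}(r, k-j-1, L_i)$. Since only $v$ records $(v,r)$, the output is automatically in the desired $O(k |M|^{1/k}\log n \cdot \log_\eta W)$-oriented form matching \Cref{lemma:spanneralg}.

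The main obstacle I expect is bookkeeping the per-round cost of the butterfly simulation when the number of sub-packets per node varies across rounds; this is precisely what the multicast/convergecast machinery of \cite{AGG+18} is designed to handle and yields $O((\delta + \log n) |M|^{1/k} \log^2 n)$ rounds per BFS iteration w.h.p. Multiplying by at most $hk$ iterations per phase, $k$ phases per stage, and $O(\log_\eta W)$ stages, and taking a union bound over the polynomially many events involved, gives the claimed total of $O((\delta + \log n) |M|^{1/k} \log^2 n \cdot h k^2 \cdot \log_\eta W)$ rounds w.h.p., proving the lemma.
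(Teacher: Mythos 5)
Your proposal is correct and follows essentially the same route as the paper's proof: the same multicast/broadcast tree setup for the $\delta$-oriented input, the same sub-packet injection with edge-weight annotation and dropping beyond $(k-j)L_i$, the same use of \Cref{lemma:spannerprob} to bound the per-node sub-packet load by $O(|M|^{1/k}\log n)$, and the same multiplication over iterations, phases, and stages to obtain the stated runtime and oriented form.
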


\begin{proof}
	The correctness of the algorithm follows from the fact that in stage $i$ and phase $j$, every node $v \in V$ learns $d_{x, G_j}(v,r)$ for all $x \le h(k-j)$ and all $r \in M$, if that value is at most $(k-j) \cdot L_i$.
	From the discussion of \Cref{lemma:spanneralg} it directly follows that every node adds at most $O(k\cdot |M|^{1/k}\log n \cdot \log_\eta W)$ edges throughout the algorithm's execution, w.h.p., for which it becomes responsible.
	It remains to show the runtime of the algorithm, for which we refer to the results of \cite{AGG+18}.
	Setting up multicasts trees prior to the algorithm takes time $O(\delta + \log n)$ by Theorem 3.
	Now consider a single phase of the algorithm.
	Using the multicast trees to deliver updates about inactive neighbors takes time $O(\delta + \log n)$.
	Setting up the broadcast trees takes an additional $O(\delta + \log n)$.
	A single iteration of message passing in the BFS construction process takes $O(\delta + \log n)$ time of forwarding packets; however, this is slowed down by the number of sub-packets a packet may consist of, and an additional $\log n$ factor.
	By \Cref{lemma:spannerprob}, a node can only participate in at most $O(|M|^{1/k}\log n)$ BFS construction processes, and will therefore neither send out nor receive more than that many sub-packets in any multi-aggregation.
	Therefore, a single iteration takes time $O((\delta + \log n)|M|^{1/k}\log^2 n)$, w.h.p.
	In each phase, up to $hk$ iterations of message passing are performed.
	Multiplying this by the number of phases $k$ and the number of stages $\lceil \log_\eta W \rceil$ gives the stated bound.
\end{proof}

\subsubsection{The Recursive Algorithm} \label{sub:recurse}

Using the sparse skeleton spanner algorithm, we now present the algorithm to approximate SSSP.
The algorithm is divided into two stages.
The purpose of the first stage is to compute a hierarchical structure of spanners $G_1, \ldots, G_T$ as follows.
Let $G_0 := G$, and choose parameters $\alpha\geq 5$,  $h := c \alpha$ for a sufficiently large constant $c$, $k := \log_\alpha n$, and constant $\eta > 1$.
We construct the first sparse spanner $G_1$, which contains all nodes of $G$, by performing the distributed Baswana-Sen algorithm \cite{BS07} in the local network with parameter $k$, where any time a node adds an edge to the spanner, it becomes responsible for that edge.
By slightly modifying the analysis, it can be shown that thereby we obtain a $(2\log_\alpha n - 1)$-spanner in $O(\log_\alpha n \alpha \log n)$-oriented form, w.h.p., in time $O(\log^2_\alpha n)$.
Every other spanner $G_{i}$ ($i\geq 2$) is constructed as an $h$-hop skeleton spanner of $G_{i-1}$, where every node in $G_{i-1}$ joins the set $M_i$ of marked nodes with probability $\log(n)/\alpha$ for $i=2$ and with probability $1/\alpha$ for $i\geq 3$.
When for the first time a spanner $G_{T+1}$ contains no nodes anymore, the first stage of the algorithm ends.

After the first stage has finished, in the second stage we simply perform a BFS from $s$ in the union of all recursively constructed spanners $H = \bigcup_{1 \le j \le T} G_i$ for $O(\alpha \log n)$ rounds.
Finally, every node $v \in V$ chooses the minimum of all received distance values as its estimate $\tilde{d}(s,v)$ of $d(s,v)$.
In the following, we show that $H$ is a good spanner of the underlying graph $G$ and that, moreover, between any two nodes of $G$, there is a short path consisting of at most $O(\alpha\log n)$ hops in $H$, whose length gives a good distance approximation of the actual length of a shortest path in $G$.
We first need a technical lemma.

\begin{lemma}\label{lemma:shortpaths}
	Assume that $P$ is a shortest path on $G_1$ between two nodes of $G_1$. Further consider $i\geq 2$ and let $u,v \in M_i$ be two nodes on the path $P$ that are within $q$ hops for some $q\in[\gamma\cdot \alpha^{i-2},\gamma\cdot \alpha^{i-1}]$ for a sufficiently large constant $\gamma$. Then, $u$ and $v$ are connected in $G_{i-1}$ by a path $P$ such that $P$ consists of at most $O(\alpha)$ hops and it has length at most $(2\eta k)^{i-1}\cdot d_{G_1}(u,v)$.
\end{lemma}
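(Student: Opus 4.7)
The plan is to prove the lemma by induction on $i \geq 2$. For the base case $i=2$, the range is $q \in [\gamma, \gamma\alpha]$, and the subpath of $P$ from $u$ to $v$ already lies in $G_1 = G_{i-1}$, with exactly $q = O(\alpha)$ hops and weight $d_{G_1}(u,v)$. Since $(2\eta k)^{i-1} \geq 1$, this satisfies the claim directly.

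For the inductive step, assume the claim for level $i$ and prove it for $i+1$. Given $u,v \in M_{i+1}$ at hop distance $q \in [\gamma\alpha^{i-1}, \gamma\alpha^i]$ along $P$, I first select intermediate waypoints in $M_i$ and then stitch together $2$-hop bridges through the skeleton spanner $G_i$. Concretely, because $M_{i+1} \subseteq M_i$ and each node of $G$ is independently in $M_i$ with probability $\log(n)/\alpha^{i-1}$, I greedily pick $u = w_0, w_1, \dots, w_m = v$ on $P$ by letting $w_{j+1}$ be the $M_i$-node on $P$ closest to hop distance $\tfrac{1}{2}\gamma\alpha^{i-1}$ from $w_j$. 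A Chernoff bound applied to every contiguous subpath of $P$ of hop length $\tfrac{1}{4}\gamma\alpha^{i-1}$ (the expected count of $M_i$-nodes is $\Theta(\log n)$), union-bounded over polynomially many such subpaths, guarantees w.h.p.\ that each such subpath contains at least one $M_i$-node. Consequently, for $\alpha \geq 5$ each spacing $\mathrm{hop}_{G_1}(w_j, w_{j+1})$ lies in $[\gamma\alpha^{i-2}, \gamma\alpha^{i-1}]$, which is precisely the range required by the induction hypothesis at level $i$, and $m = O(\alpha)$.

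Now, for each consecutive pair $w_j, w_{j+1}$ the IH at level $i$ yields a path in $G_{i-1}$ of at most $C\alpha$ hops for an absolute constant $C$ and weight at most $(2\eta k)^{i-1} d_{G_1}(w_j, w_{j+1})$. Choosing $c$ large enough that $c \geq C$, this gives $d_{h,G_{i-1}}(w_j, w_{j+1}) \leq (2\eta k)^{i-1} d_{G_1}(w_j, w_{j+1})$ where $h = c\alpha$. Since $G_i$ is an $h$-hop skeleton spanner of $G_{i-1}$ with stretch $2\eta k$ and $w_j, w_{j+1} \in M_i$, Lemma~\ref{lemma:spanneralg} guarantees that $G_i$ contains a path of at most $2$ hops between $w_j$ and $w_{j+1}$ of weight at most $2\eta k \cdot d_{h,G_{i-1}}(w_j, w_{j+1}) \leq (2\eta k)^i d_{G_1}(w_j, w_{j+1})$. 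Concatenating these $m$ two-hop bridges produces a path in $G_i$ with $2m = O(\alpha)$ hops, and its total weight is at most
\[
\sum_{j=0}^{m-1} (2\eta k)^i\, d_{G_1}(w_j, w_{j+1}) \;=\; (2\eta k)^i\, d_{G_1}(u,v),
\]
where the equality uses that $w_0, \dots, w_m$ appear in order on the shortest path $P$ in $G_1$.

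The main obstacle is calibrating constants so that the $O(\alpha)$ hop bound inherited from the IH stays uniformly below $h = c\alpha$ across all levels; since the concatenation always produces at most $2m \leq C'\alpha$ hops for an absolute constant $C'$ independent of $i$ (it does not compound with depth because we only charge the $2$-hop bridges in $G_i$, not the internal hops used by Lemma~\ref{lemma:spanneralg}), it suffices to fix $c \geq C'$ once and for all, as is already assumed. The second subtlety is ensuring the Chernoff-based waypoint selection holds uniformly, which is handled by a union bound over the polynomially many candidate positions on $P$ and over all relevant pairs $(u,v)$.
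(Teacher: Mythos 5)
Your proof follows essentially the same route as the paper's: induct on the level, subsample waypoints from the next-denser marked set at spacings matching the inductive range, apply the induction hypothesis one level down, and replace each segment by a $(\leq 2)$-hop bridge from \Cref{lemma:spanneralg}, telescoping the weights along the shortest path $P$. The only loose end is the final spacing in your greedy waypoint selection, which can fall below $\gamma\alpha^{i-2}$ so that the induction hypothesis does not apply to it; the paper sidesteps this by partitioning the subpath into pieces of hop length in $[\gamma\alpha^{i-2}/5,\gamma\alpha^{i-2}/4]$ and selecting a maximal non-adjacent family containing the first and last piece, and your argument needs an analogous fix (e.g., letting the penultimate waypoint absorb a too-short remainder, with constants recalibrated so the merged spacing stays at most $\gamma\alpha^{i-1}$).
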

\begin{proof}
	We prove the lemma by induction on $i$. For $i=2$, the statement holds directly as a consequence of \Cref{lemma:spanneralg} because in $G_2$, each node of $V$ (and thus of $G_1$) is sampled with probability $\log(n)/\alpha$ and because $G_2$ is an $h=c \alpha$-hop $(2\eta k)$-stretch skeleton spanner of $G_1$ w.r.t.\ the node set $M_2$.
	
	We can therefore focus on the induction step and $i\geq 3$. 
	Let $P'$ be the subpath of $P$ between $u$ and $v$. Further, let $M_{i-1}^{(P')}$ be the set of nodes of $M_{i-1}$ that are on path $P'$. Note that $u$ and $v$ are both in $M_{i-1}^{(P')}$ (because $M_i\subseteq M_{i-1}$). Note also that for all $i\geq 2$, nodes of $V$ are sampled to be in $M_i$ with probability $\log(n)/\alpha^{i-1}$. Because the hop-length $q$ of $P'$ is at least $\gamma\alpha^{i-2}$ for a sufficiently large constant $\gamma$, we have $M_{i-1}^{(P')}=\Omega(\log n)$, w.h.p. Further, because $q$ is upper bounded by $\gamma\alpha^{i-1}$, it also holds hat $M_{i-1}^{(P')}=O(\alpha\log n)$, w.h.p. Our goal is to select a subset $M_{i-1}'$ of $M_{i-1}^{(P')}$ of size $O(\alpha)$ such that $u,v\in M_{i-1}'$ and such that for any two consecutive nodes $x$ and $y$ of $M_{i-1}'$ on $P'$, it holds that the subpath of $P'$ connecting $x$ and $y$ is of length between $\gamma\alpha^{i-2}/5$ and $\gamma\alpha^{i-2}$. To see that this is always possible, we partition the path $P'$ into arbitrary subpaths that are all of length between $\gamma\alpha^{i-2}/5$ and $\gamma\alpha^{i-2}/4$. Because $P'$ is of length $\kappa\cdot \gamma\alpha^{i-2}/5$ for some $\kappa\geq 5$, we can always partition it into $\lfloor \kappa\rfloor$ subpaths of the required range. We then select a maximal set of non-adjacent subpaths, which contains the first and the last of the subpaths (the ones containing $u$ and $v$). Because $\gamma$ is a sufficiently large constant, each of the subpaths contains at least one node of $M_{i-1}^{(P')}$. We add $u$ and $v$ and an arbitrary node from each other selected subpath to $M_{i-1}$, which gives a set $M_{i-1}$ with the required properties.
	
	Because any two consecutive nodes $x$ and $y$ in $M_{i-1}'\subset M_{i-1}$ are at distance at most $\gamma\alpha^{i-2}$ and of size at least $\gamma\alpha^{i-2}/5\geq\gamma\alpha^{i-3}$, we can apply the induction hypothesis to the subpath between $x$ and $y$ and conclude that $x$ and $y$ are connected in $G_{i-2}$ by a path consisting of $O(\alpha)$ hops and of total length at most $(2\eta k)^{i-2}\cdot d_{G_1}(x,y)$. Assume that the constant $c$ in the definition of $h=c\alpha$ is chosen sufficiently large that the hop-length of this $G_2$-path is at most $h$. By \Cref{lemma:spanneralg}, $x$ and $y$ are therefore connected by a path consisting of at most two hops and of length $(2\eta k)d_{h,G_{i-2}}(x,y)$. Note that the induction hypothesis then also implies that $d_{h,G_{i-2}}(x,y)\leq (2\eta k)^{i-2}d_{G_1}(x,y)$. and the claim of the lemma thus follows.
\end{proof}

We next prove that in the union spanner graph $H$, there is a short (in terms of hops and weight) path between any two nodes.

\begin{lemma}\label{lemma:hierarchicalstretch}
	Let $u,v\in V$ be two nodes of $G_1$ and let $P$ be a shortest path between $u$ and $v$ on $G_1$. 
	Assume that $P$ consists of $q$ hops and assume that for the constant $\gamma$  from \Cref{lemma:shortpaths}, $\xi$ is the smallest integer for which $q\leq \gamma\alpha^\xi$. 
	Then graph $H$ contains a path that consists of at most $O(\alpha\log_\alpha n)$ hops and that has total weight at most $(2\eta k)^{\xi-1} d_{G_1}(u,v)$.
\end{lemma}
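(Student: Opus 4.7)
The plan is to prove the lemma by induction on $\xi$. For the base case $\xi = 1$, the path $P$ itself has at most $\gamma\alpha = O(\alpha\log_\alpha n)$ hops, lies in $G_1 \subseteq H$, and has weight $d_{G_1}(u,v) = (2\eta k)^{0}\cdot d_{G_1}(u,v)$, so nothing more is needed.

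For the inductive step $\xi \ge 2$, I would first partition $P$ into $\Theta(\alpha)$ consecutive subpaths, each of hop length in $[\gamma\alpha^{\xi-1}/5,\gamma\alpha^{\xi-1}/4]$. Since each node of $V$ lies in $M_\xi$ with probability $\log(n)/\alpha^{\xi-1}$, a Chernoff bound (plus a union bound) guarantees that every such subpath contains $\Omega(\log n)$ nodes of $M_\xi$ w.h.p. From these I extract waypoints $a_1,\dots,a_m \in M_\xi$ with $m = O(\alpha)$, chosen so that consecutive waypoints have hop distance in $[\gamma\alpha^{\xi-2},\gamma\alpha^{\xi-1}]$ along $P$ -- precisely the regime in which \Cref{lemma:shortpaths} applies with $i = \xi$. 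The $u$--$v$ path in $H$ is then built by concatenating a left endcap from $u$ to $a_1$, a middle segment through $a_1,\dots,a_m$, and a right endcap from $a_m$ to $v$. Each endcap is a subpath of $P$ of hop length at most $\gamma\alpha^{\xi-1}$, so the induction hypothesis with $\xi' = \xi - 1$ yields an $H$-path of $O(\alpha(\xi-1))$ hops and weight at most $(2\eta k)^{\xi-2}$ times the corresponding $G_1$-distance. For each consecutive middle pair $a_j,a_{j+1}$, \Cref{lemma:shortpaths} directly provides an $H$-path of $O(\alpha)$ hops and weight at most $(2\eta k)^{\xi-1}d_{G_1}(a_j,a_{j+1})$. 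Because $P$ is a shortest $G_1$-path, the pieces' $G_1$-distances telescope, so the weights combine to at most $(2\eta k)^{\xi-1}d_{G_1}(u,v)$.

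The main obstacle will be the hop count: the middle alone naively contributes $O(\alpha)\cdot O(\alpha) = O(\alpha^2)$ hops, which exceeds the target $O(\alpha\log_\alpha n)$ whenever $\alpha \gg \log_\alpha n$. I plan to circumvent this by replacing the $O(\alpha)$-hop connection between each consecutive pair $a_j,a_{j+1}$ by the $2$-hop $G_\xi$-edge promised by \Cref{lemma:spanneralg}. This substitution is legitimate because \Cref{lemma:shortpaths} certifies that $a_j,a_{j+1}$ are within $h = c\alpha$ hops of each other in $G_{\xi-1}$ for sufficiently large $c$, so the stage of the spanner construction that produced $G_\xi$ must have inserted such a 2-hop bridge between them. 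Doing so shrinks the middle to $O(\alpha)$ hops, bringing the grand total to $2\cdot O(\alpha(\xi-1)) + O(\alpha) = O(\alpha\xi) = O(\alpha\log_\alpha n)$. The residual factor of $(2\eta k)$ in the middle's weight that this substitution introduces is absorbed by tightening the partition parameters (so the middle's aggregated $G_1$-distance is shrunk by a compensating $(2\eta k)$-factor, using that the endcaps can soak up the remaining budget) or by strengthening the inductive claim to track hop count and a refined weight bound separately; either way the induction closes to the claimed $(2\eta k)^{\xi-1} d_{G_1}(u,v)$ stretch.
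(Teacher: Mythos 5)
Your induction skeleton (base case $\xi=1$, two endcaps handled by the induction hypothesis, a middle segment bridged at the current level) matches the paper's, but your construction of the middle segment contains a genuine gap. With $m=O(\alpha)$ waypoints $a_1,\dots,a_m\in M_\xi$ you correctly observe that applying \Cref{lemma:shortpaths} to each consecutive pair gives $O(\alpha^2)$ hops, and your proposed repair --- replacing each such connection by the $2$-hop bridge of $G_\xi$ guaranteed by \Cref{lemma:spanneralg} --- necessarily costs an extra stretch factor: that bridge has weight at most $2\eta k\cdot d_{h,G_{\xi-1}}(a_j,a_{j+1})\leq (2\eta k)\cdot(2\eta k)^{\xi-1}d_{G_1}(a_j,a_{j+1})=(2\eta k)^{\xi}d_{G_1}(a_j,a_{j+1})$, one power of $2\eta k$ above the target. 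Neither of your two suggestions closes this. The ``endcaps soak up the budget'' idea fails because the bound is multiplicative in the respective $G_1$-lengths: writing $d_{G_1}(u,v)=W_{\mathrm{end}}+W_{\mathrm{mid}}$, you would need roughly $W_{\mathrm{end}}\gtrsim(2\eta k)\cdot W_{\mathrm{mid}}$, but the middle typically carries almost all of the path (its hop length can be $\gamma\alpha^{\xi}$ versus $O(\gamma\alpha^{\xi-1})$ for the endcaps) and nothing bounds its weight relative to the endcaps'. The alternative of ``strengthening the inductive claim'' is not specified, and any strengthening that tolerates $(2\eta k)^{\xi}$ at level $\xi$ just propagates the loss.

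The paper resolves exactly this tension by using a \emph{single} bridge per recursion level rather than $O(\alpha)$ of them: it takes $x$ and $y$ to be the first and last nodes of $M_\xi$ on $P$, invokes \Cref{lemma:shortpaths} once to connect $x$ and $y$ in $G_{\xi-1}$ (or $G_{\xi-2}$, depending on the hop length of the subpath) by a path of $O(\alpha)$ hops and weight at most $(2\eta k)^{\xi-1}d_{G_1}(x,y)$, and recurses only on the two endcaps $u$--$x$ and $y$--$v$, each of hop length at most $\gamma\alpha^{\xi-1}/3$. Since $x,y\in M_\xi\subseteq M_{\xi-1}$, each endcap spawns only one nontrivial endcap at the next level, so the recursion contributes at most two $O(\alpha)$-hop pieces per value of $\xi$, giving $O(\alpha\log_\alpha n)$ hops in total without any additional stretch. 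If you want to keep your many-waypoint decomposition, you would have to re-derive, rather than cite, the two-hop bridges so that they carry stretch $(2\eta k)^{\xi-1}$ --- which is essentially re-proving \Cref{lemma:shortpaths} --- so the single-bridge route is the one to take.
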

\begin{proof}
	We prove the lemma by induction on $\xi$.
	First note that for $\xi=1$, the claim directly follows because $P$ is a path of length $O(\alpha)$ on $G_1$. Let us therefore consider $\xi\geq 2$. Recall that the nodes in $M_{\xi}$ are sampled with probability $\log(n)/\alpha^{xi-1}$. Hence, w.h.p., for sufficiently large $\gamma$, every subpath of length at most $\gamma\alpha^{\xi-1}/3$ contains at least one node of $M_{\xi}$. Let $x$ and $y$ be the first and the last node of $M_{\xi}$ on $P$ when going along the path from $u$ to $v$. Note that by the above observation, $x$ is within hop-distance $\leq \gamma\alpha^{\xi-1}/3$ from $u$ and $y$ is within hop-distance $\leq \gamma\alpha^{\xi-1}/3$ from $v$. Because the hop-length of $P$ is at most $\gamma\alpha^{\xi}$, clearly also the hop distance of the subpath $P'$ of $P$ between $x$ and $y$ at most $\gamma\alpha^{\xi}$. Let us first assume that the hop-length of $P'$ is at least $\gamma\alpha^{\xi-1}$. Then \Cref{lemma:shortpaths} implies that $G_{\xi-1}$ (and thus $H$) contains a path consisting of $O(\alpha)$ nodes and of total length at most $(2\eta k)^{\xi-1}\cdot d_{G_1}(x,y)$. Otherwise, we have that the number of hops of $P'\geq \gamma\alpha^{\xi-1}/3\geq \gamma\alpha^{\xi-2}$. We can then assume that $\xi\geq 3$ as otherwise, $P'$ is a path of length $O(\alpha)$ on $G_1$. Because $M_{\xi}\subseteq M_{\xi-1}$, \Cref{lemma:shortpaths} then implies that $G_{\xi-2}$ (and thus $H$) contains a path consisting of $O(\alpha)$ nodes and of total length at most $(2\eta k)^{\xi-2}\cdot d_{G_1}(x,y)$. In both cases, we have reduced the problem to a case that is covered by the induction hypothesis. The hop length of the combined path follows because we get at most $2$ path of length $O(\alpha)$ for each $\xi$-value.
\end{proof}

\begin{theorem} \label{theorem:polySSSP}
	The algorithm solves $(\log_\alpha n)^{O(\log_\alpha n)}$-SSSP in time $\tilO(\alpha^3)$, w.h.p.
\end{theorem}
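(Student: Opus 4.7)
The plan is to combine the approximation guarantee from \Cref{lemma:hierarchicalstretch} with a careful per-level runtime accounting using \Cref{lemma:spannerreal}. For the approximation factor, I would first observe that Baswana--Sen with parameter $k=\log_\alpha n$ guarantees $d_{G_1}(u,v)\leq (2k-1)\cdot d_G(u,v)$ for all $u,v$. Fix any node $v\in V$ and a shortest $s$--$v$ path $P$ in $G_1$ of hop length $q\leq n$. Then the integer $\xi$ from \Cref{lemma:hierarchicalstretch} satisfies $\xi = O(\log_\alpha n)$, so $H$ contains an $s$--$v$ path of at most $O(\alpha\log_\alpha n)$ hops and total weight at most
\[
(2\eta k)^{\xi-1}\cdot d_{G_1}(s,v)\;\leq\;(2\eta\log_\alpha n)^{O(\log_\alpha n)}(2k-1)\cdot d_G(s,v)\;=\;(\log_\alpha n)^{O(\log_\alpha n)}\cdot d_G(s,v).
\]
A BFS from $s$ in $H$ run for $\Theta(\alpha\log n)$ hops therefore delivers an estimate $\tilde d(s,v)$ with $d(s,v)\leq \tilde d(s,v)\leq (\log_\alpha n)^{O(\log_\alpha n)} d(s,v)$, as required.

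For termination after $T=O(\log_\alpha n)$ levels, I would argue that since each node of $G_{i-1}$ survives into $M_i$ with probability $1/\alpha$ (for $i\geq 3$) and with probability $(\log n)/\alpha$ for $i=2$, a standard Chernoff plus union bound shows that $|M_i|=O(n/\alpha^{i-1}\cdot \log n)$ w.h.p., and in particular $M_{T+1}=\emptyset$ w.h.p.\ for $T=\Theta(\log_\alpha n)$. This also bounds the total recursion depth entering \Cref{lemma:hierarchicalstretch}.

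For the runtime, the first-stage Baswana--Sen execution via local edges costs $O(k^2)=O(\log_\alpha^2 n)$ rounds and yields $G_1$ in $\tilO(\alpha)$-oriented form. For each subsequent level $i\geq 2$, I instantiate \Cref{lemma:spannerreal} with $h=c\alpha$, $k=\log_\alpha n$, constant $\eta>1$, and $|M|\leq |M_i|\leq n$, so $|M|^{1/k}\leq n^{1/\log_\alpha n}=\alpha$. Since the previous spanner is in $\delta$-oriented form with $\delta = O(k|M|^{1/k}\log n\cdot\log_\eta W)=\tilO(\alpha)$, one level costs
\[
O\bigl((\delta+\log n)|M|^{1/k}\log^2 n\cdot h k^2\cdot \log_\eta W\bigr)\;=\;\tilO(\alpha\cdot \alpha\cdot \alpha)\;=\;\tilO(\alpha^3)
\]
rounds. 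Summing over the $T=O(\log_\alpha n)$ levels absorbs into the polylog, giving $\tilO(\alpha^3)$ for building $H$. Finally, the second-stage BFS from $s$ on $H$ for $\Theta(\alpha\log n)$ hops can be executed in the global network by the aggregation/broadcast primitives of \cite{AGG+18}: since each $G_i$ is stored in $\tilO(\alpha)$-oriented form and there are $O(\log_\alpha n)$ of them, one hop of BFS costs $\tilO(\alpha)$, for a total of $\tilO(\alpha^2)$. The overall bound is $\tilO(\alpha^3)$.

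The main obstacle I anticipate is twofold: (i) verifying that the $\delta$-oriented form is preserved across the recursion so that \Cref{lemma:spannerreal} can indeed be applied with $\delta=\tilO(\alpha)$ at every level (this requires checking that the edge-responsibility bound from the previous application matches the $\delta$-input requirement of the next), and (ii) tightening the bound on $\xi$ in \Cref{lemma:hierarchicalstretch} uniformly over all pairs $(s,v)$ so that the exponent $\xi-1$ in the stretch is $O(\log_\alpha n)$; the latter hinges on the hop-length of paths in $G_1$ being at most $n$ together with $\gamma\alpha^{\xi-1}\geq n$ forcing $\xi = O(\log_\alpha n)$. Once these are pinned down, the two displayed estimates above yield the theorem.
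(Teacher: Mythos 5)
Your proposal is correct and follows essentially the same route as the paper's proof: it combines \Cref{lemma:hierarchicalstretch} with the $O(k)$-stretch of the Baswana--Sen spanner $G_1$ for the approximation factor, bounds the recursion depth $T=O(\log_\alpha n)$ by the sampling probabilities, and accounts the per-level cost via \Cref{lemma:spannerreal} with $\delta=\tilO(\alpha)$ and $|M|^{1/k}\leq\alpha$ to get $\tilO(\alpha^3)$ for stage one and $\tilO(\alpha^2)$ for the BFS. The two anticipated obstacles you flag (preservation of the oriented form across levels and the uniform bound $\xi=O(\log_\alpha n)$) are exactly the points the paper handles, in the same way.
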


\begin{proof}
	We first show the runtime of the first stage.
	Constructing the first spanner $G_1$ takes time $O(\log^2_\alpha n)$, which follows from \cite{BS07} and  our choice of $k = \log_\alpha n$.
	The resulting spanner is in $\tilO(\alpha)$-oriented form.
	By our choice of $k$, every spanner $G_{i}$, $i \ge 2$, that is constructed using our sparse skeleton spanner algorithm is also in $\tilO(\alpha)$-oriented form by \Cref{lemma:spannerreal}.
	Therefore, constructing $G_{i}$ takes time $\tilO(\alpha |M|^{1/\log_\alpha n} \alpha) = \tilO(\alpha^3)$.
	Furthermore, we have that $T = O(\log_\alpha n)$, which follows from the fact that the probability for a node to be in spanner $G_{i + 1}$ , $i+1 = c \log_\alpha n$ for some $c \ge 1$, is at most $\log n (1/\alpha)^{c \log_\alpha n} = 1/n^{c-1}$, and by applying the union bound over all nodes.
	Thus, the first stage of the algorithm takes time $\tilO(\alpha^3)$, w.h.p.
	
	Now consider the second stage. 
	As every spanner is in $\tilO(\alpha)$-oriented form, in $H$ every node is responsible for $\tilO(\alpha)$ edges.
	Therefore, the broadcast trees necessary to perform multi-aggregations as in \cite{AGG+18} can be set up in time $\tilO(\alpha)$, w.h.p., and every round of the BFS can be realized by performing multi-aggregations in time $\tilO(\alpha)$, w.h.p. 
	As we perform $\tilO(\alpha)$ rounds of BFS, the second stage takes time $\tilO(\alpha^2)$, w.h.p.

	Finally, we show the approximation factor.
	Let $u, v \in V$ be two nodes.
	By \Cref{lemma:hierarchicalstretch} and because the total number of recursive levels is at most $T=O(\log_\alpha n)$, between any two nodes $u,v\in V$, the combined spanner $H$ contains a path of length at most $(2\eta k)^T\cdot d_{G_1}(u,v)=O(k)^{O(\log_\alpha n)}\cdot d_{G_1}(u,v)$ and this path consists of at most $O(\alpha\log_\alpha n)$ hops. 
	Because also the spanner $G_1$, which is computed by using the algorithm of Baswana and Sen~\cite{BS07} has hop and distance stretch $O(k)$, we can conclude that between any two nodes $u,v\in V$, the spanner $H$ contains a path of length $(\log_{\alpha}n)^{O(\log_\alpha n)}\cdot d_G(u,v)$, consisting of at most $O(\alpha\log_\alpha n) = \tilO(\alpha)$ hops.
	This particularly shows that by propagating distances from $s$ for $O(\alpha \log_\alpha n)$ rounds, every node $v \in V$ learns a distance estimate $\tilde{d}(s,v) \le (\log_{\alpha}n)^{O(\log_\alpha n)}\cdot d(s,v)$, which concludes the proof.
\end{proof}


 \clearpage
 
 \appendix 

\section{General Notions from Probability Theory}
\label{apx:generalnotations}


\begin{lemma}[Chernoff Bound]
	\label{lem:chernoffbound}
	We use the following forms of Chernoff bounds in our proofs: 
		$$\mathbb{P}\big(X > (1 \!+\! \delta) \mu_H\big) \leq \exp\big(\!\!-\!\frac{\delta\mu_H}{3}\big),$$ 
	with $X = \sum_{i=1}^n X_i$ for i.i.d.\ random variables $X_i \in \{0,1\}$ and $\mathbb{E}(X) \leq \mu_H$ and $\delta \geq 1$. Similarly, for $\mathbb{E}(X) \geq \mu_L$ and $0 \leq \delta \leq 1$ we have
	$$\mathbb{P}\big(X < (1 \!-\! \delta) \mu_L\big) \leq \exp\big(\!\!-\!\frac{\delta^2\mu_L}{2}\big).$$	
\end{lemma}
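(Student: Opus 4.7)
The plan is to derive both bounds by the classical moment-generating-function (MGF) method, together with a small monotonicity step to handle the one-sided assumptions $\E(X)\le \mu_H$ and $\E(X)\ge \mu_L$.

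First I would reduce to the equal-mean case. Let $\mu := \E(X)$. For the upper tail, if $\mu \le \mu_H$, set $\delta' := (1+\delta)\mu_H/\mu - 1 \ge \delta \ge 1$, so that $(1+\delta)\mu_H = (1+\delta')\mu$. Then
\[
\Pr\bigl(X > (1+\delta)\mu_H\bigr) = \Pr\bigl(X > (1+\delta')\mu\bigr),
\]
and since $\delta'\mu = (1+\delta)\mu_H - \mu \ge \delta\mu_H$, any bound of the form $\exp(-\delta'\mu/3)$ immediately yields $\exp(-\delta\mu_H/3)$. An analogous substitution $\delta' := 1 - (1-\delta)\mu_L/\mu \in [0,1]$ handles the lower tail assumption $\mu \ge \mu_L$, so I may assume equality $\mu = \mu_H$ (resp.\ $\mu = \mu_L$).

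Next I would apply the standard exponential-Markov argument. For $t>0$, using independence and the elementary inequality $1+x \le e^x$,
\[
\E\bigl[e^{tX}\bigr] = \prod_{i=1}^n \E\bigl[e^{tX_i}\bigr] = \prod_{i=1}^n \bigl(1 + p_i(e^t-1)\bigr) \le \exp\!\Bigl(\mu(e^t-1)\Bigr),
\]
where $p_i = \Pr(X_i=1)$. Markov's inequality then gives, for any threshold $a$,
\[
\Pr(X \ge a) \le e^{-ta}\,\E\bigl[e^{tX}\bigr] \le \exp\!\bigl(\mu(e^t-1) - ta\bigr).
\]
For the upper tail with $a=(1+\delta)\mu$, optimize by taking $t=\ln(1+\delta)$, obtaining the familiar form $\bigl(e^\delta/(1+\delta)^{1+\delta}\bigr)^\mu$; for the lower tail with $t<0$, take $t=\ln(1-\delta)$ to obtain $\bigl(e^{-\delta}/(1-\delta)^{1-\delta}\bigr)^\mu$.

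Finally I would simplify these to the stated closed forms. For the upper tail with $\delta \ge 1$, I would verify $(1+\delta)\ln(1+\delta) - \delta \ge \delta/3$ by checking the inequality at $\delta=1$ (where $2\ln 2 - 1 \approx 0.386 \ge 1/3$) and noting that the derivative of the left side, $\ln(1+\delta)$, exceeds $1/3$ for all $\delta \ge 1$. For the lower tail with $0\le\delta\le 1$, I would verify $(1-\delta)\ln(1-\delta) + \delta \ge \delta^2/2$ by a Taylor expansion of $f(\delta) := (1-\delta)\ln(1-\delta) + \delta - \delta^2/2$ around $0$: one has $f(0)=0$, $f'(0)=0$, and $f''(\delta) = \delta/(1-\delta) \ge 0$ on $[0,1)$, so $f\ge 0$ throughout. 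The main (only) obstacle is bookkeeping: making sure the monotonicity reduction from $\mu \le \mu_H$ (resp.\ $\mu \ge \mu_L$) to the equal-mean case really preserves the exponential bound, which the substitution above settles cleanly.
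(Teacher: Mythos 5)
Your derivation is correct; note, though, that the paper gives no proof of this lemma at all (it records these as standard Chernoff bound variants and only proves the union-bound lemma), so there is no in-paper argument to compare against — your moment-generating-function route with $t=\ln(1+\delta)$, $t=\ln(1-\delta)$ and the elementary estimates $(1+\delta)\ln(1+\delta)-\delta\geq\delta/3$ for $\delta\geq 1$ and $(1-\delta)\ln(1-\delta)+\delta\geq\delta^2/2$ for $\delta\in[0,1]$ is exactly the standard textbook proof of the stated forms. One small point to tighten: your lower-tail reduction via $\delta' := 1-(1-\delta)\mu_L/\mu$ is not quite ``analogous'' to the upper-tail one, because the exponent is quadratic in $\delta$; you still need $\delta'^2\mu\geq\delta^2\mu_L$, i.e.\ $\bigl(\mu-(1-\delta)\mu_L\bigr)^2\geq\delta^2\mu_L\mu$, which does hold (writing $\mu=(1+s)\mu_L$ with $s\geq 0$, the difference is $\mu_L^2\,s\bigl(s+\delta(2-\delta)\bigr)\geq 0$), but should be checked. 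In fact, for the lower tail no substitution is needed: since $\mu\geq\mu_L$ one has $\mathbb{P}\bigl(X<(1-\delta)\mu_L\bigr)\leq\mathbb{P}\bigl(X<(1-\delta)\mu\bigr)\leq\exp\bigl(-\delta^2\mu/2\bigr)\leq\exp\bigl(-\delta^2\mu_L/2\bigr)$ directly, with the same $\delta$; only the upper tail genuinely requires your rescaling of $\delta$, and there your verification $\delta'\mu\geq\delta\mu_H$ is complete.
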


\begin{lemma}[Union Bound]
	\label{lem:unionbound}
	Let $E_1, \ldots ,E_k$ be events, each taking place w.h.p. If $k \leq p(n)$ for a polynomial $p$ then $E \coloneqq \bigcap_{i=1}^{k} E_i$ also takes place w.h.p.
\end{lemma}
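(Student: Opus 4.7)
The plan is to apply Boole's inequality together with the flexibility in the constant hidden in the definition of \emph{with high probability}. Recall that an event $A$ holds w.h.p.\ means that for any desired constant $c'>0$, we can guarantee $\Pr(\overline{A}) \le 1/n^{c'}$ (by choosing the algorithm's internal constants appropriately). So the key observation is that when each $E_i$ holds w.h.p., the constant in the failure exponent is not fixed but is a tunable parameter.

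First, I would fix a target constant $c' > 0$ for which we want to show $\Pr(\overline{E}) \le 1/n^{c'}$. Since $p$ is a polynomial, there is a constant $d$ such that $p(n) \le n^d$ for all sufficiently large $n$. I would then invoke the w.h.p.\ guarantee for each $E_i$ with the constant $c := c' + d$, yielding $\Pr(\overline{E_i}) \le 1/n^{c'+d}$ for every $i \in \{1,\dots,k\}$.

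Next, I would apply Boole's inequality to the complement:
\[
\Pr(\overline{E}) \;=\; \Pr\!\left(\bigcup_{i=1}^{k} \overline{E_i}\right) \;\le\; \sum_{i=1}^{k} \Pr(\overline{E_i}) \;\le\; k \cdot \frac{1}{n^{c'+d}} \;\le\; \frac{n^d}{n^{c'+d}} \;=\; \frac{1}{n^{c'}}.
\]
Since $c'$ was arbitrary, this shows that $E$ holds w.h.p.

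There is no real obstacle here; the only subtlety worth emphasizing is that one must increase the constant $c$ in each individual high-probability bound by $d$ to absorb the polynomial number of events. This ``constant-shifting'' trick is the reason that the union bound is closed under polynomially many w.h.p.\ events, but not, in general, under super-polynomially many.
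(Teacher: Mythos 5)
Your proposal is correct and follows essentially the same route as the paper's proof: bound $p(n)$ by $n^d$, apply Boole's inequality to the complements, and absorb the polynomial factor by shifting the constant in the individual high-probability bounds from $c'$ to $c'+d$. The only cosmetic difference is that you fix the target constant $c'$ up front and instantiate $c=c'+d$ immediately, whereas the paper leaves $c$ unspecified until the end; the argument is identical.
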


\begin{proof}
	Let $d \coloneqq \deg(p)\!+\!1$. Then there is an $n_0 \geq 0$ such that $p(n) \leq n^d$ for all $n \geq n_0$. Let $n_1, \ldots , n_k \in \mathbb{N}$ such that for all $i \in \{1, \ldots, k\}$ we have $\mathbb{P}(\overline{E_i}) \leq \tfrac{1}{n^c}$ for some (yet unspecified) $c > 0$.	
	With Boole's Inequality (union bound) we obtain
	\begin{align*}
	\mathbb{P}\big(\overline{E}\big) = \mathbb{P}\Big(\bigcup_{i=1}^{k} \overline{E_i} \Big) \leq \sum_{i=1}^{k} \mathbb{P}(\overline{E_i}) \leq \sum_{i=1}^{k} \frac{1}{n^c} \leq \frac{p(n)}{n^{c}} \leq \frac{1}{n^{c-d}}				
	\end{align*}
	for all $n \geq n_0' \coloneqq \max(n_0, \ldots ,n_k)$. Let $c' > 0$ be arbitrary. We choose $c \geq c' \!+ d$. Then we have $\mathbb{P}\big(\overline{E}\big) \leq \frac{1}{n^{c'}}$ for all $n \geq n_0'$.
\end{proof}

\begin{remark}
	If a finite number of events is involved we use the above lemma without explicitly mentioning it. It is possible to use the lemma in a nested fashion as long as the number of applications of the lemma is polynomial in $n$.
\end{remark}

%

\section{A Scheme to Balance Congestion}
\label{apx:balance_congestion}

The following lemma is a slight adaptation of \cite{Gha15}, Theorem 1.1.
We say an algorithm $\mathcal{A}$ is \emph{simple}, if during its execution only local edges are used, at most $O(1)$ messages are sent over every edge in each round, and its execution only depends on $G$ and the node's input for $\mathcal{A}$, i.e., $\mathcal{A}$ is independent from any other concurrently running algorithm.

\begin{lemma}
	\label{lem:balance_congestion}
	Let $\mathcal{A}_1, \ldots, \mathcal{A}_k$ be simple algorithms and let $D$ be the maximal running time of any algorithm.
	Further, let $C$ be the maximum cumulative number of messages that are sent over some edge by executing $\mathcal{A}_1, \ldots, \mathcal{A}_k$. If local capacity $\lambda \ge \log n$, then there is an algorithm that executes $\mathcal{A}_1, \ldots, \mathcal{A}_k$ in time $O(C/\lambda + D + \log n)$, w.h.p.
\end{lemma}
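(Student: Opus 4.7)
The plan is to apply a Leighton--Maggs--Rao style random-delay scheduling argument, in the spirit of Theorem~1.1 of~\cite{Gha15}. Set the scheduling window to $T := \max\{\lceil 4C/\lambda \rceil,\, c\log n\}$ for a sufficiently large constant $c$, and independently for each algorithm $\mathcal{A}_i$ draw a delay $\delta_i$ uniformly at random from $\{0,1,\ldots,T-1\}$. Algorithm $\mathcal{A}_i$ idles for $\delta_i$ rounds and then executes for at most $D$ rounds, so the whole schedule has length $T + D = O(C/\lambda + D + \log n)$. Because each $\mathcal{A}_i$ is simple---its messages depend only on $G$ and the input to $\mathcal{A}_i$, not on any other concurrently running algorithm---the delay $\delta_i$ can be sampled by the initiator of $\mathcal{A}_i$ and piggybacked on its first message, or derived from shared randomness seeded with the algorithm's identifier, so no extra coordination between different $\mathcal{A}_i$'s is needed.

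The main obstacle, and the technical heart of the argument, is to verify that this randomized schedule respects the local capacity, i.e., that w.h.p.\ no edge carries more than $\lambda$ messages in any round. Fix an edge $e$ and a global round $\tau$. Let $m_{i,r}^{(e)} = O(1)$ denote the number of messages that $\mathcal{A}_i$ sends over $e$ in its $r$-th local round; by the simplicity assumption these counts are deterministic functions of the input, hence independent of the random delays. The load on $e$ in round $\tau$ is
\[
X_{e,\tau} \;=\; \sum_{i,r} m_{i,r}^{(e)}\cdot \mathbf{1}[\delta_i = \tau - r].
\]
Since $\Pr[\delta_i = \tau - r] \le 1/T \le \lambda/(4C)$ and $\sum_{i,r} m_{i,r}^{(e)} \le C$ by the cumulative-congestion assumption, we obtain $\E[X_{e,\tau}] \le \lambda/4$.

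The summands in $X_{e,\tau}$ are independent across $i$ (each $\delta_i$ is drawn independently) and each is $O(1)$, so a standard Chernoff bound (\Cref{lem:chernoffbound}) yields $\Pr[X_{e,\tau} > \lambda] \le \exp(-\Omega(\lambda)) \le n^{-\kappa}$ for any desired constant $\kappa$, once the constant $c$ in the definition of $T$ (which ensures $\lambda \ge c\log n$ after padding) is chosen large enough; here is precisely where the standing assumption $\lambda \ge \log n$ is used. A union bound over all $O(n^2)$ edges and the $T + D \le \poly(n)$ rounds of the schedule then guarantees that no edge ever exceeds its capacity, w.h.p. Once capacity is respected, no messages are dropped by the adversary and every $\mathcal{A}_i$ terminates correctly within its allotted execution window, so the total round complexity is $T + D = O(C/\lambda + D + \log n)$; the additive $\log n$ is exactly the padding of the delay window needed in the regime where $C/\lambda$ is too small for Chernoff concentration to hold on its own.
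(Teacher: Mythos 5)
Your overall plan---random start delays drawn from a window of length $\Theta(C/\lambda + \log n)$, an expectation bound of $O(\lambda)$ on the per-edge per-round load, a concentration bound, and a union bound over edges and rounds---is exactly the route the paper takes, following Theorem~1.1 of \cite{Gha15}. The gap is in how the delays are realized and, consequently, in the concentration step. You assert that the summands of $X_{e,\tau}$ are fully independent across $i$ and then apply a standard Chernoff bound. But in this distributed setting full independence is not available for free: every node participating in $\mathcal{A}_i$ must use the \emph{same} delay $\delta_i$, and the algorithms to which this lemma is applied (e.g., $\mathcal{A}_e$ for learning the graph, or $\mathcal{A}_t$ for local token dissemination) have \emph{several} nodes that act spontaneously in their first round, so "sample at the initiator and piggyback on the first message" does not give all participants the delay in time. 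Your fallback, deriving $\delta_i$ from shared randomness keyed by the algorithm's identifier, does work---but a seed short enough to be broadcast in $O(\log n)$ rounds (the paper uses $O(\log^2 n)$ bits) yields only $\Theta(\log n)$-wise independent delays, not mutually independent ones, since full independence for $k$ delays (with $k$ as large as $|E|=\Theta(n^2)$) would require $\Omega(k\log n)$ shared bits.

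This is why the paper does not use a standard Chernoff bound: it invokes a tail bound for sums of $\Theta(\log n)$-wise independent variables (Theorem~5~(II)(a) of \cite{SSS95}), which is where the condition $\lambda \ge \log n$ enters (the independence parameter $\Theta(\log n)$ must be at most roughly the deviation threshold $2\lambda/\alpha$). To repair your proof you should either (a) replace the standard Chernoff bound by such a limited-independence concentration bound and add the $O(\log n)$ rounds needed to broadcast the shared seed over the global network, or (b) argue explicitly that for the specific algorithms at hand a single designated initiator per $\mathcal{A}_i$ suffices, which would restore full independence but does not hold in the generality the lemma is stated and used.
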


\begin{proof}
	The idea of the algorithm is to begin the execution of $\mathcal{A}_i$ in round $t_i$, which is chosen uniformly at random from the interval $[1, \alpha C/\lambda]$ for some constant $\alpha \ge 3$.
	As pointed out in \cite{Gha15}, sharing $O(\log^2 n)$ bits of randomness suffices to obtain $\Theta(\log n)$-wise independence for the choice of $t_i$'s.
	These bits can easily be broadcasted in the global network in time $O(\log n)$ (see, e.g., \cite{AGG+18}).
	Clearly, as $t_i \le \alpha C/\lambda$ for all $i$, the algorithm takes time $O(C/\lambda + D + \log n)$.
	It remains to show that in every round of the algorithm only $\lambda$ messages need to be sent over the same local edge.
	
	Fix some edge $e \in E$ and round $t$, and let $X_{i}$ be the binary random variable that is $1$ if and only if messages are sent over $e$ in round $t$ in algorithm $\mathcal{A}_i$.
	Further, let $c_i$ be the number of rounds in which messages are sent over $e$ in $\mathcal{A}_i$.
	By the union bound, we have that $Pr[X_{i} = 1] \le c_i  \lambda / (\alpha C)$.
	Then $X = \sum_{i = 1}^k X_i$ is a sum of $\Theta(\log n)$-wise independent binary random variables with expected value $E[X] \le \sum_{i=1}^k c_i \lambda / (\alpha C) = \lambda / \alpha$.
	Therefore, Theorem 5 (II) (a) of \cite{SSS95} with $\delta = 2$ and $k = \Theta(\log n) \le \lfloor 2 \lambda e^{-1/3} / \alpha\rfloor$ for sufficiently large $\alpha$ yields
	\[
	Pr[X \ge \lambda] \le Pr[X \ge 2 \lambda/\alpha] \le e^{-\Theta(\log n)}.
	\]
	Creating sufficiently large independence and taking the union bound over all edges and rounds implies the lemma.
\end{proof}

We can apply the above lemma to bound the local capacity required to learn the graph $G$ up to $h$ hops.

\begin{lemma}
	\label{lem:congestion_to_learn_G}
	Local capacity $\lambda = \Theta(|E|/h)$ suffices so that all nodes of $G$ can learn $G$ (including edge weights) up to hop-distance $h$ in $\tilO(h)$ rounds via the local network.
\end{lemma}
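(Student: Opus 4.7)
The plan is to reduce the problem to an application of \Cref{lem:balance_congestion} by defining, for each edge $e \in E$, a separate simple algorithm $\mathcal{A}_e$ whose sole purpose is to flood the information about $e$ (its two endpoints and its weight, which fits into an $O(\log n)$-bit message) to every node within hop-distance $h$ of $e$. Concretely, $\mathcal{A}_e$ is initiated at the two endpoints of $e$ and proceeds as a standard BFS-style flood for $h$ rounds: in each round, every node that has learned about $e$ in the previous round sends the information about $e$ to each of its neighbors. Running all $\mathcal{A}_e$ in parallel clearly suffices for every node to learn its full $h$-hop neighborhood of $G$.

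Next I would check that each $\mathcal{A}_e$ satisfies the definition of a simple algorithm as required by \Cref{lem:balance_congestion}. It uses only local edges and only the topology of $G$ together with the (constant) input ``this is edge $e$,'' so it is independent of any other concurrently running algorithm. Moreover, during the flood the information about $e$ crosses any fixed edge at most twice (once in each direction, when the wavefront first arrives at each endpoint), so $\mathcal{A}_e$ transmits $O(1)$ messages per edge per round, and in fact $O(1)$ messages in total per edge. The dilation parameter is therefore $D = h$.

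With $k = |E|$ such algorithms, the cumulative number of messages sent across any fixed local edge over the entire execution is at most $C = O(|E|)$, since each $\mathcal{A}_e$ contributes $O(1)$ to any edge's count. Plugging into \Cref{lem:balance_congestion} with $\lambda = \Theta(|E|/h)$ (and assuming $\lambda \ge \log n$; otherwise the bound $\lambda = \Theta(|E|/h + \log n)$ only affects lower-order terms), we obtain a total running time of
\[
O\!\left(\frac{C}{\lambda} + D + \log n\right) = O\!\left(\frac{|E|}{|E|/h} + h + \log n\right) = O(h + \log n) = \tilO(h),
\]
w.h.p., as required.

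The only potential obstacle is the subtle bookkeeping around what ``simple'' means in \Cref{lem:balance_congestion}: we must verify that scheduling each $\mathcal{A}_e$ with an independent random start time $t_e$ preserves correctness, which it does because $\mathcal{A}_e$ only needs its own $h$ consecutive rounds (starting at $t_e$) to complete its flood, and because the congestion bound produced by the scheduling lemma guarantees that no local edge is asked to carry more than $\lambda$ of the simultaneously active messages in any given round. Hence the shared randomness argument of \Cref{lem:balance_congestion} carries over without modification, and the claimed upper bound on $\lambda$ follows.
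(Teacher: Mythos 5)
Your proof is correct and follows essentially the same route as the paper's: decompose the task into one simple per-edge flooding algorithm $\mathcal{A}_e$, bound the congestion by $C = O(|E|)$ and the dilation by $D = \tilO(h)$, and apply \Cref{lem:balance_congestion} to conclude that $\lambda = \Theta(|E|/h)$ suffices.
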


\begin{proof}
	For each $e \in E$ we define an algorithm $\mathcal A_e$. Algorithm $\mathcal A_e$ runs on every node and does the following. In the first round every node $v$ that is adjacent to $e$ sends the information about $e$ (including the weight label) to all its neighbors via local edges. If node $v$ learns $e$ for the \textit{first time} in some round, then it sends $e$ to all its neighbors via local edges. Note that $e$ and its weight label fit into a message of size $O(\log n)$ (recall that weights are polynomially bounded in $n$). 
	
	Since each algorithm $\mathcal A_e$ sends at most two messages over each edge (once from each endpoint) and the algorithms $\mathcal{A}_e, \: e \in E $ are independent, they are simple. Clearly, by running all $\mathcal A_e$ in parallel for $\tilO(h)$ rounds, every node learns $G$ up to hop-distance $h$, so the dilation is $D = \tilO(h)$. Additionally, we send at most $C = 2|E|$ messages over any edge in any round.	
	According to the previous \Cref{lem:balance_congestion} it is possible execute all $\mathcal A_e, \: e \in E$ in $O(C/\lambda + D + \log n) = \tilO(|E|/\lambda + h)$ rounds. Thus  $\lambda = \Theta(|E|/h)$ suffices to achieve the running time $\tilO(h)$.	
\end{proof}

If nodes only want to learn the $h$-limited distances to a subset of $k$ nodes in $h$ rounds we can employ the distributed version of Bellman-Ford for $k$ sources (instead of learning the whole $h$-neighborhood) which sends at most $2k$ messages over each edge per round. Specifically this means, that all nodes can learn the $h$-limited distances to every node in their $h$-hop neighborhood in $h$ rounds, with local capacity $\lambda = 2n$.

\begin{lemma}
	\label{lem:congestion_bellman_ford}
	Local capacity $\lambda = 2k$ suffices so that all nodes of $G$ can learn their $h$-limited distances to a subset of $k$ nodes in $h$ rounds via the local network.\footnote{Note that this result can be improved even further by adapting the ``short-range'' \CONGEST algorithm of \cite{HNS17} to compute $h$-limited shortest paths for $k$ sources in $\tilO(k\sqrt{h})$ rounds.}
\end{lemma}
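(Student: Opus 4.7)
The plan is to run the classical distributed Bellman-Ford algorithm in parallel for all $k$ sources. Let $S = \{s_1, \ldots, s_k\}$ denote the source set. Each node $v$ will maintain a table $d_v[1..k]$ of distance estimates, initialized to $d_{s_j}[j] = 0$ for each source $s_j$ and to $\infty$ everywhere else. In each of $h$ rounds, every node $v$ sends its current entries $d_v[1], \ldots, d_v[k]$ to each neighbor via the corresponding local edge, and then updates
\[
d_v[j] \gets \min\Big(d_v[j],\ \min_{u \in N(v)} \big(d_u[j] + w(\{u,v\})\big)\Big)
\]
for every $j$. A straightforward induction on the round index $t$ shows that after $t$ rounds $d_v[j] = d_t(s_j, v)$, using the recurrence $d_t(s_j,v) = \min_{u\in N(v)} \big(d_{t-1}(s_j,u) + w(\{u,v\})\big)$. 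Hence after $h$ rounds every node knows its $h$-limited distance to every source in $S$.

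For the congestion analysis, I will argue that this algorithm needs only $\lambda = 2k$ per local edge in each round. In any single round, each endpoint of a local edge sends exactly $k$ messages of size $O(\log n)$ across the edge (one per source, carrying the current distance estimate; note that each estimate is at most $h \cdot W = \poly(n)$ and thus fits into $O(\log n)$ bits). Since messages are sent from both endpoints, the cumulative number of messages crossing any local edge in a single round is at most $2k$, which is exactly the stated bound.

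The runtime and correctness are immediate once these two ingredients are in place: the invariant tells us that $h$ rounds are sufficient, and the per-edge count tells us that all messages can actually be delivered under a local capacity of $2k$. There is no real obstacle here since the analysis is essentially a bookkeeping exercise; the only point worth stating carefully is that the per-source messages can be sent independently on each incident edge (so no packing across sources is required), which is why the bound is simply $2k$ rather than something larger.
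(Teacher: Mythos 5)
Your proposal is correct and matches the paper's own proof: both run the $k$-source distributed Bellman-Ford for $h$ rounds and observe that each round at most $k$ distance values cross each local edge from each endpoint, giving the $2k$ bound. The only cosmetic difference is that you spell out the induction invariant, which the paper leaves as a sketch.
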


\begin{proof}
	The Bellman-Ford algorithm is rather simple, we sketch the distributed version in the following. Each node has a list of size $k$ containing the shortest distances to sources $1, \ldots ,k$ it knows so far (initially the list contains $\infty$ for all sources except those who are direct neighbors). Then each round, every node sends each of its neighbors its current list. With the respective lists a node receives from its neighbors, each node can update its own distance list. In round $i$ of the algorithm, every node $u$ knows $d_i(u,s)$ for all sources $s$. We send at most two lists over each edge (one from each endpoint) each round, so $\lambda = 2k$ suffices.
\end{proof}

 \clearpage

\bibliographystyle{abbrv}
\bibliography{references}

\end{document}